\def\markchange{}
\title{Secure Compute-and-Forward \\ in a Bidirectional Relay}
\author{Shashank Vatedka,~\IEEEmembership{Student~Member,~IEEE,}
Navin~Kashyap,~\IEEEmembership{Senior~Member,~IEEE,}
~Andrew~Thangaraj,~\IEEEmembership{Senior~Member,~IEEE}

\thanks{S.~Vatedka and N.~Kashyap (\{shashank,nkashyap\}@ece.iisc.ernet.in) are with the Department of Electrical Communication Engineering, Indian Institute of Science, Bangalore, India. }%
\thanks{A.~Thangaraj (andrew@ee.iitm.ac.in) is with the Department of Electrical Engineering, Indian Institute of Technology, Madras, India.}
\thanks{This work was presented in part at ISIT 2012, Cambridge, Mass., USA, and at ISIT 2013, Istanbul, Turkey.}}
\begin{document}
\maketitle

\begin{abstract}
\markchange{
  We consider the basic bidirectional relaying problem, in which two users in a wireless network wish to exchange messages through an intermediate relay node. In the compute-and-forward strategy, the relay computes a function of the two messages using the naturally-occurring sum of symbols simultaneously transmitted by user nodes in a Gaussian
multiple access (MAC) channel, and the computed function value is forwarded to the user nodes in an ensuing broadcast phase. In this paper, we study the problem under an additional security constraint, which requires that each user's message be kept secure from the relay. We consider two types of security constraints: perfect secrecy, in which the MAC channel output seen by the relay is independent of each user's message; and strong secrecy, which is a form of asymptotic independence. We propose a coding scheme based on nested lattices, the main feature of which is that given a pair of nested lattices that satisfy certain ``goodness'' properties, we can explicitly specify probability distributions for randomization at the encoders to achieve the desired security criteria. In particular, our coding scheme guarantees perfect or strong secrecy even in the absence of channel noise. The noise in the channel only affects reliability of computation at the relay, and for Gaussian noise, we derive achievable rates for reliable and secure computation. We also present an application of our methods to the multi-hop line network in which a source needs to transmit messages to a destination through a series of intermediate relays.
}
\end{abstract}
\section{Introduction}\label{sec:intro}

Consider a network having three nodes, denoted by $\tA$, $\tB$ and $\tR$\markchange{, as shown in Fig.~\ref{fig:bidirection}}. The nodes $\tA$ and $\tB$, henceforth called the user nodes, wish to exchange information with each other. However, they are connected only to $\tR$, and not to each other directly. The node $\mathtt{R}$ acts as a bidirectional relay between $\mathtt{A}$ and $\mathtt{B}$, and facilitates communication \markchange{between them}.  All nodes are assumed to operate in half-duplex mode (they cannot transmit and receive simultaneously), and all links between nodes are wireless (unit channel  gain) additive white Gaussian noise (AWGN) channels. Bidirectional relaying in such settings has been studied extensively in the recent literature~\cite{Baik,Nazer11,Popovski,Wilson,Zhang}.

We use the compute-and-forward framework proposed in~\cite{Nazer11,Wilson} for bidirectional relaying, and we briefly describe a binary version for completeness and clarity. Suppose that $\mathtt{A}$ and $\mathtt{B}$ possess bits $X$ and $Y$, respectively. We will assume that $X$ and $Y$ are generated independently and uniformly at random. The goal in bidirectional relaying is to transmit $X$ to $\mathtt{B}$ and $Y$ to $\mathtt{A}$ through $\mathtt{R}$. To achieve this goal, a compute-and-forward protocol takes place in two phases as shown in Fig.~\ref{fig:bidirec}: 
(1) the (Gaussian) multiple access phase or the MAC phase, 
where the user nodes simultaneously transmit to the relay, and (2) the broadcast phase, where the relay transmits to the user nodes. In the MAC phase, the user nodes $\mathtt{A}$ and $\mathtt{B}$ independently modulate their bits $X$ and $Y$ into \markchange{real-valued} symbols $U$ and $V$, respectively. The relay receives an instance of a random variable $W$, \markchange{that} can be modeled as
\begin{equation}
      W=U+V+Z,
      \label{eq:0}
\end{equation}
where it is assumed that the links $\mathtt{A}\to\mathtt{R}$ and $\mathtt{B}\to\mathtt{R}$ have unit gain, $Z$ denotes additive white Gaussian noise independent of $U$ and $V$, and communication is assumed to be synchronized. Using $W$, the relay computes the XOR of the two message bits, i.e., $X\oplus Y$, and in the broadcast phase, encodes it into a real symbol which is transmitted to the two users over a broadcast channel. Note that $\mathtt{A}$ and $\mathtt{B}$ can recover $Y$ and $X$, respectively, from $X\oplus Y$. 
\begin{figure}[tbh]
     \begin{center}
          \resizebox{4cm}{!}{\input{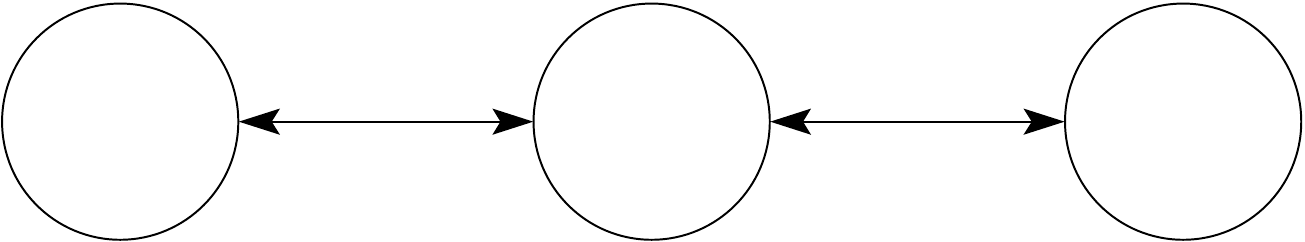_t}}
    \caption{Bidirectional relay.} 
    \end{center}
\label{fig:bidirection}
\end{figure}
\begin{figure}[tbh]
      \begin{center}
	    \resizebox{6cm}{!}{\input{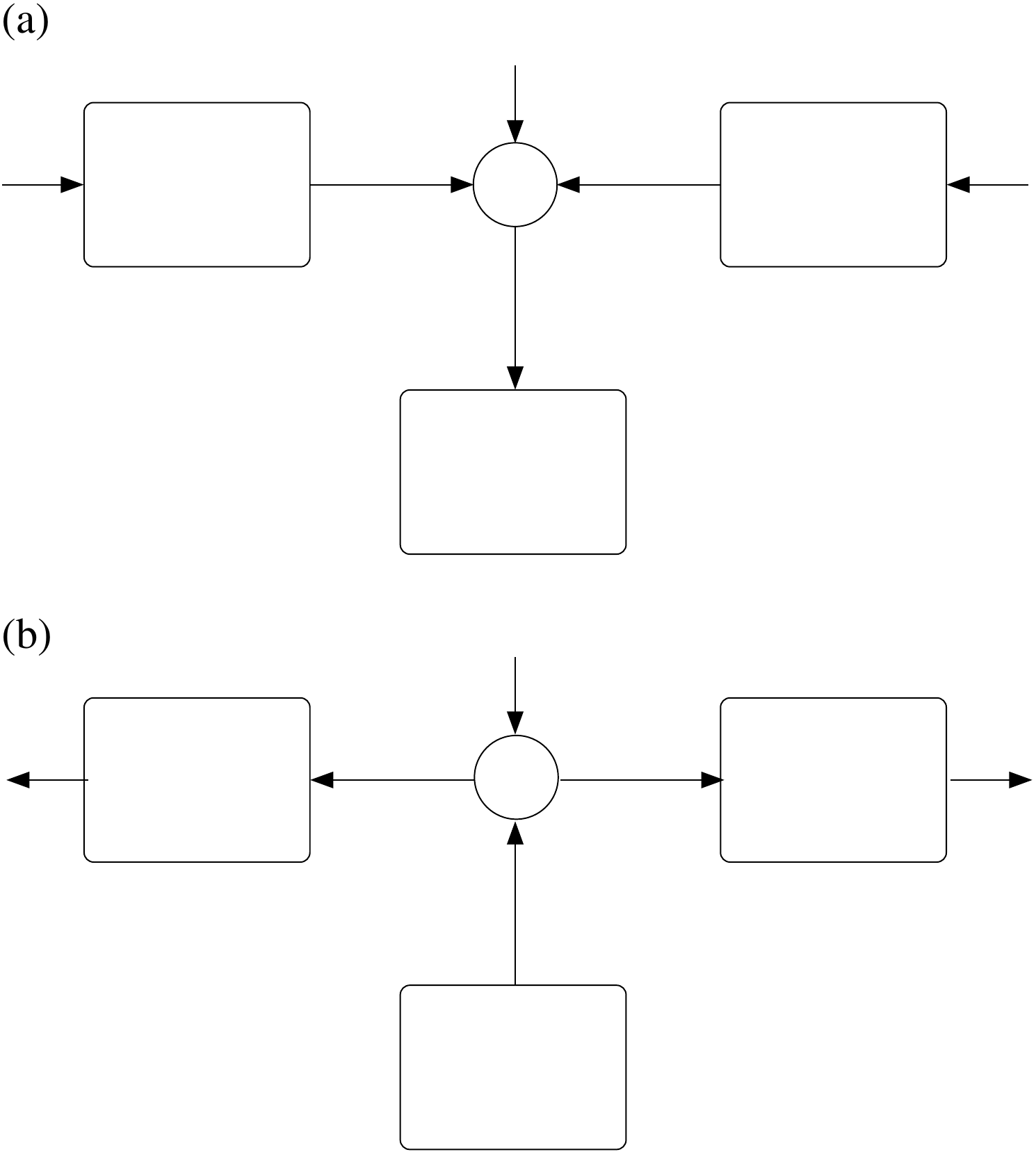_t}} 
      \end{center}
      \caption{Bidirectional relaying: (a) MAC phase, (b) Broadcast phase.}
      \label{fig:bidirec}
\end{figure}

In \markchange{the} compute-and-forward bidirectional relaying problem \markchange{described} above, we study the scenario where an additional secrecy constraint is imposed on the relay $\mathtt{R}$. Specifically, we require that, in the MAC phase, the relay remain ignorant of the individual bits $X$ and $Y$, while still being able to compute the XOR $X\oplus Y$ reliably. 
\markchange{The relay is assumed to be ``honest-but-curious'': it behaves like a passive eavesdropper, but otherwise helps in the exchange of messages. 
}
We study the problem under two secrecy constraints: \markchange{perfect secrecy, which we describe next, and strong secrecy, which we describe further below}. \markchange{\emph{Perfect secrecy} refers to the requirement} that the relay be fully ignorant of the individual bits, i.e., that the random variables $U+V$, $X$, and $Y$  be pairwise independent. More generally, the user nodes encode the messages $X$ and $Y$ into $d$-dimensional real vectors $\U$ and $\V$ respectively, and we require $\U+\V$ to be statistically independent of each individual message.   The problem of secure bidirectional relaying in \markchange{the} presence of an untrusted relay under a perfect secrecy constraint has not been studied prior to this work, and this is a major contribution of this paper. 

We \markchange{propose} a coding scheme \markchange{for secure bidirectional relaying} that uses a pair of nested lattices $(\Lfd,\Lcd)$, with $\Lcd\subset \Lfd$. 
In our scheme, the messages are mapped to the cosets of the \emph{coarse lattice} $\Lcd$ in the \emph{fine lattice} $\Lfd$. Given a message (say, the $j$th coset, $\L_j$) at the user node, the output of the encoder is a random point chosen from that coset according to a distribution $p_j$. This distribution is obtained by sampling and normalizing over $\L_j$, a well-chosen density function $f$ on $\R^d$. We will show that if the characteristic function of $f$ is  supported within the fundamental Voronoi region of the Fourier dual of $\Lcd$, then it is possible to achieve perfect secrecy. We then study the average transmit power and achievable rates for reliable and secure communication. We will show that a transmission rate of \markchange{$\left[\frac{1}{2}\log_2\frac{\cP}{\nsvar}-\log_2 2e\right]^{+}$} is achievable with perfect secrecy, \markchange{where $[x]^+$ denotes $\max\{ x,0 \}$}. Our coding scheme  for security is explicit, in that given \emph{any} pair of nested lattices, we precisely specify the distributions $p_j$ that must be used to obtain independence between $\U+\V$ and the individual messages.

We later relax the secrecy constraint, and only demand that the mutual information between $\U+\V$ and the individual messages be arbitrarily small for large block lengths, \markchange{a requirement that is} referred to as \emph{strong secrecy}~\cite{Maurer00}. We again use a nested-lattice coding scheme, but now the distributions $p_j$ are obtained by sampling and normalizing a Gaussian function, instead of a density having a compactly supported characteristic function. The idea of using probability mass functions (pmfs) obtained by sampling Gaussians was used~\cite{Ling13} in the context of the Gaussian wiretap channel, and we will make use of the techniques developed there. Using this scheme, we show that a rate of \markchange{$\left[\frac{1}{2}\log_2\left(\frac{1}{2}+\frac{\cP}{\nsvar}\right)-\frac{1}{2}\log_2 2e\right]^+$} is achievable. 

\markchange{We show that our schemes can achieve secrecy even in the absence of noise, and that the addition of noise cannot leak any extra information to the relay. This allows us to develop the solution in two parts: first, we give coding schemes based on nested lattices that achieve secrecy over a noiseless channel. Then, we require the lattices to satisfy certain additional ``goodness'' properties in order to have reliable decoding in the presence of noise. The signal (codeword) transmitted by each user acts as a jamming signal for the other user's message, and this helps achieve secrecy. In our scheme, the channel noise is not used to increase confidentiality, unlike the Gaussian wiretap channel~\cite{Ling13} where an increase in the noise variance on the eavesdropper's link can be used to achieve higher transmission rates. It may be possible to harness the additive noise in the MAC phase to obtain higher achievable rates, but we do not pursue this in the present work. However, our approach does offer an advantage: since our scheme guarantees secrecy in the absence of noise, the security properties continue to hold even when channel noise is present, and this is true \emph{irrespective} of the noise distribution. Indeed, our scheme provides secrecy even if the channel noise follows an unknown probability distribution, a property that is in general not satisfied by coding schemes for wiretap channels. We only require the noise to be additive and independent of the transmitted codewords.}

It is worth emphasizing the basic idea behind the construction of  encoders in our coding schemes. Given a pair of nested lattices, the user nodes send points from the fine lattice in the nested lattice pair according to a pmf obtained by sampling a well-chosen density function at the fine lattice points. The choice of the density function determines the level of security that is achievable.

In prior work, the problem of secure bidirectional relaying in the presence of an untrusted relay was studied by He and Yener in~\cite{HeYener}, who showed that the mutual information rate, defined to be $\frac{1}{d}\mathcal{I}(X;\U+\V)=\frac{1}{d}\mathcal{I}(Y;\U+\V)$ goes to zero for large blocklengths $d$. They later studied the problem under a strong secrecy constraint  in~\cite{HeYenerstrong}, and gave a scheme based on nested lattice codes and universal hash functions. \markchange{Using probabilistic arguments, they showed the existence of linear hash functions for randomization at the encoders that achieve strong secrecy. In both scenarios, they showed that a rate of $\left[ \frac{1}{2}\log_2\left(\frac{1}{2}+\frac{\cP}{\nsvar}\right)-1 \right]^{+}$ is achievable. The achievable rates guaranteed by our strongly secure scheme is slightly lower than that obtained in~\cite{HeYenerstrong}. However,  our scheme avoids the use of hash functions, and given a pair of nested lattices that satisfy certain ``goodness'' properties\footnote{Unfortunately, there are no \markchange{known} explicit constructions of lattices that satisfy these properties, but only existence results based on probabilistic arguments.}, we give an explicit probability distribution for randomization at the encoders that can be used to obtain strong secrecy. }

\markchange{The idea of using nested lattice codes for secure communication is not new. They have been proposed for secure communication in other scenarios, particularly the Gaussian wiretap channel (see e.g.,~\cite{Belfiore10,Ling13,Belfiore}). They have also been used in interference networks~\cite{Agrawal09}, and for secret key generation using correlated Gaussian sources~\cite{Nitinawarat12}.
}

\markchange{Recall that the compute-and-forward protocol has two phases: a MAC phase and a broadcast phase.
We will restrict our study exclusively to the MAC phase, since there is no security requirement in the broadcast phase and the relay can use a capacity-approaching code to broadcast $X\oplus Y$ to the users.}

\subsection*{Organization of the paper}

We establish some basic notation and recall some definitions related to lattices in Section~\ref{sec:defns}.
We describe the secure bidirectional relaying problem in Section~\ref{sec:probdescp}, and then proceed to design coding schemes under the perfect secrecy constraint in Section~\ref{sec:perfectsecr}. The main result under the perfect secrecy constraint is given in Theorem~\ref{thm:perfect_main}. We give a randomized encoding scheme for any arbitrary nested lattice code that achieves perfect secrecy in the absence of noise in Section~\ref{sec:noiseless}, then study the effect of additive noise and find achievable transmission rates in Section~\ref{sec:gnoise}.
Thereafter, we study the same problem under a strong secrecy constraint, design coding schemes, and evaluate the performance in Section~\ref{sec:strongsecr}, with the main result summarized in Theorem~\ref{thm:strong_main}. 
\markchange{In Section~\ref{sec:multihop}, we show that these schemes can be extended to  the multi-hop line network~\cite{HeYener}  and find achievable transmission rates under the two secrecy constraints. We make some concluding remarks in Section~\ref{sec:conc}.} Most of the technical proofs are given in appendices.

\section{Definitions and Notation}\label{sec:defns}
We first describe the notation we will use throughout the paper. We denote the set of real numbers by $\R$, and integers by $\Z$. We use the notation $\R^+$ for the set of nonnegative real numbers. The number of elements in a finite set $S$ is denoted by $|S|$. 
\markchange{If $x$ is a real number, then $[x]^{+}$ is defined as $\max\{x,0\}$.} 
Random vectors are denoted in boldface upper case, e.g., $\U$, and their instances in boldface lower case, as in $ \u $. The components of the vectors are denoted in normal font, e.g., $ \x = [x_{1}\:x_{2}]^{T} $. Matrices are represented in sans-serif, as in $ \mathsf{H} $. The Euclidean ($ \ell^{2} $) norm of a column vector $ \h $ is denoted by $ \Vert \h \Vert $. The identity matrix of size $ M\times M $ is denoted by $ \mathsf{I}_{M} $. 

The probability of an event $A$ is denoted by $ \Pr[A] $. 
If $ X $ is a random variable, then $ \mathcal{H}(X) $ denotes the entropy of $ X $. \markchange{The symbol $\mathbb{E}[\cdot]$ denotes expectation.} 
\markchange{The characteristic function of a random variable $X$ is the function $\psi(t)=\mathbb{E}[e^{iXt}]$, for $t\in\R$. }
For random variables $ X,Y $, the notation $ X\independent Y $ means that  $ X $ and $ Y $ are 
independent. The mutual information between $X$ and $Y$ is denoted by $\mathcal{I}(X;Y)$. 

Let $f(n)$ and $g(n)$ be \markchange{ sequences} of positive real numbers. We say that $g(n)=o(f(n))$ if $g(n)/f(n)\to 0$ as $n\to\infty$. Also, $g(n)=o_n(1)$ if $g(n)\to 0$ as $n\to\infty$. 
Furthermore, $g(n)=\Omega(f(n))$ if there exists a constant $K>0$ such that $g(n)>Kf(n)$ for all sufficiently large $n$, and $g(n)=\mathcal{O}(f(n))$ if there exists a constant $K>0$ such that $g(n)<Kf(n)$ for all sufficiently large $n$. 
\subsection{Lattices in $\R^d$}\label{sec:latticedefns}

 We briefly recall some definitions of lattices and their properties. For a more detailed treatment, see e.g.,~\cite{Barvinok,Conway}.

Let $k,d$ be positive integers with $k\leq d$. Suppose $\u_1,\u_2,\ldots,\u_k$ are linearly independent column vectors in $\R^d$. Then the set of all integer-linear combinations of the $\u_i$'s, $\L=\{ \sum_{i=1}^ka_i\u_i : a_i\in \Z, 1\leq i\leq k\}$, is called a $k$-dimensional \emph{lattice} in $\R^d$. It is easy to verify that $\L$ forms an Abelian group under componentwise addition. The collection of vectors $\{\u_1,\u_2,\ldots,\u_k\}$ is called a \emph{basis} for the lattice $\L$; \markchange{clearly, the basis of a lattice is not unique, e.g., $\{-\u_1,-\u_2,\ldots,-\u_k  \}$ is also a basis. }

The $k\times d$ matrix \markchange{$\mathsf{A}:= [\u_1\;\u_2\;\cdots\;\u_k]^T$} is called a \emph{generator matrix} of $\L$, and we say that the vectors $\u_1,\u_2,\ldots,\u_k$ generate $\L$. We write \markchange{$\L=\mathsf{A}^T\Z^k:= \{ \mathsf{A}^T\x : \x\in \Z^k \}$}.  If $\L$ is full-rank (i.e., $\L$ is a $d$-dimensional lattice in $\R^d$), then the \emph{determinant} of $\L$, denoted by $\text{det}\L$ is defined to be $|\text{det}(\mathsf{A})|$. It is a standard fact that $\text{det}\L$ does not depend on the generator matrix. 
Unless mentioned otherwise, we will henceforth consider full-rank lattices in $\R^d$. 

If $\L$ and $\Lc$ are two lattices in $\R^d$ such that $\Lc\subset\L$, then we say that $\Lc$ is a \emph{sublattice} of $\L$, or $\Lc$ is \emph{nested} within $\L$.  We call $\L_0$ the \emph{coarse lattice}, and $\L$, the \emph{fine lattice}. The number of cosets of $\L_0$ in $\L$ is called the \emph{index} of $\L_0$ in $\L$, denoted by $|\L/\L_0|$. It is a standard fact that $|\L/\L_0|=\text{det}\L_0/\text{det}\L$~\cite[Theorem 5.2]{Barvinok}.

If $\mathsf{A}$ is a generator matrix of a lattice $\L$, then $\L^{*}:=\{ (\mathsf{A}^{-1})^T\z:\z\in \Z^d\}$ is called the \emph{dual lattice} of $\L$. The dual lattice $\L^{*}$ is also equal to $\{ \x: \sum_{i=1}^{d}x_iy_i\in\Z \text{ for every } \y\in\L\}$\cite{Barvinok}. The \emph{Fourier dual} of $\L$, denoted $\hat{\L}$, is defined as $2\pi \L^{*}$.

For any $\x\in\R^d$, we define the nearest neighbour quantizer $Q_{\L}(\x):= \text{arg min}_{\mathbf{\lambda}\in\L}\Vert \x-\mathbf{\lambda} \Vert$ to be the function which maps $\x$ to the closest point in $\Lf$. The \emph{fundamental Voronoi region} of $\L$ is defined as $\cV(\L):=\{ \y:Q_{\L}(\y)=\mathbf{0} \}$. The volume of the fundamental Voronoi region, $\text{vol}(\cV(\L))$ is equal to $\text{det}\L$~\cite{Barvinok, Conway}.

For any $\x\in \R^d$, we define the modulo-$\L$ operation as $[\x]\bmod\L:= \x-Q_{\L}(\x)$. In other words, $[\x]\bmod\L$ gives the quantization error of the nearest neighbour quantizer $Q_{\L}(\cdot)$. Figure~\ref{fig:modL0} illustrates the $Q_{\L}(\cdot)$ and the modulo-$\L$ operations.

\begin{figure}[t]
 \centerline{\resizebox{8cm}{!}{\input{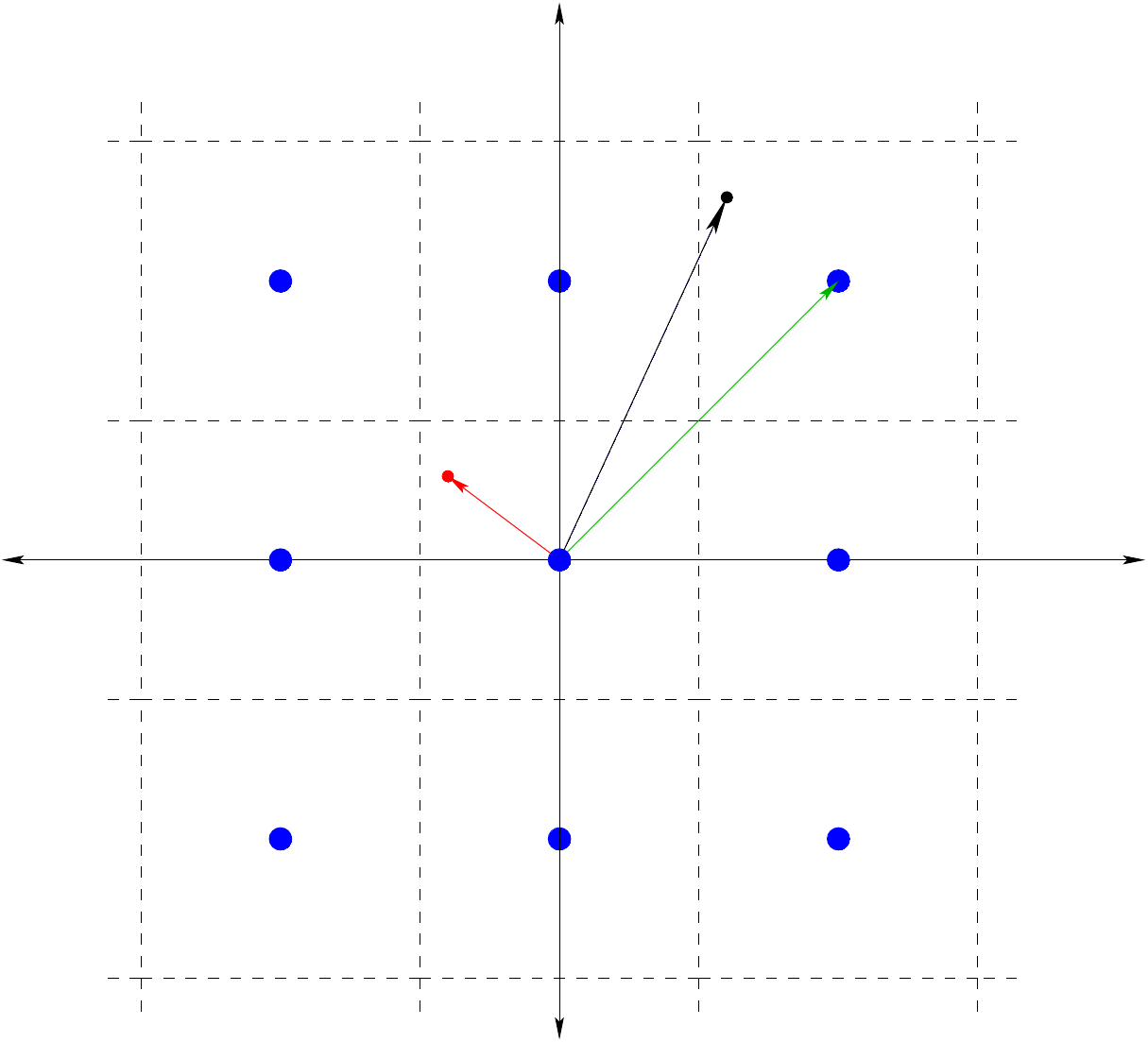_t}}}
 \caption{Illustrating the $Q_{\L}(.)$ and the $[.]\bmod \L$ operation for the $\Z^2$ lattice.}
 \label{fig:modL0}
\end{figure}

The \emph{covering radius} of $\L$, denoted by $\rcov(\L)$, is defined as the radius of the smallest closed ball in $\R^d$ centered at $\mathbf{0}$ which contains $\cV(\L)$. 
The \emph{effective radius}, $\reff(\L)$, is defined as the radius of a ball in $\R^d$ having the same volume as that of $\cV(\L)$. The \emph{packing radius}, $\rpack(\L)$, is the radius of the largest open ball centered at $\mathbf{0}$ which is contained in $\cV(\L)$. Clearly, $\rcov(\L)\geq\reff(\L)\geq\rpack(\L)$. These parameters are illustrated for the hexagonal lattice in Fig.~\ref{fig:rcov_rpack}.

\begin{figure}
      \begin{center}
	    \resizebox{5cm}{!}{\input{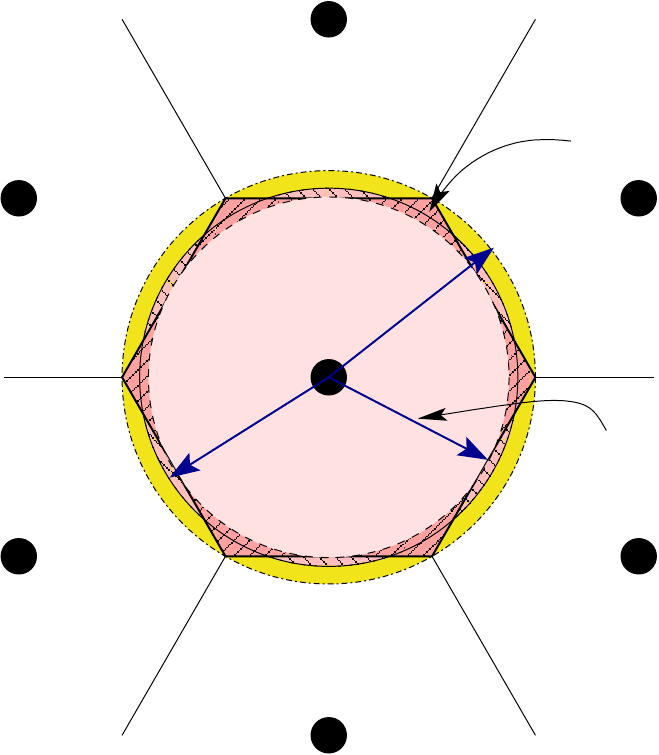_t}}
      \end{center}
\caption{Illustrating the covering, packing and effective radii of the hexagonal lattice.}
\label{fig:rcov_rpack}
\end{figure}
The \emph{normalized second moment per dimension} of $\L$ is defined as
\begin{equation}
 \mathcal{G}_{\L}=\frac{1}{d\left(\text{det}\L \right)^{1+2/d}}\int_{\cV(\L)}\Vert \y \Vert^2\: d\y.
\label{eq:Glambda}
\end{equation}


\section{Description of the Problem} \label{sec:probdescp}

The general set-up is as follows: two user nodes, denoted by $\mathtt{A}$ and $\mathtt{B}$, possess messages taking values independently and uniformly in a finite set. For the purposes of computation at the relay, the messages are mapped into random variables $X$ and $Y$ taking values in a finite Abelian group $\Gpd$, where the choice of $\Gpd$ is left to the system designer. The mapping is such that the random variables $X$ and $Y$ remain uniformly distributed over $\Gpd$, and we will see later that this distribution helps in achieving secrecy. The addition operation in the group $\Gpd$ is denoted $\oplus$.
\markchange{The encoder at node $\tA$ maps the given message $X$ into a random $d$-dimensional real vector $\U$. In a similar fashion, the encoder at $\tB$ maps the message $Y$ to a random vector $\V$. }
The user nodes transmit their respective vectors to the relay simultaneously, and at the end of the MAC phase, the relay obtains 
\begin{equation}
 \W=\U+\V+\bZ,
\label{eq:mac_ch}
\end{equation}
where $\bZ$ is a Gaussian random vector with zero mean and covariance matrix $\sigma^{2}\mathsf{I}_{d}$, 
where $+$ denotes componentwise real addition. 
The coding scheme at each user node must ensure that the relay can recover $X\oplus Y$ reliably from $\W$, and one of the following:
\markchange{
\begin{itemize}
     \item \emph{Perfect secrecy}: The mutual information between $\W$ and each individual message is exactly zero\footnote{Equivalently, we want $\W\independent X$ and $\W\independent Y$.}, i.e., $\mathcal{I}(\W;X)=\mathcal{I}(\W;Y)=0$. 
     \item \emph{Strong secrecy}: $\mathcal{I}(\W;X)$ and $\mathcal{I}(\W;Y)$ can be made arbitrarily small for all sufficiently large $d$.
\end{itemize}}
We in fact impose a slightly stronger security criterion than the one mentioned above. Even in the absence of noise, the mutual information between $\W=\U+\V$ and each individual message must be either zero (perfect secrecy) or can be made arbitrarily small for all sufficiently large $d$ (strong secrecy). 
Since the additive noise is independent of everything else, $X\to \U+\V\to \U+\V+\bZ$ forms a Markov chain, and using the data processing inequality, $\mathcal{I}(X;\U+\V+\bZ)\leq \mathcal{I}(X;\U+\V)$. Likewise, $\mathcal{I}(Y;\U+\V+\bZ)\leq \mathcal{I}(Y;\U+\V)$. 
Therefore, any scheme that achieves perfect (strong) secrecy in the absence of noise will also achieve perfect (strong) secrecy in a noisy channel.
 
The messages must also be protected from corruption by the additive noise in the multiple access phase. 
 Since the messages are uniformly distributed over $\Gpd$, $\frac{1}{d}\log_2|\Gpd|$ gives the average number of bits of information sent to the relay by each user node in one channel use in the MAC phase. Our aim will be to ensure secure computation of $X\oplus Y$ at the highest possible rate (which we define to be $\frac{1}{d}\log_2|\Gpd|$) for a given power constraint at the user nodes. To formalize these notions, we have the following definition: 

\begin{definition}
For a positive integer $d$, a $(d,M^{(d)})$ code for the MAC phase of the bidirectional relay channel with user nodes $\mathtt{A}$, $\mathtt{B}$ and relay $\mathtt{R}$ consists of the following:
\begin{enumerate}
\item \textbf{Messages:} Nodes $\mathtt{A}$ and $\mathtt{B}$ possess messages $X$ and $Y$, respectively, drawn independently and uniformly from a finite Abelian group $\Gp^{(d)}$ with $M^{(d)}=|\Gp^{(d)}|$ elements. 
\item \textbf{Codebook:} The codebook, denoted by $\cC$, is a discrete subset of $\R^d$, not necessarily finite. The elements of $\cC$ are called codewords. The codebook consists of all those vectors that are allowed to be transmitted by the user nodes to the relay.
\markchange{\item \textbf{Encoders:} The encoder at each node is a randomized mapping from $\Gp^{(d)}$ to $\R^{d}$, specified by the distributions $ p_{\U|X}(\u|x)=\Pr[\U=\mathbf{u}|X=x]$ and $ p_{\V|Y}(\v|y)=\Pr[\V=\mathbf{v}|Y=y]$ for all $\mathbf{u},\v\in \cC$ and $x,y\in \Gp^{(d)}$.  At node $\mathtt{A}$, given a message $x\in \Gp^{(d)}$ as input, the encoder outputs a codeword $\u\in \cC$ at random, according to $p_{\U|X}(\u|x)$. Similarly, at node $\mathtt{B}$, with $y$ as input, the encoder outputs $\v\in\cC$ according to $p_{\V|Y}(\v|y)$. 
The messages $x$ and $y$ are encoded independently.
}
 The \emph{rate} of the code is defined to be
  \begin{equation}
    \label{eq:02}
    \rte^{(d)}=\dfrac{\log_2M^{(d)}}{d}.
  \end{equation}
The code has an \emph{average transmit power per dimension} defined as
\begin{equation}
  \label{eq:3}
  \pow^{(d)} = \frac{1}{d} \mathbb{E}\Vert \U \Vert^{2}=\frac{1}{d} \mathbb{E}\Vert \V \Vert^{2}.
\end{equation}
\item \textbf{Decoder:} The relay $\mathtt{R}$ receives a vector $\W\in\R^{(d)}$ as given in (\ref{eq:mac_ch}). The decoder, $\cD^{(d)}: \R^{d} \to \Gp^{(d)}$  maps the received vector to an element of the set of messages. The average probability of error of the decoder is defined as
\[
	\Pe := \mathbb{E} \big[ \Pr [ \cD^{(d)}(\W) \neq X \oplus Y ] \big] ,
	\]
where $\mathbb{E}$  denotes expectation over the messages, $X,Y$, and over the encoders ($\U,\V$ given $X,Y$).
\end{enumerate}
\label{def:code}
\end{definition}

\section{Perfect Secrecy}\label{sec:perfectsecr}

We first study the case where perfect statistical independence between $\U+\V$ and the individual messages is required,
and the relay must be able to reliably compute $X\oplus Y$ (where $\oplus$ denotes addition within $\Gpd$) from the received vector. To summarize, we have the following requirements for secure compute-and-forward: 
\begin{enumerate}
 \item[(S1)] $(\U,X)\independent (\V,Y)$.
 \item[(S2)] $(\U+\V)\independent X$ and $(\U+\V)\independent Y$.
 \item[(S3)] $\U+\V$ almost surely determines $X\oplus Y$.
\end{enumerate}
If conditions (S1)--(S3) are satisfied, the relay has no means of finding the individual messages. Property (S3) ensures that the relay can decode $X\oplus Y$, which can then be encoded/modulated for further transmission over the broadcast channel. On reception of the broadcast message, since user $\mathtt{A}$ (resp. $\mathtt{B}$) knows $X$ (resp. $Y$), it can recover $Y$ (resp. $X$).

If the relay only had access to $X\oplus Y$ instead of $\U+\V$, the problem of secure communication would have been trivial due to the uniformity and independence of $X$ and $Y$. However, the relay receives the real sum of $\U$ and $\V$, which makes the problem harder. For example, suppose that $d=1$, and $\Gp^{(1)}=\Z_2$, the group of integers modulo $2$. Consider the coding scheme $\U=X$, and $\V=Y$. Then, in the absence of noise, whenever $\U+\V=0$ or $\U+\V=2$, the relay can determine both $X$ and $Y$. 

The performance of a coding scheme is generally evaluated in terms of the average transmit power,
and the transmission rate.
To make these notions formal, we define achievable power-rate pairs as follows.
\begin{definition} 
A power-rate pair $(\cP,\cR)$ is \emph{achievable with perfect secrecy} if, for every $\delta>0$, there exists a sequence of $(d,M^{(d)})$ codes such that
\begin{itemize}
 \item conditions (S1)--(S3) are satisfied for all $d$,
\end{itemize}
and for all sufficiently large $d$, 
\begin{itemize}
      \item the transmission rate, $\rte^{(d)}$, is greater than $\cR-\delta$;
      \item the average transmit power per dimension $\pow^{(d)}$, is less than $\cP+\delta$; and
      \item the average probability of decoding error, $\Pe$, is less than $\delta$.
\end{itemize}
\label{def:pr_pairsperfect}
\end{definition}

The objective of the next couple of sections will be to prove the following result.
\begin{theorem}
 A power-rate pair of
\[
      \left( \cP, \left[\frac{1}{2}\log_2\frac{\cP}{\nsvar}-\log_2(2e)\right]^{+} \right)
\]
is achievable with perfect secrecy in the MAC phase of the bidirectional relay.
\label{thm:perfect_main}
\end{theorem}

\section{Perfect secrecy: The Noiseless Setting}\label{sec:noiseless}

To get a clear picture as to how secure communication can be achieved, we first describe the binary case. The messages $X$ and $Y$ are chosen independently and uniformly at random from $\{0,1\}$, or equivalently, the set of integers  modulo-2 ($\Gp=\Z_{2}$). They are modulated to $U$ and $V$ respectively, which take values in $\R$. 
Studying the one-dimensional case will give us the intuition needed to tackle the general case, and we will see that the techniques developed here extend quite naturally to the $d$-dimensional setting.

We  will show that there exist distributions on $U$ and $V$ that permit secure computation defined by properties (S1)--(S3). This is somewhat surprising since we cannot have non-degenerate real-valued random variables $U,V$ that satisfy $(U+V)\independent U$ and $(U+V)\independent V$, as shown in the following proposition:

\begin{proposition}
Let $U$ and $V$ be independent real-valued random variables, 
and let $+$ denote addition over $\R$. Then, we have 
$(U + V) \independent U$ and $(U + V) \independent V$ iff
$U$ and $V$ are constant a.s. (i.e., there exist $a,b \in \R$ such
that $\Pr[U=a] = \Pr[V=b] = 1$). 
\label{prop:int_sum}
\end{proposition}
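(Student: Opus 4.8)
The plan is to work with characteristic functions, since the independence conditions translate cleanly into multiplicative identities there. Let $\phi_U(t) = \E[e^{itU}]$ and $\phi_V(t) = \E[e^{itV}]$ be the characteristic functions of $U$ and $V$. By independence, the characteristic function of $U+V$ is $\phi_U(t)\phi_V(t)$. The condition $U+V \independent U$ means that the joint characteristic function factors: $\E[e^{i(sU + t(U+V))}] = \phi_U(s)\,\phi_U(t)\phi_V(t)$ for all $s,t \in \R$. On the other hand, computing the left side directly using independence of $U$ and $V$ gives $\E[e^{i(s+t)U}]\,\E[e^{itV}] = \phi_U(s+t)\,\phi_V(t)$. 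So the hypothesis $U+V\independent U$ is equivalent to
\[
\phi_U(s+t)\,\phi_V(t) = \phi_U(s)\,\phi_U(t)\,\phi_V(t) \qquad \text{for all } s,t \in \R.
\]

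First I would exploit continuity of characteristic functions at $0$: since $\phi_V(0)=1$, there is an interval $(-\delta,\delta)$ on which $\phi_V(t)\neq 0$. On that interval we may cancel $\phi_V(t)$ to get $\phi_U(s+t) = \phi_U(s)\phi_U(t)$ for all $s\in\R$ and all $t\in(-\delta,\delta)$. This is a Cauchy-type multiplicative functional equation. Setting $s=0$ gives nothing new, but iterating — $\phi_U(s+t_1+t_2) = \phi_U(s+t_1)\phi_U(t_2) = \phi_U(s)\phi_U(t_1)\phi_U(t_2)$ — shows the relation $\phi_U(s+t)=\phi_U(s)\phi_U(t)$ actually extends to all $s,t\in\R$ by covering $\R$ with translates of $(-\delta,\delta)$. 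A continuous solution of $\phi_U(s+t)=\phi_U(s)\phi_U(t)$ with $\phi_U(0)=1$ and $|\phi_U|\le 1$ must be of the form $\phi_U(t) = e^{iat}$ for some real constant $a$ (the modulus $|\phi_U(t)| = |\phi_U(t/n)|^n$ forces $|\phi_U|\equiv 1$, ruling out the exponential-decay solutions, and then $\phi_U(t)=e^{iat}$ is the only continuous character of $\R$). But $\phi_U(t)=e^{iat}$ is exactly the characteristic function of the constant $a$, so $U=a$ almost surely. By the symmetric argument using $U+V\independent V$ (or simply by relabeling), $V=b$ a.s. for some $b$. The converse direction is immediate: if $U$ and $V$ are a.s. constant, then $U+V$ is a.s. constant, hence independent of everything.

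The main obstacle, and the only place requiring care, is the cancellation step and its promotion to all of $\R$: one must be sure that the set where $\phi_V$ vanishes does not obstruct the argument. The clean way around this is exactly the localization above — work first on the neighborhood of $0$ where $\phi_V$ is nonzero, derive the multiplicative identity for $\phi_U$ there, and then use the semigroup/iteration trick to extend it globally without ever dividing by $\phi_V$ again. After that, the classical fact that continuous characters of $(\R,+)$ bounded by $1$ are exactly $t\mapsto e^{iat}$ closes the argument; alternatively one can invoke the uniqueness theorem for characteristic functions after identifying $e^{iat}$ as the characteristic function of a Dirac mass. I would present the forward direction via this functional-equation route and dispose of the converse in one line.
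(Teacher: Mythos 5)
Your proof is correct, and it reaches the conclusion by a genuinely different route from the paper's. The paper never writes down the bivariate characteristic function: it uses only the marginal consequences of pairwise independence, namely $\varphi_{U+V}=\varphi_U\varphi_V$ together with $\varphi_U=\varphi_{U+V}\overline{\varphi_V}$ (the latter from $U=(U+V)-V$ and $(U+V)\independent V$), combines them into $\varphi_U=\varphi_U|\varphi_V|^2$, localizes near $0$ to get $|\varphi_V|\equiv 1$ on an interval, and then invokes a ready-made lemma of Feller (a characteristic function of modulus $1$ on an interval is $e^{ibt}$ everywhere). You instead use the full factorization criterion for independence of the pair $(U,U+V)$, which yields the identity $\phi_U(s+t)\phi_V(t)=\phi_U(s)\phi_U(t)\phi_V(t)$, and after the same localization you are left with a Cauchy functional equation for $\phi_U$ that you extend globally and solve by classifying the continuous bounded characters of $(\R,+)$. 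Both proofs hinge on the same two ingredients (non-vanishing of a characteristic function near $0$, and a classical rigidity statement), but your version is self-contained modulo the character classification, whereas the paper outsources the rigidity to Feller's lemma; also note that your argument derives ``$U$ constant'' from $(U+V)\independent U$ while the paper derives ``$V$ constant'' from $(U+V)\independent V$ — both are fine by symmetry. One small repair: the relation $|\phi_U(t)|=|\phi_U(t/n)|^n$ does \emph{not} by itself force $|\phi_U|\equiv 1$ (it is satisfied, e.g., by $e^{-\lambda|t|}$); the correct one-line justification is that the functional equation on all of $\R$ gives $\phi_U(t)\phi_U(-t)=\phi_U(0)=1$, and since $|\phi_U(-t)|=|\phi_U(t)|\le 1$ this forces $|\phi_U(t)|=1$ for every $t$, after which $\phi_U(t)=e^{iat}$ follows from the classification of continuous unit-modulus characters.
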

\begin{proof}
The ``if'' part is trivial, so let us prove the ``only if''
part. Let $W = U+V$, so that by assumption, $U$, $V$ and $W$ are
pairwise independent. Let $\varphi_U$, $\varphi_V$ and
$\varphi_W$ denote  the characteristic functions of $U$, $V$ and $W$,
respectively. In particular, $\varphi_W = \varphi_U \varphi_V$.
From $U = W-V$, we also have that $\varphi_U = \varphi_W
\overline{\varphi_V}$, where $\overline{\varphi_V}$ denotes the complex
conjugate of $\varphi_V$. Putting the two equalities together, we
obtain $\varphi_U = \varphi_U |\varphi_V|^2$. To be precise,
$\varphi_U(t) = \varphi_U(t) |\varphi_V(t)|^2$ for all $t \in \R$.

Now, characteristic functions are continuous and take the value 1 
at $t=0$. Hence, $\varphi_U$ is non-zero within the interval
$[-\delta,\delta]$ for some $\delta > 0$.
Thus, $|\varphi_V(t)| = 1$ for all $t \in [-\delta,\delta]$. By 
a basic property of characteristic functions (see Lemma~4 of
Section~XV.1 in \cite{Feller}), this implies that there exists $b \in \R$
such that $\varphi_V(t) = e^{ibt}$ for all $t \in \R$, thus proving
that $V = b$ with probability 1.

A similar argument using $V = W-U$ shows that $U$ is also
constant with probability 1.
\end{proof}

\subsection{Secure Computation of XOR at the Relay}
In this section, $X$ and $Y$ are independent and identically distributed (iid) uniform binary random variables (rvs), and $X \oplus Y$ denotes their modulo-2 sum (XOR). We describe a construction of integer-valued rvs $U$ and $V$ satisfying the properties (S1)--(S3). 

\subsubsection{Conditions on PMFs and Characteristic Functions}
We first derive conditions under which integer-valued rvs $U$ and $V$ can satisfy the desired properties.
We introduce some notation: for $k \in \Z$, let 
$p_U(k) = \Pr[U = k]$, $p_V(k) = \Pr[V = k]$, and for $a \in \{0,1\}$, 
let $p_{U|a}(k) = \Pr[U = k \mid X = a]$, 
$p_{V|a}(k) = \Pr[V = k \mid Y = a]$. Thus, 
$p_U = (1/2)(p_{U|0} + p_{U|1})$ and $p_V = (1/2)(p_{V|0} + p_{V|1})$.

Property~(S1) is equivalent to requiring that the joint probability mass function (pmf) of $(U,V,X,Y)$ be expressible as
\begin{equation}
p_{UVXY}(k,l,a,b) = (1/2)(1/2)  p_{U|a}(k) p_{V|b}(l) 
\label{eq:Z1}
\end{equation}
for $k,l \in \Z$ and $a,b \in \{0,1\}$.
  Without the requirement that $U+V \independent X$ and $U+V \independent Y$,
 it is trivial to define $U$ and $V$ such that (S3) is satisfied:
 for example, take $U = X$ and $V = Y$. 
Property~(S3) is satisfied by any $U,V$ such that
\begin{equation}
\begin{array}{c}
p_{U|0}(k) = p_{V|0}(k) = 0 \ \text{ for all odd } k \in \Z, \\
p_{U|1}(k) = p_{V|1}(k) = 0 \ \text{ for all even } k \in \Z. 
\end{array}
\label{eq:Z3}
\end{equation}

Finally, we turn our attention to (S2). We want $(U+V)\independent X$ and $(U+V)\independent Y$. Let us define, for $k \in \Z$, $p_{U+V}(k) = \Pr[U+V=k]$, and
for $a \in \{0,1\}$, $p_{U+V \mid X=a}(k) = \Pr[U+V=k \mid X=a]$ and 
$p_{U+V \mid Y = a}(k) = \Pr[U+V=k \mid Y=a]$.  Assuming $(U,X)
\independent (V,Y)$, we have
$p_{U+V} = p_U \ast p_V$,  $p_{U+V \mid X = a} = p_{U|a} \ast p_V$,
and $p_{U+V \mid Y = a} = p_U \ast p_{V|a}$,
where $\ast$ denotes the convolution operation. Thus, when $(U,X)
\independent (V,Y)$,  (S2) holds iff 
\begin{equation}
p_U \ast p_V = p_{U|a} \ast p_V = p_U \ast p_{V|a} \ 
\text{ for } a \in \{0,1\}.
\label{eq:Z2a}
\end{equation}

It helps to view this in the Fourier domain. Let $\varphi_U$, $\varphi_V$, $\varphi_{U|a}$ etc.\ denote 
the respective characteristic functions of the pmfs $p_U$, $p_V$, $p_{U|a}$ etc.\ --- for example, 
$\varphi_{U|a}(t) = \sum_{k\in\Z} p_{U|a}(k) e^{i k t}$. 
Then, (\ref{eq:Z2a}) is equivalent to 
\begin{equation}
\varphi_U \varphi_V = \varphi_{U|a}\varphi_V 
= \varphi_U \varphi_{V|a}\ \text{ for } a \in \{0,1\}.
\label{eq:Z2}
\end{equation}
Note that $\varphi_U = (1/2)(\varphi_{U|0} + \varphi_{U|1})$
and $\varphi_V = (1/2)(\varphi_{V|0} + \varphi_{V|1})$. Hence,
(\ref{eq:Z2}) should be viewed as a requirement on the conditional
pmfs $p_{U|a}$ and $p_{V|a}$, $a \in \{0,1\}$.

In summary, we have the following lemma.

\begin{lemma}
Suppose that the conditional pmfs $p_{U|a}$ and $p_{V|a}$, $a \in \{0,1\}$, satisfy (\ref{eq:Z3}) and (\ref{eq:Z2}). Then, the rvs $U,V,X,Y$ with joint pmf given by (\ref{eq:Z1}) have properties (S1)--(S3).
\label{basic_lemma}
\end{lemma}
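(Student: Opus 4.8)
The plan is to verify (S1), (S3), (S2) in turn, essentially assembling the equivalences established in the paragraphs preceding the lemma. For (S1): the joint pmf of $(U,V,X,Y)$ is by hypothesis given by (\ref{eq:Z1}), and that expression displays the law of $(U,X)$ and the law of $(V,Y)$ in product form; marginalizing (\ref{eq:Z1}) gives $p_{UX}(k,a)=\tfrac12 p_{U|a}(k)$ and $p_{VY}(l,b)=\tfrac12 p_{V|b}(l)$, so $p_{UVXY}(k,l,a,b)=p_{UX}(k,a)\,p_{VY}(l,b)$, which is exactly $(U,X)\independent(V,Y)$. This direction is immediate from the form of (\ref{eq:Z1}) and uses nothing else.

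For (S3), I would first read off from (\ref{eq:Z3}) that, conditioned on $X=a$, the random variable $U$ is supported on the integers congruent to $a\bmod 2$, and conditioned on $Y=b$, $V$ is supported on the integers congruent to $b\bmod 2$. Conditioning on the joint event $\{X=a,\ Y=b\}$ and using (S1) (so $U$ and $V$ are still independent given this event, with the stated supports), $U+V$ is almost surely congruent to $a+b\bmod 2$, i.e.\ to $a\oplus b$. Hence the parity of $U+V$ determines $X\oplus Y$ almost surely, which is (S3).

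For (S2), I would start from the convolution identities that hold under (S1): conditioning on $X=a$ leaves $V$ distributed as $p_V$ (because $(V,Y)\independent(U,X)$), so $p_{U+V\mid X=a}=p_{U|a}\ast p_V$, and likewise $p_{U+V\mid Y=a}=p_U\ast p_{V|a}$ and $p_{U+V}=p_U\ast p_V$. Passing to characteristic functions of these integer-valued laws turns convolutions into products, so the hypothesis (\ref{eq:Z2}) becomes $\varphi_{U+V}=\varphi_{U+V\mid X=a}=\varphi_{U+V\mid Y=a}$ for $a\in\{0,1\}$. Since a pmf on $\Z$ is uniquely determined by its characteristic function (Fourier inversion on $[-\pi,\pi]$), this yields $p_{U+V\mid X=a}=p_{U+V}$ and $p_{U+V\mid Y=a}=p_{U+V}$ for every $a$; as $X$ and $Y$ take only the values $0,1$, this is precisely $(U+V)\independent X$ and $(U+V)\independent Y$, i.e.\ (S2).

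There is no deep obstacle here: the lemma is a consolidation of the discussion, and the proof is essentially bookkeeping. The only points that need genuine care are (i) deriving the conditional convolution formulas from (S1) — one must use that conditioning on $X=a$ does not disturb the marginal law $p_V$ of $V$, which is where the full independence $(U,X)\independent(V,Y)$ rather than mere independence of $U$ and $V$ is used — and (ii) invoking uniqueness of characteristic functions for lattice (integer-valued) random variables in order to pass from the Fourier-domain identity (\ref{eq:Z2}) back to the pmf-level statement of independence.
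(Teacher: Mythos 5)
Your proof is correct and follows essentially the same route as the paper: the lemma is stated there as a summary of the immediately preceding discussion, which establishes exactly the three pieces you assemble — (S1) from the product form of (\ref{eq:Z1}), (S3) from the parity supports in (\ref{eq:Z3}), and (S2) from the convolution identities together with uniqueness of characteristic functions to pass between (\ref{eq:Z2a}) and (\ref{eq:Z2}). The two points you flag as needing care (using the full independence $(U,X)\independent(V,Y)$ to get the conditional convolutions, and Fourier inversion on $\Z$) are precisely the implicit steps in the paper's argument, so nothing is missing.
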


The observations made up to this point also allow us to prove the following negative result.\footnote{In fact, a stronger negative result can be shown --- see Proposition~\ref{prop:exp_decay}.}

\begin{proposition}
 Properties (S1)--(S3) cannot be satisfied by integer-valued rvs $U,V$
 that are finitely supported.
 \label{prop:finite_supp}
 \end{proposition}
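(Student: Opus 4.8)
The plan is to argue by contradiction: assume that finitely supported integer-valued rvs $U,V$ (together with the iid uniform binary $X,Y$) satisfy (S1)--(S3), and derive an impossibility. The crucial observation is that when $U$ and $V$ are finitely supported, their characteristic functions $\varphi_U(t)=\sum_k p_U(k)e^{ikt}$ and $\varphi_V$ are \emph{nonzero} trigonometric polynomials: writing $\varphi_V(t)=e^{imt}Q(e^{it})$ for a suitable $m\in\Z$ and a genuine polynomial $Q$ (nonzero since $\varphi_V(0)=1$), we see that $\varphi_V$ vanishes only at those finitely many $t\in[0,2\pi)$ for which $e^{it}$ is a root of $Q$. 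In particular $\varphi_V(t)\neq 0$ for all but finitely many $t\in\R$, and likewise for $\varphi_U$.

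Next I would invoke the Fourier-domain form of (S2) established above. Since (S1) holds, property (S2) is equivalent to (\ref{eq:Z2}), i.e.\ $\varphi_{U|a}\varphi_V=\varphi_U\varphi_V$ and $\varphi_U\varphi_{V|a}=\varphi_U\varphi_V$ for $a\in\{0,1\}$. Reading the first identity as $(\varphi_{U|a}(t)-\varphi_U(t))\,\varphi_V(t)=0$ for all $t\in\R$ and using that $\varphi_V$ has only isolated zeros, we conclude $\varphi_{U|a}=\varphi_U$ off a discrete set, hence everywhere by continuity of characteristic functions; the uniqueness theorem for characteristic functions then gives $p_{U|a}=p_U$ for $a\in\{0,1\}$, i.e.\ $U\independent X$. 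The symmetric argument, with $\varphi_U$ now playing the role of the non-vanishing factor, gives $V\independent Y$.

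It remains to see that $U\independent X$, $V\independent Y$, and (S1) are incompatible with (S3). Combining $U\independent X$ and $V\independent Y$ with (S1), a short computation gives $(U,V)\independent(X,Y)$ (in particular $U\independent V$), so $U+V$ --- being a function of $(U,V)$ --- is independent of $X\oplus Y$. But (S3) says $X\oplus Y=g(U+V)$ a.s.\ for some map $g$; a random variable that is simultaneously a deterministic function of $U+V$ and independent of $U+V$ must be a.s.\ constant, which contradicts the fact that $X\oplus Y$ is uniform on $\{0,1\}$. (In entropy terms: $1=H(X\oplus Y)=H(X\oplus Y\mid U+V)=0$.)

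The only genuinely delicate point is the cancellation step $\varphi_{U|a}\varphi_V=\varphi_U\varphi_V\Rightarrow\varphi_{U|a}=\varphi_U$: this relies entirely on $\varphi_V$ not vanishing on a set of positive Lebesgue measure, which is exactly what finite support buys us. A general characteristic function can vanish on large sets, so this step --- and hence the proposition --- genuinely needs the finiteness hypothesis; strengthening it to exponentially decaying tails requires the finer argument behind Proposition~\ref{prop:exp_decay}. Everything else is routine bookkeeping with independence relations.
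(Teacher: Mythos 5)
Your proof is correct and follows essentially the same route as the paper's: finite support makes $\varphi_U,\varphi_V$ nonvanishing trigonometric polynomials with discrete zero sets, cancellation in (\ref{eq:Z2}) plus continuity gives $U\independent X$ and $V\independent Y$, and combined with (S1) this forces $U+V\independent X\oplus Y$, contradicting (S3). You merely spell out a bit more explicitly than the paper why the last independence is incompatible with (S3) (a function of $U+V$ independent of $U+V$ is a.s.\ constant), which is a fine completion of the same argument.
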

 \begin{proof}
 Suppose that $U$ and $V$ are finitely supported $\Z$-valued rvs.
 Then, $\varphi_U(t)$ and $\varphi_V(t)$ are finite linear combinations of some
 exponentials $e^{ik_1t},\ldots,e^{ik_nt}$. 
 Equivalently, the real and imaginary parts of $\varphi_U$ and $\varphi_V$ 
 are trigonometric polynomials. Thus, either $\varphi_U$ (resp.\ $\varphi_V$) 
 is identically zero, or it has a discrete set of zeros. The former
 is impossible as $\varphi_U(0) = \varphi_V(0) = 1$.
 Now, suppose that (S1) and (S2) are satisfied, which means that (\ref{eq:Z2})
 must hold. The equality $\varphi_U \varphi_V = \varphi_U \varphi_{V|a}$
 in (\ref{eq:Z2}) implies that $\varphi_{V|a}(t) = \varphi_V(t)$ for all $t$ 
 such that $\varphi_U(t) \neq 0$. But since $\varphi_U(t)$ has a discrete set 
 of zeros, continuity of characteristic functions in fact implies
 that $\varphi_{V|a}(t) = \varphi_V(t)$ for all $t$. An analogous argument
 shows that $\varphi_{U|a}(t) = \varphi_U(t)$ for all $t$. Hence, 
 $U \independent X$ and $V \independent Y$. From this, and (S1), 
 we obtain that $U + V \independent X \oplus Y$, thus precluding (S3).
 \end{proof}

Practical communication systems generally have a maximum power constraint, which means that we would like to have $U,V$ \markchange{be} finitely supported. But from Proposition~\ref{prop:finite_supp}, we see that it is not possible to have finitely supported $U,V$ that permit secure computation of the XOR at the relay. Therefore, in order to ensure secure computation, we will have to relax the power constraint to an \emph{average power constraint} on the user nodes. This means that we require finite-variance, integer-valued random variables $U,V$, with infinite support, that satisfy properties (S1)--(S3), or equivalently, the hypotheses of Lemma~\ref{basic_lemma}.

We now give a construction of $U,V$ that satisfy the hypotheses of Lemma~\ref{basic_lemma}.
We will choose a density function whose characteristic function is \markchange{compactly} supported. The random variables $U$ and $V$ are chosen according to a distribution obtained by sampling and appropriately normalizing this density function.
To study this in more detail, we rely upon methods and results from Fourier analysis.
The key tool we need is the Poisson summation formula, which we
briefly recall here. Our description is based largely on Section~XIX.5 in \cite{Feller}.

\subsection{The Poisson Summation Formula} \label{sec:Poi_summ}

\markchange{ 
Fix a positive integer $d$, and let $\L$ be a full-rank lattice in $\R^d$. Recall from Section~\ref{sec:latticedefns} that $\hat{\L}$ denotes the Fourier dual of $\L$. 
}

\markchange{ 
Let $\psi: \R^d \to \C$ be the characteristic function of a $\R^d$-valued random variable, such that $\int_{\R^d} |\psi(\t)| \, d\t < \infty$. In particular, $\psi$ is continuous and $\psi(\0) = 1$. 
Since $\psi$ is absolutely integrable, the random variable has a continuous density $f: \R^d \to \R^+$.
The Poisson summation formula can be expressed as follows: for any $\s \in \R^d$, we have for all $\zetabf \in \R^d$, 
\begin{equation}
\sum_{\n \in \hat{\L}} \psi(\zetabf + \n) \, e^{-i \langle \n, \, \s \rangle}
= (\det \L) \sum_{\k \in \L} f(\k + \s) \, e^{i \langle \k + \s,  \, \zetabf \rangle},
\label{eq:psf_L}
\end{equation}
provided that the series on the left converges to a continuous function $\Psi(\zetabf)$. It should be pointed out that texts in Fourier analysis typically state the Poisson summation formula for an arbitrary $L^1$ function $f$, and would then require that $f$ and $\psi$ decay sufficiently quickly --- see e.g., \cite[Chapter~VII, Corollary~2.6]{SW71} or \cite[Eq.~(17.1.2)]{Barvinok} ---  for (\ref{eq:psf_L}) to hold. However, as argued by Feller in proving the formula in the one-dimensional setting \cite[Chapter~XIX, equation~(5.9)]{Feller}, in the special case of a non-negative $L^1$ function $f$, it is sufficient to assume that the left-hand side (LHS) of (\ref{eq:psf_L}) converges to a continuous function $\Psi(\zetabf)$.
}

\markchange{ 
Note that $\Psi(\0) = (\det \L) \sum_{\k\in\L} f(\k+\s)$, which is a
non-negative quantity. If $\Psi(\0) \ne 0$, then dividing both
sides of (\ref{eq:psf_L}) by $\Psi(\0)$ yields the important fact that
$\Psi(\zetabf)/\Psi(\0)$ is the characteristic function of a discrete random
variable supported within the set $\L + \s := \{\k + \s: \k \in \L\}$, the probability
mass at the point $\k+\s$ being equal to $f(\k+\s)/\sum_{\boldsymbol{\ell} \in \L} f(\boldsymbol{\ell} + \s)$.
}

\markchange{
 A special case of interest is when $\psi$ is compactly supported; specifically, it is supported within the fundamental Voronoi region of $\hat{\L}$: $\psi(\t) = 0$ for all $\t \notin \cV(\hat{\L})$. In this case, we can readily show that the series on the LHS of (\ref{eq:psf_L}) converges to a continuous function $\Psi$. Indeed, if we define $\widetilde{\psi}(\t) := \psi(\t)e^{-i\langle \t,\s \rangle}$, then the series on the LHS of \eqref{eq:psf_L} may be written as $\Psi(\zetabf) := e^{i \langle\zetabf, \s \rangle} \widetilde{\Psi}(\zetabf)$, where 
 $$
 \widetilde{\Psi}(\zetabf) := \sum_{\n \in \hat{\L}} \widetilde{\psi}(\zetabf + \n).
$$
Now, recall that $\psi$, being a characteristic function, is continuous on $\R^d$; hence, so is $\widetilde{\psi}$. Also, by assumption, $\psi$ is supported within $\cV(\hat{\L})$; hence, so is $\widetilde{\psi}$. In particular, by continuity, $\widetilde{\psi}$ must be $0$ on the boundary of $\cV(\hat{\L})$; therefore, the supports of $\widetilde{\psi}(\cdot)$ and $\widetilde{\psi}(\cdot + \n)$ do not intersect for any non-zero $\n \in \hat{\L}$. From this, we infer that $\widetilde{\Psi}$, which is formed by the superposition of continuous functions with disjoint supports, must be continuous. Hence, we can conclude that $\Psi(\zetabf) = e^{i \langle\zetabf, \s \rangle} \widetilde{\Psi}(\zetabf)$ is a continuous function.
}

\markchange{ 
Moreover, it is clear that $\Psi(\0) = \psi(\0)$, and since $\psi$ is a characteristic function, $\psi(\0) = 1$. As explained above, this shows that $\Psi$ is the characteristic function of a discrete rv supported within $\L + \s$.  In fact, by plugging in $\zetabf = \0$ in (\ref{eq:psf_L}) we obtain that $\Psi(\0) = (\det \L) \sum_{\k \in \L} f(\k+\s)$, which shows that $\sum_{\k\in\L} f(\k+\s) = 1/(\det \L)$. For future reference, we summarize this in the form of a proposition.
}

\markchange{
\begin{proposition}
Let $\L$ be a full-rank lattice in $\R^d$. Let $\psi:\R^d \to \C$ be a characteristic function such that $\psi(\t) = 0$ for all $\t \notin \cV(\hat{\L})$, and let $f:\R^d \to \R^+$ be the corresponding probability density function. 
Then, for any $\s \in \R^d$, the function $\Psi: \R^d \to \C$ defined by 
\begin{equation*}
\Psi(\zetabf) = \sum_{\n \in \hat{\L}} \psi(\zetabf + \n) \, e^{-i \langle \n, \, \s \rangle}
\end{equation*}
is the characteristic function of a random variable supported within the set $\L+\s  := \{\k + \s:  \k \in \L\}$.
The probability mass at the point $\k + \s$ is equal to $(\det \L) \, f(\k+\s)$. 
\label{prop:psf_Rd}
\end{proposition}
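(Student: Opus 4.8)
The plan is to transcribe the one-dimensional argument that produced Proposition~\ref{prop:psf}, now using the lattice Poisson summation formula~(\ref{eq:psf_L}) in place of its scalar counterpart. First I would record the facts needed to invoke~(\ref{eq:psf_L}): since $\hat{\L}$ has rank $d$, its open Voronoi cell $\cV(\hat{\L})$ is bounded, so the hypothesis $\psi(\t) = 0$ for $\t \notin \cV(\hat{\L})$ forces $\psi$ to be a continuous function whose support lies in the compact set $\overline{\cV(\hat{\L})}$; in particular $\int_{\R^d} |\psi(\t)|\, d\t < \infty$, hence $f$ is a continuous probability density and $f \ge 0$ everywhere on $\R^d$.

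The one substantive point is to verify that the series defining $\Psi$, i.e.\ the left-hand side of~(\ref{eq:psf_L}), converges to a continuous function --- this being precisely the hypothesis under which~(\ref{eq:psf_L}) is asserted. Here I would use the tiling property of Voronoi cells: the translates $\cV(\hat{\L}) + \n$, $\n \in \hat{\L}$, are pairwise disjoint, and for $\n \neq \n'$ the set $\overline{\cV(\hat{\L})} + \n$ is disjoint from the open cell $\cV(\hat{\L}) + \n'$. Consequently, for a fixed $\zetabf$ at most one translate $\zetabf + \n$ can lie in $\cV(\hat{\L})$; for every other $\n$ we have $\zetabf + \n \notin \overline{\cV(\hat{\L})} \supseteq \supp\psi$, so $\psi(\zetabf + \n) = 0$. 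Thus $\Psi(\zetabf)$ equals the single term $\psi(\zetabf + \n^\ast)\, e^{-i\langle \n^\ast, \s\rangle}$ when some $\zetabf + \n^\ast \in \cV(\hat{\L})$, and $\Psi(\zetabf) = 0$ otherwise (noting that $\psi$ vanishes on $\partial\cV(\hat{\L})$ by hypothesis). On the interior of each translated cell $\n^\ast$ is locally constant, so $\Psi$ is continuous there; and if $\zetabf_0$ lies in no translated open cell, then $\psi(\zetabf_0 + \n) = 0$ for all $\n$, and since only finitely many $\n$ can bring $\zetabf + \n$ near $\overline{\cV(\hat{\L})}$ for $\zetabf$ in a small ball about $\zetabf_0$, continuity of $\psi$ gives $\Psi(\zetabf) \to 0 = \Psi(\zetabf_0)$. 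Hence $\Psi$ is continuous on $\R^d$.

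With continuity established, (\ref{eq:psf_L}) applies and yields $\Psi(\zetabf) = (\det\L) \sum_{\k \in \L} f(\k + \s)\, e^{i\langle \k + \s,\, \zetabf\rangle}$ for all $\zetabf \in \R^d$. Setting $\zetabf = \0$, the only $\n \in \hat{\L}$ contributing on the left is $\n = \0$ (because $\0 \in \cV(\hat{\L})$, no nonzero lattice point is, and $\psi$ vanishes on $\partial\cV(\hat{\L})$), so $\Psi(\0) = \psi(\0) = 1$; therefore $(\det\L)\sum_{\k \in \L} f(\k + \s) = 1$. Combined with $f \ge 0$, this shows that the numbers $(\det\L)\, f(\k + \s)$, $\k \in \L$, form a valid probability mass function supported on $\L + \s$, and the displayed series exhibits $\Psi(\zetabf)$ as $\E\big[e^{i\langle Y, \zetabf\rangle}\big]$ for the discrete random variable $Y$ taking the value $\k + \s$ with that probability. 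This is exactly the assertion of the proposition.

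I expect the Voronoi-tiling/continuity step to be the only place requiring genuine care; the integrability of $\psi$, the evaluation at $\zetabf = \0$, and the recognition of the right-hand side of~(\ref{eq:psf_L}) as a characteristic function of a lattice-supported random variable are all routine transcriptions of the one-dimensional treatment and of the remarks already made about~(\ref{eq:psf_L}).
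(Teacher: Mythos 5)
Your proof is correct and follows the same route the paper intends: the paper merely asserts that Proposition~\ref{prop:psf_Rd} ``follows easily from (\ref{eq:psf_L})'' in analogy with the one-dimensional argument of Section~III-B, and you have supplied exactly the missing details --- verifying that the periodized series converges to a continuous function (via the disjointness of translated open Voronoi cells, mirroring the paper's periodic-extension argument for $\widetilde{\Psi}$), and normalizing by evaluating at $\zetabf=\0$. No gaps; the minor overstatement that the other translates $\zetabf+\n$ avoid $\overline{\cV(\hat{\L})}$ (they need only avoid the open cell, where $\psi$ is supported) is harmless.
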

}

\markchange{
It should be noted that compactly supported characteristic functions do indeed exist --- see e.g., \cite[Section~XV.2, Table~1]{Feller}, \cite{EGR04}, \cite{RS04}. We also give an explicit construction in Example~\ref{ex:psi} in Section~\ref{sec:constr}.
}

\markchange{ 
Applying Proposition~\ref{prop:psf_Rd} to the one-dimensional lattice $T \Z = \{kT : k \in \Z\}$, with $T > 0$, we obtain the corollary below.
}

\markchange{
\begin{corollary}
Let $\psi$ be a characteristic function of a real-valued random variable 
such that $\psi(t) = 0$ whenever $|t| \ge \pi/T$ for some $T > 0$, and let $f$ be the corresponding
probability density function. Then, for any $s \in \R$, 
the function $\Psi: \R \to \C$ defined by 
\begin{equation*}
\Psi(\zeta) = \sum_{n=-\infty}^\infty \psi(\zeta + 2n\pi/T) \,
e^{-is(2n\pi/T)} 
\end{equation*}
is the characteristic function of a discrete random variable supported
within the set $\{kT+s: k \in \Z\}$. The probability mass at the point
$kT + s$ is equal to $T f(kT+s)$.
\label{cor:psf}
\end{corollary}
}

\markchange{ 
This corollary plays a central role in the construction described next.
}

\subsection{Construction of $\Z$-Valued RVs Satisfying (S1)--(S3)\label{sec:constr}}

\begin{figure}[t]
\centerline{\input{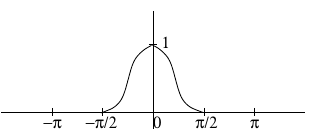_t}}
\caption{A generic characteristic function supported on $[-\pi/2,\pi/2]$.}
\label{fig:generic_psi}
\end{figure}

 We now describe the construction of integer-valued rvs that satisfy (S1)--(S3). Let $\psi$ be a characteristic function (of a continuous rv $X$) with the properties that 
\begin{itemize}
\item[(C1)] $\psi(t) = 0$ for $|t| \ge \pi/2$, and
\item[(C2)] $\psi(t)$ is real and non-negative for all $t \in \R$.\footnote{There
  is no loss of generality in imposing this requirement. Suppose that an rv
$X$ has characteristic function $\psi$, which is complex-valued in general. 
Let $X_1,X_2$ be iid rvs with the same distribution as $X$. Then, $X_1-X_2$ has
characteristic function $\psi \bar{\psi} = |\psi|^2$.}
\end{itemize}
 A generic such $\psi$ is depicted in Figure~\ref{fig:generic_psi}; we give a specific example a little later in this section. Since $\psi$ is real-valued, it must be an even function: 
$\psi(-t) = \psi(t)$ for all $t \in \R$.  Also, $\psi(0) = 1$. 
Moreover, since $\psi$ is integrable over $\R$, by the Fourier inversion formula,
the rv $X$ has a continuous density $f$. Note that Corollary~\ref{cor:psf} holds for $T \le 2$.

Let $\varphi$ be the periodic function with period $2\pi$ that agrees
with $\psi$ on $[-\pi,\pi]$, as depicted in Figure~\ref{fig:phi}. 
Note that $\varphi(\zeta) = \sum_{n=-\infty}^{\infty} \psi(\zeta +
2\pi n)$. Thus, applying Corollary~\ref{cor:psf} 
with $T = 1$ and $s = 0$, we find that $\varphi$
is the characteristic function of an integer-valued rv, 
with pmf given by
\begin{equation}
p(k) = f(k) \text{ for all } k \in \Z.
\label{eq:p}
\end{equation}

 \begin{figure}[t]
 \centerline{\resizebox{8cm}{!}{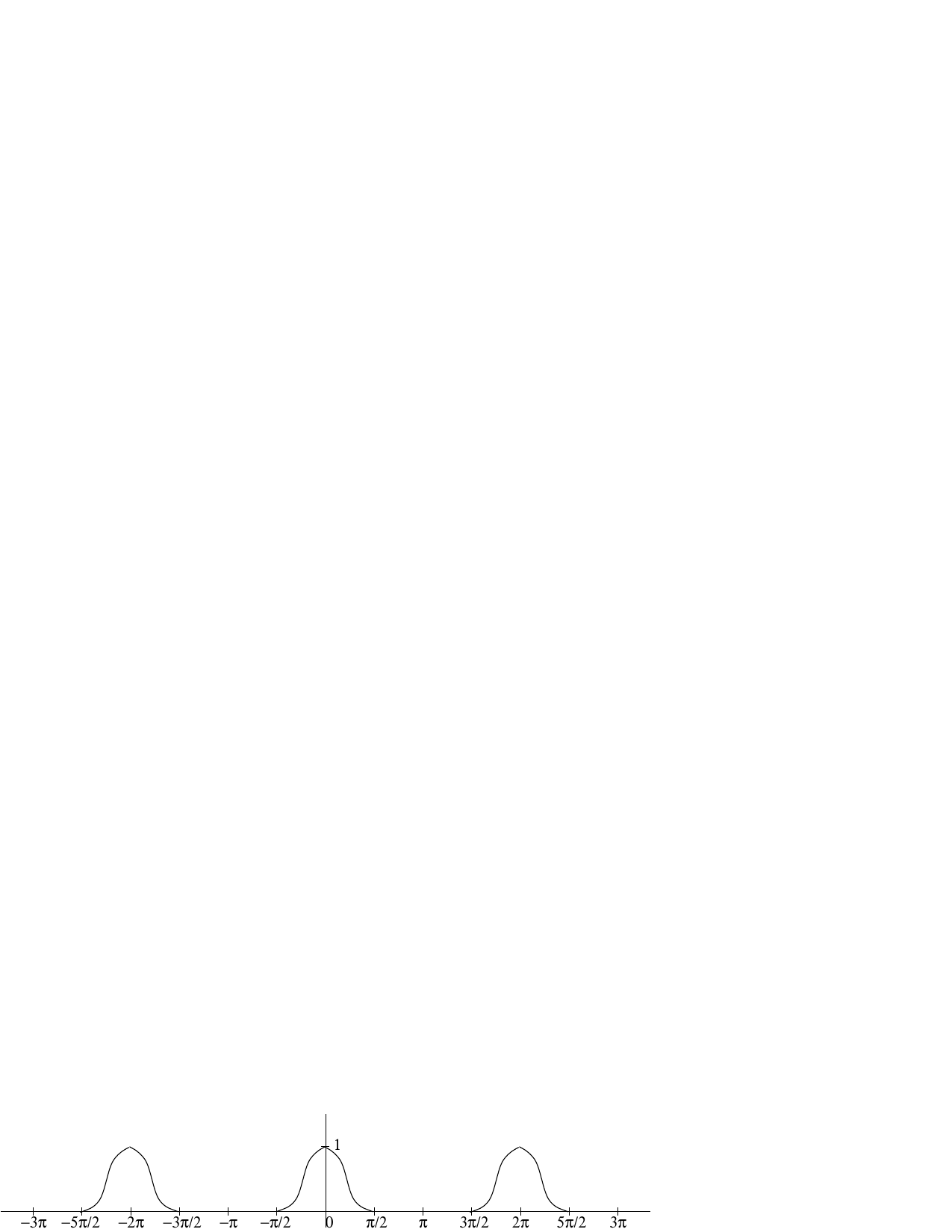}}
 \caption{Period-$2\pi$ extension of generic $\psi$ from Figure~\ref{fig:generic_psi}.}
 \label{fig:phi}
 \vspace{0.4cm}
\centerline{\resizebox{8cm}{!}{\input{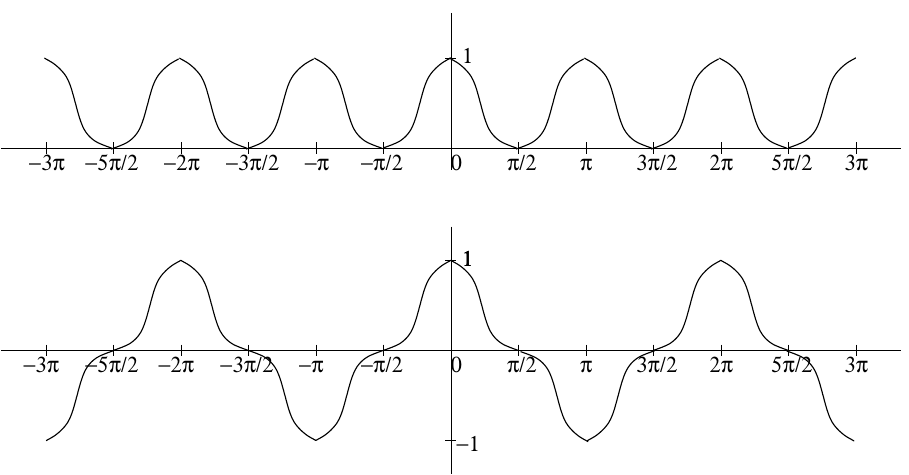_t}}}
\caption{The periodic functions $\varphi_0$ and $\varphi_1$ derived
  from $\psi$.}
\label{fig:phi0_phi1}
 \end{figure}

Next, for $s = 0,1$, define $\varphi_s$ as follows: 
for $\zeta \in \R$,
$$
\varphi_s(\zeta) = \sum_{n=-\infty}^\infty \psi(\zeta+n\pi) e^{-i s n \pi}.
$$
It is easily seen that $\varphi_0$ is the periodic extension of $\psi$
with period $\pi$, i.e., $\varphi_0$ is the periodic function with
period $\pi$ that agrees with $\psi$ on $[-\pi/2,\pi/2]$, as 
depicted at the top of Figure~\ref{fig:phi0_phi1} for a generic $\psi$ shown in Figure \ref{fig:generic_psi}.
On the other hand, $\varphi_1$ is periodic with period $2\pi$: 
its graph is obtained from that of $\varphi_0$ by reflecting about the
$\zeta$-axis every second copy of $\psi$, as depicted at the bottom of
Figure~\ref{fig:phi0_phi1}.

Applying Corollary~\ref{cor:psf} with $T=2$ and $s \in \{0,1\}$, 
we get that $\varphi_0$ and $\varphi_1$ are characteristic functions 
of rvs supported within the even and odd integers, respectively. 
The pmf corresponding to $\varphi_0$ is given by
\begin{equation}
p_0(k ) = \begin{cases}
2f(k) & \text{ if $k$ is an even integer} \\
0 & \text{ otherwise}.
\end{cases}
\label{eq:p0}
\end{equation}
and that corresponding to $\varphi_1$ is
\begin{equation}
p_1(k) = \begin{cases}
2f(k) & \text{ if $k$ is an odd integer} \\
0 & \text{ otherwise}.
\end{cases}
\label{eq:p1}
\end{equation}
From (\ref{eq:p})--(\ref{eq:p1}), we have $p(k) = \frac12 (p_0(k) + p_1(k))$ for all $k \in \Z$.

Finally, note that since $\varphi_0(t)$ and $\varphi_1(t)$ differ from
$\varphi(t)$ only when $\varphi(t) = 0$, we have
\begin{equation}
\varphi^2 = \varphi \varphi_0 = \varphi \varphi_1.
\label{eq:phi}
\end{equation}

\markchange{With these facts in hand, we can describe the construction of $\Z$-valued rvs $U$ and $V$ satisfying properties (S1)--(S3). Set $p_{U|0} = p_{V|0} = p_0$ and $p_{U|1} = p_{V|1} = p_1$. This
implies that $p_U = p_V = p$, where $p$ is as defined in (\ref{eq:p}).
Clearly, (\ref{eq:Z3}) holds. To verify (\ref{eq:Z2}), note that, by virtue of
(\ref{eq:phi}), we have for $a \in \{0,1\}$,
$$
\varphi_U \varphi_V = \varphi^2 = \varphi \varphi_a.
$$
But, by construction,  $\varphi_U \varphi_{V|a} =
\varphi_{V}\varphi_{U|a} = \varphi \varphi_a$.
Therefore, by Lemma~\ref{basic_lemma}, the rvs
$(U,V,X,Y)$ with joint pmf given by (\ref{eq:Z1}) have the
properties (S1)--(S3).}

\markchange{Recall from the discussion following Proposition~\ref{prop:finite_supp} that we need the rvs $U$ and $V$ to have finite variance. To ensure this, we use the fact \cite[pp.~512--513]{Feller} that a probability distribution $F$ with characteristic function $\chi$ has finite variance iff $\chi$ is twice differentiable; in this case, $\chi'(0) = i\mu$ and $\chi''(0) = -\mu_2$, where $\mu$ and $\mu_2$ are the mean and second moment of $F$. Thus, the rvs $U$ and $V$ (with pmf $p$ as above) have finite variance iff the characteristic function $\varphi$ is twice differentiable. In this case, as $\varphi$ is real, so is $\varphi'(0)$, which implies that $U$ and $V$ have zero mean. Hence, their variances are equal to their second moments, and so, $\text{Var}(U) = \text{Var}(V) = - \varphi''(0)$. By construction, $\varphi$ is twice differentiable iff $\psi$ is twice differentiable and $\varphi''(0) = \psi''(0)$. We summarize our construction of the rvs $U$ and $V$ in the following theorem.}

\begin{theorem}
Let $X,Y$ be iid Bernoulli$(1/2)$ rvs.
Suppose that we are given a probability density function $f:\R \to \R^+$ with a non-negative real characteristic function $\psi$ such that $\psi(t) = 0$ for $|t| \ge \pi/2$. Set $p_{U|0} = p_{V|0} = p_0$ and $p_{U|1} = p_{V|1} = p_1$, where $p_0$ and $p_1$ are as in (\ref{eq:p0}) and \ref{eq:p1}). Then, the resulting $\Z$-valued rvs $U$ and $V$ satisfy properties (S1)--(S3). Additionally,  the rvs $U$ and $V$ have finite variance iff $\psi$ is twice differentiable, in which case the variance equals $-\psi''(0)$.
\label{mod2_thm}
\end{theorem}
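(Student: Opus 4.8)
The plan is to verify the hypotheses of Lemma~\ref{basic_lemma} and then handle the variance claim separately. First I would set $p_{U|0} = p_{V|0} = p_0$ and $p_{U|1} = p_{V|1} = p_1$ as instructed, so that the characteristic functions of these conditional pmfs are $\varphi_0$ and $\varphi_1$ respectively, as established in the construction preceding the theorem via Proposition~\ref{prop:psf}. Condition (\ref{eq:Z3}) is immediate from the explicit forms (\ref{eq:p0}) and (\ref{eq:p1}): $p_0$ is supported on even integers and $p_1$ on odd integers, so in particular $p_{U|0}(k) = p_{V|0}(k) = 0$ for odd $k$ and $p_{U|1}(k) = p_{V|1}(k) = 0$ for even $k$. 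This gives (S3) via the lemma's machinery.

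Next I would check (\ref{eq:Z2}). Since $p_{U|0} = p_{V|0} = p_0$ and $p_{U|1} = p_{V|1} = p_1$, we have $\varphi_{U|a} = \varphi_{V|a} = \varphi_a$ for $a \in \{0,1\}$, and hence $\varphi_U = \varphi_V = \tfrac12(\varphi_0 + \varphi_1) = \varphi$, using the relation $p(k) = \tfrac12(p_0(k)+p_1(k))$ noted just after (\ref{eq:p1}). So (\ref{eq:Z2}) reduces to the claim $\varphi^2 = \varphi \varphi_0 = \varphi \varphi_1$, which is exactly (\ref{eq:phi}) — this holds because $\varphi_0(t)$ and $\varphi_1(t)$ differ from $\varphi(t)$ only at points $t$ where $\varphi(t) = 0$, so multiplying by $\varphi$ kills the discrepancy. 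With (\ref{eq:Z3}) and (\ref{eq:Z2}) both in hand, Lemma~\ref{basic_lemma} applied to the joint pmf (\ref{eq:Z1}) immediately yields that $U, V, X, Y$ have properties (S1)--(S3).

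For the variance claim, the idea is that $U$ (and $V$, which is identically distributed) has pmf $p = \tfrac12(p_0 + p_1)$, and from (\ref{eq:p}) we have $p(k) = f(k)$ for all $k \in \Z$, with $\varphi$ its characteristic function. Since $\varphi$ agrees with $\psi$ on $[-\pi,\pi]$ (indeed on all of $[-\pi/2,\pi/2]$ where $\psi$ is supported, and $\psi \equiv 0$ on the rest of $[-\pi,\pi]$) and $\varphi$ is periodic, differentiability of $\varphi$ at $0$ is equivalent to differentiability of $\psi$ at $0$, and similarly for the second derivative; note also $\varphi'(0) = \psi'(0) = 0$ since $\psi$ is even. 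The standard fact relating moments of a lattice-valued random variable to derivatives of its characteristic function then gives $\E[U] = -i\varphi'(0) = 0$ and, when the second derivative exists, $\E[U^2] = -\varphi''(0) = -\psi''(0)$, so $\mathrm{Var}(U) = -\psi''(0)$. Conversely, if $\psi$ is not twice differentiable at $0$, then $\varphi''(0)$ fails to exist, which forces $\sum_k k^2 p(k) = \infty$, i.e.\ infinite variance — this direction uses the converse half of the moment--derivative correspondence (if the $2m$-th moment is finite then $\varphi$ is $2m$ times differentiable, and the contrapositive).

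I expect the main obstacle to be the rigorous handling of the variance equivalence, specifically the ``only if'' direction: one must argue carefully that infinite variance of $U$ is incompatible with twice-differentiability of $\psi$ at $0$. The cleanest route is to invoke the classical result that a characteristic function $\varphi$ of a lattice random variable is twice continuously differentiable at the origin if and only if the second moment is finite (equivalently, a symmetric second difference argument: $\lim_{h\to 0} \tfrac{2 - 2\operatorname{Re}\varphi(h)}{h^2}$ is finite iff $\E[U^2] < \infty$), and to transfer this between $\varphi$ and $\psi$ using their local agreement near $0$. The rest — verifying (\ref{eq:Z3}), (\ref{eq:Z2}), and invoking Lemma~\ref{basic_lemma} — is essentially bookkeeping over the construction already carried out in Section~\ref{sec:constr}.
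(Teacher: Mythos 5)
Your proposal is correct and follows essentially the same route as the paper: verify (\ref{eq:Z3}) directly from the supports of $p_0,p_1$, reduce (\ref{eq:Z2}) to the identity $\varphi^2 = \varphi\varphi_0 = \varphi\varphi_1$ of (\ref{eq:phi}), invoke Lemma~\ref{basic_lemma}, and then handle the variance claim via the classical equivalence between finite second moment and twice-differentiability of the characteristic function, transferred from $\varphi$ to $\psi$ by their agreement near the origin. The only cosmetic difference is that you deduce zero mean from the evenness of $\psi$, whereas the paper deduces it from the realness of $\varphi'(0)$; these are equivalent.
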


Based on Theorem~\ref{mod2_thm}, secure computation of XOR at the relay works as follows: the nodes $\mathtt{A}$ and $\mathtt{B}$ modulate their bits independently to an integer $k$, with probability $p_0(k)$ (from (\ref{eq:p0})) if the bit is 0, or with probability $p_1(k)$ (from (\ref{eq:p1})) if the bit is 1. The probability distributions can be chosen such that the modulated symbols have finite average power. The average transmit power is equal to the variance of the modulated random variable, which is $-\psi''(0)$, and a handle on this can be obtained by choosing $\psi$ carefully. The relay receives the sum of the two integers, which is independent of the individual bits $X$ and $Y$ (of $\mathtt{A}$ and $\mathtt{B}$ respectively). However, the XOR of the two bits can be recovered at $\mathtt{R}$ with probability 1. This is done by simply mapping the received integer $W$ to $1$, if $W$ is odd, and $0$ if $W$ is even. To gain a better understanding of the construction of the rvs, let us see an example.

\begin{example}
\markchange{Consider the density (from \cite[Section~XV.2, Table~1]{Feller})
\begin{equation}
f(x) = \begin{cases} \frac1{2\pi} & \text{ if } x = 0 \\ 
\frac{1-\cos x}{\pi x^2} & \text{ if } x \neq 0
\end{cases}
\label{eq:f2}
\end{equation}
which has characteristic function 
\begin{equation}
\hat{f}(t) = \max\{0, 1-|t|\}
\label{eq:fhat}
\end{equation}
The function $\hat{f}$ is plotted in Figure~\ref{fig:triangle}.
In particular, $\hat{f}(t) = 0$ for $|t| \ge 1$.}

 \begin{figure}[t]
 \centerline{\resizebox{4.5cm}{!}{\input{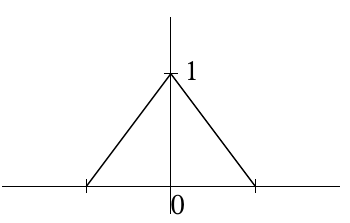_t}}}
 \caption{$\hat{f}(t) = \max\{0, 1-|t|\}$.}
 \label{fig:triangle}
 \end{figure}

\markchange{The function $\hat{f}$ is compactly supported but it is not differentiable at $0$. This can be rectified by considering instead $g = \hat{f} \ast \hat{f}$, where $\ast$ denotes convolution, which can be explicitly computed to be
\begin{equation}
g(t) = (\hat{f}\ast\hat{f})(t) = \begin{cases}
\frac12 |t|^3 - t^2 + \frac23 & \text{ if } |t| \le 1 \\
\frac16 (2-|t|)^3 & \text{ if } 1 \le |t| \le 2 \\
0 & \text{ otherwise}
\end{cases}
\label{eq:g}
\end{equation}
}

\markchange{Now, define $h(x) := (3\pi^2/4) \, [f(\pi x/ 4)]^2$, with $f$ as in (\ref{eq:f2}). We prove in Appendix~A that $h$ is a probability density function whose characteristic function is given by 
$$
\psi(t) = {\textstyle \frac32 \, g(\frac{4t}{\pi})},
$$
where $g$ is as in (\ref{eq:g}).
It can be directly verified that $\psi$ is non-negative with $\psi(t) = 0$ for 
$|t| \ge \pi/2$, and that  $\psi$ is twice differentiable, with $\psi''(0) = -48/\pi^2$.}

\markchange{Thus, rvs $U$ and $V$ can be constructed as in
 Theorem~\ref{mod2_thm} with $\text{var}(U) = \text{var}(V) = 48/\pi^2$.}

\label{ex:psi}
\end{example}

\smallskip

 \begin{remark}
 It is even possible to construct compactly supported $C^\infty$
 characteristic functions. Constructions of such functions are given in \cite{RS04}.
 In fact, \cite{RS04} constructs compactly supported characteristic
 functions $\psi$ such that the corresponding density functions $f$
 are even functions satisfying $\lim_{x \to \infty} x^m f(x) = 0$ for
 all $m > 0$. This implies that all the absolute moments 
 $\int_{-\infty}^{\infty} |x|^m f(x) \, dx$ exist, and hence,
 $\psi$ is a $C^\infty$ function (see \cite[p.~512]{Feller}). If such a characteristic function $\psi$ is used in the
 construction described in Theorem~\ref{mod2_thm}, then
 the resulting $\Z$-valued rvs $U,V$ will have pmfs 
 $p_U(k),p_V(k)$ whose tails decay faster than any polynomial in $k$. 
 To be precise, 
 $\lim_{k\to\infty} k^m p_U(k) = \lim_{k\to\infty} k^m p_V(k) = 0$ for
 any $m > 0$. 
 \label{rem:Cinfty}
 \end{remark}

The above remark shows that we can have $\Z$-valued rvs $U,V$ satisfying properties (S1)--(S3), with pmfs decaying faster than any polynomial. However, the rate of decay cannot be much faster than that. Indeed, it is not possible to construct $\Z$-valued rvs with exponentially decaying pmfs that satisfy properties (S1)--(S3). Define a pmf $p(k)$, $k \in \Z$, to be \emph{light-tailed} if there are positive constants $C$ and $\lambda$ such that $p(k) \le C \lambda^{-|k|}$ for all sufficiently large $|k|$.

\begin{proposition}
Properties (S1)--(S3) cannot be satisfied by integer-valued rvs $U,V$ having light-tailed pmfs.
\label{prop:exp_decay}
\end{proposition}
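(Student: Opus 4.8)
The plan is to mimic the proof of Proposition~\ref{prop:finite_supp}, replacing the observation there that $\varphi_U,\varphi_V$ are trigonometric polynomials (hence have isolated zeros) with the analogous fact for light-tailed pmfs. Specifically, suppose $p_U(k)\le C\lambda^{-|k|}$ for all $|k|\ge N$, where we may assume $\lambda>1$ (the statement is vacuous otherwise). Then for $|k|\ge N$ one has $|p_U(k)\,e^{ikz}| = p_U(k)\,e^{-k\,\mathrm{Im}\,z}$, which on any compact subset of the horizontal strip $S_\lambda=\{z\in\C:|\mathrm{Im}\,z|<\log\lambda\}$ is dominated by a convergent geometric series. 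Hence $\sum_{|k|\ge N}p_U(k)\,e^{ikz}$ converges absolutely and locally uniformly on $S_\lambda$ and defines a holomorphic function there; adding the finite (entire) sum $\sum_{|k|<N}p_U(k)\,e^{ikz}$ shows that $\varphi_U$ extends to a function holomorphic on $S_\lambda$. The same applies to $\varphi_V$. (Note that only $p_U$ and $p_V$ are assumed light-tailed, and that is all that is used.)

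Next I would apply the identity theorem: since $\varphi_U$ is holomorphic on the connected open set $S_\lambda$ and $\varphi_U(0)=1\ne 0$, it is not identically zero, so its zero set in $S_\lambda$ — in particular its set of real zeros — is discrete. From here the argument of Proposition~\ref{prop:finite_supp} carries over essentially verbatim. Assuming (S1) and (S2), equation~(\ref{eq:Z2}) holds; the equality $\varphi_U\varphi_V=\varphi_U\varphi_{V|a}$ forces $\varphi_{V|a}(t)=\varphi_V(t)$ at every real $t$ with $\varphi_U(t)\ne 0$, and since such $t$ are dense in $\R$ while characteristic functions are continuous, $\varphi_{V|a}\equiv\varphi_V$ for $a\in\{0,1\}$. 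The symmetric equality $\varphi_U\varphi_V=\varphi_{U|a}\varphi_V$, together with the analogously established discreteness of the real zeros of $\varphi_V$, gives $\varphi_{U|a}\equiv\varphi_U$. Thus $U\independent X$ and $V\independent Y$.

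Finally I would combine this with (S1): $(U,X)\independent(V,Y)$ together with $U\independent X$ and $V\independent Y$ yields full independence of $U,V,X,Y$, whence $U+V\independent(X\oplus Y)$. Since $X\oplus Y$ is non-degenerate (uniform on $\Z_2$), $U+V$ cannot almost surely determine $X\oplus Y$, contradicting (S3). The only mildly delicate point is the analytic-continuation step — one must observe that modifying $p_U$ on a finite set does not affect holomorphy of $\varphi_U$, and that $S_\lambda$ has positive width because $\lambda>1$ — but once that is in place the proof reduces to that of Proposition~\ref{prop:finite_supp}, so I do not anticipate any real obstacle.
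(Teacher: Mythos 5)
Your proof is correct, but it takes a genuinely different route from the paper's. The paper substitutes $\varphi_U = \tfrac12(\varphi_{U|0}+\varphi_{U|1})$ and $\varphi_V = \tfrac12(\varphi_{V|0}+\varphi_{V|1})$ into (\ref{eq:Z2}) to extract the identities $\varphi_{U|0}^2 = \varphi_{U|1}^2$ and $\varphi_{V|0}^2 = \varphi_{V|1}^2$; it then notes that the \emph{conditional} pmfs are light-tailed (since $p_{U|a}\le 2p_U$), that characteristic functions of light-tailed pmfs are real-analytic, and that real-analytic $g,h$ with $g^2=h^2$ satisfy $g=\pm h$, whence $\varphi_{U|0}=\varphi_{U|1}=\varphi_U$ and likewise for $V$. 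You instead upgrade the zero-set argument of Proposition~\ref{prop:finite_supp}: the exponential tail bound gives a holomorphic extension of the \emph{unconditional} $\varphi_U,\varphi_V$ to the strip $|\mathrm{Im}\,z|<\log\lambda$, the identity theorem makes their real zero sets discrete, and density of the non-zeros plus continuity of characteristic functions yields $\varphi_{V|a}=\varphi_V$ and $\varphi_{U|a}=\varphi_U$ directly from (\ref{eq:Z2}). Both arguments rest on the same analytic input (exponential decay implies analyticity implies isolated zeros) and end identically ($U\independent X$ and $V\independent Y$, combined with (S1), preclude (S3)); yours has the mild advantages of needing the light-tail hypothesis only for the marginals and of literally reusing the template of Proposition~\ref{prop:finite_supp}, while the paper's squared-characteristic-function identity is a shorter algebraic shortcut that avoids any explicit complex extension. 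One cosmetic point: when $\lambda\le 1$ the bound $p(k)\le C\lambda^{-|k|}$ is satisfied by every pmf, so under that reading the proposition would be false rather than vacuous; like the paper, you are right to take $\lambda>1$ as the intended meaning of light-tailed.
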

\emph{Proof.}\footnote{This proof was conveyed to the authors by Manjunath Krishnapur.} 
Suppose that $U,V$ are $\Z$-valued rvs satisfying (S1) and (S2).
Using $\varphi_U = (1/2)(\varphi_{U|0} + \varphi_{U|1})$ and $\varphi_V = (1/2)(\varphi_{V|0} + \varphi_{V|1})$
in (\ref{eq:Z2}), we readily obtain
\begin{equation}
\varphi_{U|0}^2 = \varphi_{U|1}^2 \ \text{ and } \ \varphi_{V|0}^2 = \varphi_{V|1}^2. 
\label{eq:expdecay1}
\end{equation}

If $U,V$ have light-tailed pmfs, then $p_{U|a}$ and $p_{V|a}$, $a \in \{0,1\}$, must also be light-tailed, since $p_{U|a} \le 2p_U$ and $p_{V|a} \le 2p_V$. The key observation is that the characteristic function of a light-tailed pmf is real-analytic, i.e., it has a power series expansion $\sum_{n=0}^\infty c_n t^n$, with $c_n \in \C$, that is valid for all $t \in \R$ \cite[Chapter~7]{Lukacs}. Thus, $\varphi_{U|a}$ and $\varphi_{V|a}$, for $a \in \{0,1\}$, are real-analytic. It \markchange{follows} by comparing power series coefficients, that if functions $g$ and $h$ are real-analytic and $g^2 = h^2$, then either $g = h$ or $g = -h$. Applying this to (\ref{eq:expdecay1}), we find that $\varphi_{U|0} = \pm \varphi_{U|1}$, and similarly for $V$. In fact, since $\varphi_U$ and $\varphi_V$ cannot be identically $0$, we actually have $\varphi_{U|0} = \varphi_{U|1} = \varphi_U$, and similarly for $V$. This implies that $U \independent X$ and $V \independent Y$. From this, and (S1), we obtain that $U + V \independent X \oplus Y$, thus precluding (S3). \endproof






\subsection{Extension to Finite Abelian Groups\label{sec:group}}

A close look at the modulations in the previous section reveals the following structure: we had a fine lattice $\Lf=\Z$ and a coarse lattice $\Lc=2\Z$, with the quotient group $\Lf/\Lc$, consisting of the two cosets $2\Z$ and $1+2\Z$, making up the probabilistically-chosen modulation alphabet. Given a message $X\in\Lf/\Lc$, the encoder outputs a random point from the coset $X$ according to a carefully chosen probability distribution. 
  Note that the quotient group in this case is isomorphic to $\Z_2$, and this enables recovery of the XOR of the bits (addition in $\Z_2$) from integer addition of \markchange{the} transmitted symbols modulo the coarse lattice. Also, the choice of the probability distribution (from Theorem \ref{mod2_thm}) ensures that the choice of coset at each transmitter is independent of the integer sum at the relay. We shall extend the construction described in the previous subsection to $d$ dimensions, thereby obtaining a scheme that satisfies properties~(S1)--(S3).

Now, any finite Abelian group $\Gp$ can be expressed as the quotient group $\Lf/\Lc$ for some pair of nested lattices $\Lc \subseteq \Lf$. Indeed, any such $\Gp$  is isomorphic to a direct sum of cyclic groups: $\Gp \cong \Z_{N_1} \oplus \Z_{N_2} \oplus \cdots \oplus \Z_{N_k}$ for some positive integers $N_1,N_2,\ldots,N_k$ \cite[Theorem~2.14.1]{Herstein}. Here, $\Z_{N_j}$ denotes the group of integers modulo-$N_j$. Taking $\Lf = \Z^d$ and $\Lc = \mathsf{A}^T \, \Z^d$, where $\mathsf{A}$ is the diagonal matrix $\text{diag}(N_1,N_2,\ldots,N_k)$, we have $\Gp \cong \Lf/\Lc$.
So, the finite Abelian group case is equivalent to considering the quotient group, i.e., the group of cosets, of a coarse lattice $\Lc$ within a fine lattice $\Lf$. These lattices may be taken to be full-rank lattices in $\R^d$. 

As an example, let $N\ge 2$ be an integer, and let $\Z_N = \{0,1,\ldots,N-1\}$ denote the set of
 integers modulo $N$. Let $X,Y$ be iid random variables uniformly
 distributed over $\Z_N$, and let $X \oplus Y$ now denote their modulo-$N$
 sum. Similar to the binary case discussed so far, given a non-negative real characteristic function $\psi$ such that
 $\psi(t) = 0$ for $|t| \ge \pi/N$, we can construct $\Z$-valued 
 random variables $U,V$, jointly distributed with $X,Y$,
 for which properties (S1)--(S3) hold. 
In this case, the finite Abelian group can be taken as the group of cosets 
of the coarse lattice $N\Z$ within the fine lattice $\Z$, which is isomorphic to $\Z_N$.
 

Let $\Lc$ be a sublattice of $\Lf$ of index $M$ (i.e., the number of cosets of $\Lc$ in $\Lf$ is $M$). List the cosets of $\Lc$ in $\Lf$ as $\L_0,\L_1,\ldots,\L_{M-1}$, which constitute the quotient group $\Gp = \Lf/\Lc$.
As before, $\oplus$ denotes addition within $\Gp$. 

Consider rvs $X,Y$ uniformly distributed over $\Gp$. We wish to construct rvs $U,V$ taking values in $\Lf$, having the properties (S1)--(S3). The following theorem shows that this is possible. Here, $\R^{+}$ denotes the set of all non-negative real numbers.

\begin{theorem}
Suppose that $\psi:\R^d \to \R^+$ is the characteristic function of a probability density function $f: \R^d \to \R^+$, such that $\psi(\t) = 0$ for $\t \notin \cV(\hLc)$, where $\hLc$ is the Fourier dual of $\Lc$. For $j = 0,1,\ldots,M-1$, define the pmf $p_j$ as follows: 
\begin{equation}
p_j(\k) = \begin{cases} 
|\det \L_0| f(\k) & \text{ if } \k \in \L_j \\
0 & \text{ otherwise}.
\end{cases}
\label{eq:pj_perfect}
\end{equation}
 Finally, define a random variable $U$ (resp.\ $V$) jointly distributed with $X$ (resp.\ $Y$) as follows: if $X = \L_j$ (resp.\ $Y = \L_j$), $U$ (resp.\ $V$) is a random point from $\L_j$ picked according to the distribution $p_j$. 
Then, the resulting $\Lf$-valued rvs $U,V$ satisfy properties (S1)--(S3). \markchange{Additionally, $\mathbb{E}{\|U\|}^2$ and $\mathbb{E}{\|V\|}^2$ are finite iff $\psi$ is twice differentiable at $\0$, in which case $\mathbb{E}{\|U\|}^2 = \mathbb{E}{\|V\|}^2 = -\Delta\psi(\0)$, where $\Delta = \sum_{j=1}^d \partial_j^2$ is the Laplacian operator.}
\label{thm:L_thm}
\end{theorem}
As with Theorem~\ref{mod2_thm} and XOR, the above theorem allows for secure computation at the relay of the group operation $X \oplus Y$. The theorem is proved using Proposition~\ref{prop:psf_Rd}, in a manner completely analogous to Theorem~\ref{mod2_thm}. The interested reader is directed to Appendix~B for the proof.

Constructing compactly supported twice-differentiable (or even
$C^\infty$) characteristic functions $\psi:\R^d \to \R^+$, $d \ge 1$,
is straightforward, given our previous constructions of such functions 
from $\R$ to $\R^+$.  
Suppose that for $i = 1,2,\ldots,d$,
$\psi_i:\R \to \R^+$ is the characteristic function of a random
variable $X_i$, such that $\psi_i(t) = 0$ for $|t| \ge \lambda_i$, 
with $\lambda_i > 0$, and $X_1,X_2,\ldots,X_d$ are mutually independent. Then, $\psi(t_1,\ldots,t_d) = \prod_{i=1}^d
\psi_i(t_i)$ is the characteristic function of the random vector
$\mathbf{X} = (X_1,\ldots,X_d)$. Note that $\psi$ is compactly
supported: $\psi(\t) = 0$ for $\t \notin \prod_{i=1}^d
(-\lambda_i,\lambda_i)$. Moreover, if the $\psi_i$s are
twice-differentiable (or $C^\infty$) for all $i$, then so is $\psi$. \markchange{Constructions other than product constructions are also in abundance; see e.g., \cite{EGR04}, \cite{RS04} and Theorem~\ref{thm:minvar} below. A smooth, compactly supported characteristic function in $\R^2$ is depicted in Figure~\ref{fig:2dcharfunc}.}

\begin{figure}
     \centering{\includegraphics[width=13cm]{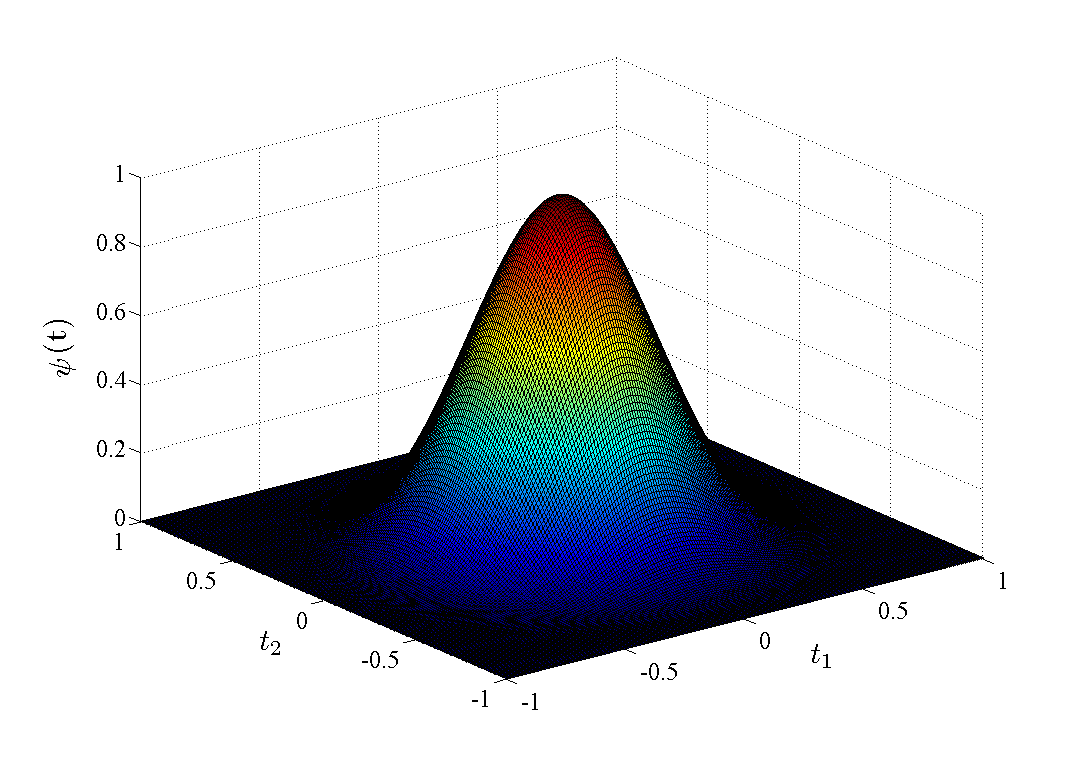}}
     \caption{Example of a characteristic function supported within $\cV(2\Z^2)$.}
     \label{fig:2dcharfunc}
\end{figure}

Our objective is to design codes (as defined in Definition~\ref{def:code}) for secure computation at the relay.
With the construction described above, the rate of the code depends on the number of cosets, $M$, of $\Lc$ in $\Lf$. 
For a given average power constraint, the system designer is usually faced with the task of maximizing the rate.
Equivalently, for a given rate, the average transmit power must be kept as small as possible. The transmit power is equal to the second moment of $\U$ (or $\V$). 
Therefore, while any characteristic function $\psi$ supported within
$\cV(\hLc)$ suffices for the construction of
Theorem~\ref{thm:L_thm}, we must use a $\psi$ for which 
$-\Delta \psi(\0)$ is the least among such $\psi$'s. This would yield random variables
$U$ and $V$ of least second moment (and hence least transmit power), and having the desired properties.

It is evident that by simply scaling the nested lattice pair, the average transmit power may be made as small as required. Suppose that the random vectors $\U$ and $\V$, distributed over a fine lattice $\Lf$, have second moment $P$. Then, for any $\alpha>0$, the random variables $\U'=\alpha \U$ and $\V'=\alpha \V$, distributed over $\alpha\Lf:=\{\alpha \z: \z\in\Lf \}$  have second moment $\alpha^{2}P$. Choosing a small enough $\alpha$ would suffice to satisfy the power constraint. However, as we will see in the following sections, when we have to deal with the additive noise in the MAC channel, it is not possible to scale down the lattice arbitrarily if the probability of error is to be made small. Also, for a given (fixed) coarse lattice, it turns out that the second moment (which depends solely on the choice of $\psi$) cannot be made arbitrarily
small. Indeed, the following result, adapted from \cite{EGR04},
gives a precise and complete answer to the question of how small
$-\Delta \psi(\0)$ can be for a characteristic function $\psi$
supported within a ball of radius $\rho$ in $\R^d$.

\begin{theorem}[\cite{EGR04}, Theorem 5.1]
Fix a $\rho>0$. If $\psi$ is a characteristic function of a random variable distributed over $\R^{d}$ such that $\psi(\t) = 0$ for $\Vert\t\Vert \ge \rho$, then
 \begin{equation}
 - \Delta \psi(\0) \ge \frac{4}{{\rho}^2} \, j^{2}_{\frac{d-2}{2}},
 \label{eq:wpsi}
 \end{equation}
 with equality iff $\psi(\t) = \wpsi(\t/\rho)$ for $\wpsi = \omega_{d} \tilde{*} \omega_{d} $. \markchange{Here, $ j_{k} $ denotes the first positive zero of the Bessel function  $ J_{k} $. Also,} $ \omega_{d}(\t) = \gamma_{d} \, \Omega_{d}(2\Vert \t \Vert j_{\frac{d-2}{2}})  $ for $ \Vert \t \Vert \leq 1/2 $ and $ \omega_{d}(\t)=0 $ for $\Vert\t\Vert>1/2 $, and 
 	$$
 	\omega_{d}\tilde{*}\omega_{d}(\t) = \int \omega_{d}(\tbf)\overline{\omega_{d}(\t+\tbf)} \, d\tbf
 	$$
 	 denotes the folded-over self convolution of $ \omega_{d} $,
          with $ \overline{\omega_{d}(\t)} $ denoting the complex conjugate of $ \omega_{d}(\t) $.  Furthermore, for $t \in \R$,
 $$
 	\Omega_{d}(t) = \Gamma(d/2) \Big( \frac{2}{t} \Big)^{\frac{d-2}{2}} J_{\frac{d-2}{2}}(t)
 $$
 and
$$
 	\gamma_{d}^{2} = \frac{4j_{\frac{d-2}{2}}^{d-2}}{\pi^{d/2} \Gamma(d/2) J^2_{\frac{d}{2}}(j_{\frac{d-2}{2}})},
 $$
 where $\Gamma(\cdot) $ denotes the Gamma function. 
        The density $f$ corresponding to the minimum-variance $\psi$ is given by $f(\x) = \rho^d \tilde{f}(\rho\x)$,
 where 
 	\begin{equation}
 	\tilde{f}(\x) = c_{d}\left( \frac{\Omega_{d}(\Vert \x \Vert / 2)}{j_{\frac{d-2}{2}}^{2} - (\Vert \x \Vert / 2)^{2} } \right)^2,
 	\label{eq:minvar_pdf}
 	\end{equation}
 where 
  \markchange{$$
 	c_{d} = \frac{4j_{\frac{d-2}{2}}^{2}}{4^{d} \pi^{d/2} \Gamma(d/2)}.
 $$}
  	\label{thm:minvar}
 \end{theorem}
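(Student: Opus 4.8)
The plan is to transport the problem to the Fourier side, where it becomes the variational characterization of the lowest Dirichlet eigenvalue of the Laplacian on a ball. Write $f = \hat\psi \ge 0$ for the density underlying $\psi$, so that $-\nabla^2\psi(\0)$ is the second moment $\int_{\R^d}\|\x\|^2 f(\x)\,d\x \in [0,\infty]$; if this is infinite the inequality (\ref{eq:wpsi}) is vacuous, so we may assume it finite, whence $\psi$ is $C^2$ at $\0$. First, the substitution $\psi \mapsto \psi(\rho\,\cdot)$ turns a characteristic function vanishing for $\|\t\|\ge\rho$ into one vanishing outside the closed unit ball and multiplies $-\nabla^2\psi(\0)$ by $\rho^2$, so it suffices to treat $\rho=1$, i.e.\ to show $-\nabla^2\psi(\0)\ge 4 j_{\frac{d-2}{2}}^2$ with equality iff $\psi=\wpsi$. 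Second, averaging $\psi$ over the rotation group $\mathrm{SO}(d)$ (with respect to Haar measure) yields a \emph{radial} characteristic function, still supported in the unit ball, with the same value of $-\nabla^2\psi(\0)$ since the Laplacian at $\0$ is rotation invariant; so we may assume $\psi$ radial.

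The crux is a convolution-root factorization: a continuous radial positive-definite $\psi$ supported in the closed unit ball can be written as $\psi = g\,\tilde{*}\,g$ for some $g\in L^2(\R^d)$ supported in the ball $B(\0,1/2)$, where $g\,\tilde{*}\,g(\t)=\int g(\s)\overline{g(\s+\t)}\,d\s$. For $d=1$ this is the classical Fej\'er--Krein factorization; the $d\ge 2$ radial case is precisely the delicate content of \cite{EGR04}. Normalizing, $\|g\|_2^2=\psi(\0)=1$. Taking Fourier transforms turns this into $f = |\hat g|^2$ up to the usual constant, and hence, by Plancherel,
\[
-\nabla^2\psi(\0) = \int_{\R^d}\|\x\|^2 f(\x)\,d\x = \int_{\R^d}\|\x\|^2\,|\hat g(\x)|^2\,d\x = \|\nabla g\|_2^2 .
\]
Since $g\in H^1_0(B(\0,1/2))$ with $\|g\|_2=1$, the Rayleigh quotient bound gives $\|\nabla g\|_2^2\ge\lambda_1$, the first Dirichlet eigenvalue of $-\Delta$ on $B(\0,1/2)$, which classically equals $(2 j_{\frac{d-2}{2}})^2 = 4 j_{\frac{d-2}{2}}^2$, the ground state being the radial function $\|\x\|^{-(d-2)/2}J_{\frac{d-2}{2}}(2 j_{\frac{d-2}{2}}\|\x\|)$ on $B(\0,1/2)$. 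Undoing the scaling yields (\ref{eq:wpsi}).

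For equality, the first Dirichlet eigenvalue of a ball is simple, so equality in the Rayleigh bound forces $g$ to be a scalar multiple of the ground state; after the normalization $\|g\|_2=1$ this is exactly $\omega_d$ as defined via $\Omega_d$ and $\gamma_d$, and then $\psi = g\,\tilde{*}\,g = \omega_d\,\tilde{*}\,\omega_d = \wpsi$ (the unimodular phase cancels in the folded self-convolution). One still has to exclude non-radial extremizers: since the $\mathrm{SO}(d)$-symmetrization is only value-preserving, not strictly decreasing, this needs a short additional step --- e.g.\ observing that any extremizer necessarily possesses a convolution root and is therefore radial by uniqueness of the ground state. Finally, the stated formula (\ref{eq:minvar_pdf}) for the minimizing density is obtained by computing $|\hat\omega_d|^2$ explicitly, i.e.\ the Fourier transform of the ball's ground state, which has the displayed Bessel form with the pole factor $j_{\frac{d-2}{2}}^2-(\|\x\|/2)^2$ in the denominator.

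The main obstacle is the convolution-root step in dimension $d\ge 2$: unlike the one-dimensional case, a radial positive-definite function supported in a ball need not obviously be a folded self-convolution of something supported in the half-radius ball, and establishing this --- or circumventing it by a direct Euler--Lagrange analysis of the extremal problem, combined with a compactness argument to guarantee an extremizer exists --- is the heart of the matter and the reason the result is attributed to \cite{EGR04}.
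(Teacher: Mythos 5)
This statement is imported verbatim from the cited reference (it is Theorem~5.1 of \cite{EGR04}); the paper gives no proof of it, so there is nothing internal to compare against. Your sketch does reconstruct the route actually taken in \cite{EGR04}: scale to $\rho=1$, radialize, factor $\psi=g\,\tilde*\,g$ with $\supp g$ in the half-radius ball, pass through Plancherel to the Dirichlet Rayleigh quotient on $B(\0,1/2)$, whose first eigenvalue is $4j_{\frac{d-2}{2}}^2$ with the Bessel ground state; and you correctly identify the convolution-root factorization for radial positive definite functions in $d\ge 2$ as the hard analytic content. As a proof, however, it is an outline rather than a complete argument: that factorization is precisely the main theorem of \cite{EGR04} and is taken as a black box, and one must also check that $g\in L^2$ with $\|\nabla g\|_2<\infty$ actually lies in $H^1_0$ of the open half-ball (true for a ball, but it needs saying) before the Rayleigh bound applies.

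The one place where your proposed repair would not work as stated is the equality case for non-radial $\psi$. You suggest that ``any extremizer necessarily possesses a convolution root and is therefore radial,'' but the convolution-root theorem of \cite{EGR04} is specifically a theorem about \emph{radial} positive definite functions; for $d\ge 2$ there exist compactly supported positive definite functions with no convolution root of half-support (this is exactly why the radial case was the open problem that \cite{EGR04} resolved), so you cannot invoke existence of a root for an arbitrary extremizer. Since the functional is linear and the constraint set convex, radialization only shows that the rotational average of an extremizer equals $\wpsi$; concluding that the extremizer itself is $\wpsi$ requires the finer argument carried out in \cite{EGR04}, and your sketch leaves that gap open (as you acknowledge). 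Everything else --- the identification $-\nabla^2\psi(\0)=\int\|\x\|^2f(\x)\,d\x$, the eigenvalue $(j_{\frac{d-2}{2}}/(1/2))^2$, and the form of the extremal density as $|\hat\omega_d|^2$ --- is correct.
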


\begin{remark}
     Observe that Theorem~\ref{thm:L_thm} is true for any nested lattice pair $(\L,\L_0)$. As long as $\psi(\t)$ is a characteristic function supported within \markchange{$\cV(\hat{\L}_0)$},
we have an encoding scheme that satisfies (S1)--(S3). If we restrict $\psi$ to be supported within a ball of radius $\rho$, which is contained within $\cV(\hat{\L}_0)$, then Theorem~\ref{thm:minvar}
gives us a suitable candidate for $\psi$ that can be used to obtain perfect secrecy. Since we are interested in minimizing the transmission power, we can choose $\rho$ to be as large as $\rpack(\hat{\L}_0)$,
where $\rpack(\hat{\L}_0)$ denotes the packing radius of $\hat{\L}_0$. Hence, we now have a coding scheme that achieves perfect secrecy for any arbitrary nested lattice pair.
This is rather interesting, since earlier work on weak and strong secrecy using lattices~\cite{HeYener,HeYenerstrong,Ling13} invariably required that the nested lattices satisfy certain goodness properties.
\markchange{Therefore, ours is an explicit scheme which specifies, for any nested lattice pair, a distribution to be used for randomization at the encoder in order to obtain perfect secrecy. 
In particular, our randomization scheme can also be used in conjunction with ``practical'' lattice coding schemes (e.g., \cite{diPietro13,Sommer,Yan13}) that have low decoding complexity.} 
\end{remark}

\section{The Gaussian Noise Setting}\label{sec:gnoise}

Given any nested lattice pair, we now have a scheme whereby the relay can compute $X\oplus Y$ from $\U+\V$, but cannot determine $X$ or $Y$ separately.  We next consider the scenario where the symbols received by the relay are corrupted by noise, and prove the achievability of the power-rate pairs described in Theorem~\ref{thm:perfect_main}. Recall that in the MAC phase, the relay receives
\[
 \W=\U+\V+\bZ,
\]
 where $\bZ$ is zero-mean iid Gaussian noise with variance $\sigma^{2}$. The coding scheme that we use is largely based on the work in~\cite{Erez04,Nazer11}, and is described below.

\subsection{Coding Scheme for Perfect Secrecy} \label{sec:scheme_perfect}
We now describe the sequence of $(d,M^{(d)})$ (recall Definition~\ref{def:code}) codes that achieve perfect secrecy.

\noindent {\underline{Code}}: A $(\L^{(d)},\L^{(d)}_0)$ \emph{nested lattice code} consists of a pair of full-rank nested lattices $\L^{(d)}_0 \subseteq \L^{(d)}$ in $\R^d$. The messages are chosen from the group $\Gp^{(d)} = \L^{(d)}/\L^{(d)}_0$, whose $M^{(d)}:= |\L^{(d)}/\L^{(d)}_0|$ elements are listed as $\L_0,\L_1,\ldots,\L_{M^{(d)}-1}$.

\noindent {\underline{Encoding}}: We have messages $X,Y$ at nodes $\mathtt{A},\mathtt{B}$ that are independent rvs, uniformly distributed over $\Gp^{(d)}$.
 We first pick a characteristic function $\psi$ supported within $\cV(\hat{\L}^{(d)}_0)$, 
as needed in Theorem~\ref{thm:L_thm}. 
We impose the restriction that $\psi$ be supported within a ball centered 
at $\0$ with radius equal to the packing radius, $\rpack(\hat{\L}^{(d)}_0)$, 
of the dual lattice $\hat{\L}^{(d)}_0$. Recall that the packing radius is, 
by definition, the largest radius of a ball centered at $\0$ 
that is contained within $\cV(\hat{\L}^{(d)}_0)$. So, if
 $\psi(\t) = 0$ for $\|\t\| \ge r_{\text{pack}}(\hat{\L}_0)$, 
then $\psi(\t)$ is certainly supported within $\cV(\hat{\L}_0)$. 
If $X = \L_j$, node $\mathtt{A}$ transmits a random vector $\U \in \L_j$
 picked according to the distribution $p_j$ of Theorem~\ref{thm:L_thm}. 
Similarly, if $Y = \L_k$, node $\mathtt{B}$ transmits a random
 vector $\V \in \L_k$ picked according to the distribution $p_k$. The rate of 
transmission from $\mathtt{A}$ or 
$\mathtt{B}$ is $R^{(d)}=\frac{1}{d} \log_2 M^{(d)}$. 
The average transmit power per dimension at each node is $P^{(d)} = \frac{-\Delta \psi(\0)}{d}$, as
 in Theorem~\ref{thm:L_thm}. 

From Theorem~\ref{thm:minvar}, we see that an 
average transmit power per dimension as low as 
\begin{equation}
P^{(d)} = \frac{4j_{\frac{d-2}{2}}^2}{d\left(\rpack(\hLcd)\right)^2}, 
\label{eq:Pd}
\end{equation}
 is achievable by a suitable choice of $\psi$. It was shown in \cite{BESSEL2} (see also \cite{BESSEL1}) that the first positive zero of the Bessel function $J_k$ can be written as $j_k = k+ bk^{1/3} + \mathcal{O}(k^{-1/3})$, where $b$ is a constant independent of $k$. Therefore,  
\begin{equation}
  \label{eq:power}
P^{(d)} =  \frac{d}{{\rpack}^2(\hat{\L}^{(d)}_0)}(1+o_d(1)),
\end{equation}
where $o_d(1)\to0$ as $d\to\infty$, is achievable by a suitable choice of $\psi$ using Theorem \ref{thm:minvar}.


\begin{figure}[t]
 \centerline{\resizebox{10cm}{!}{\input{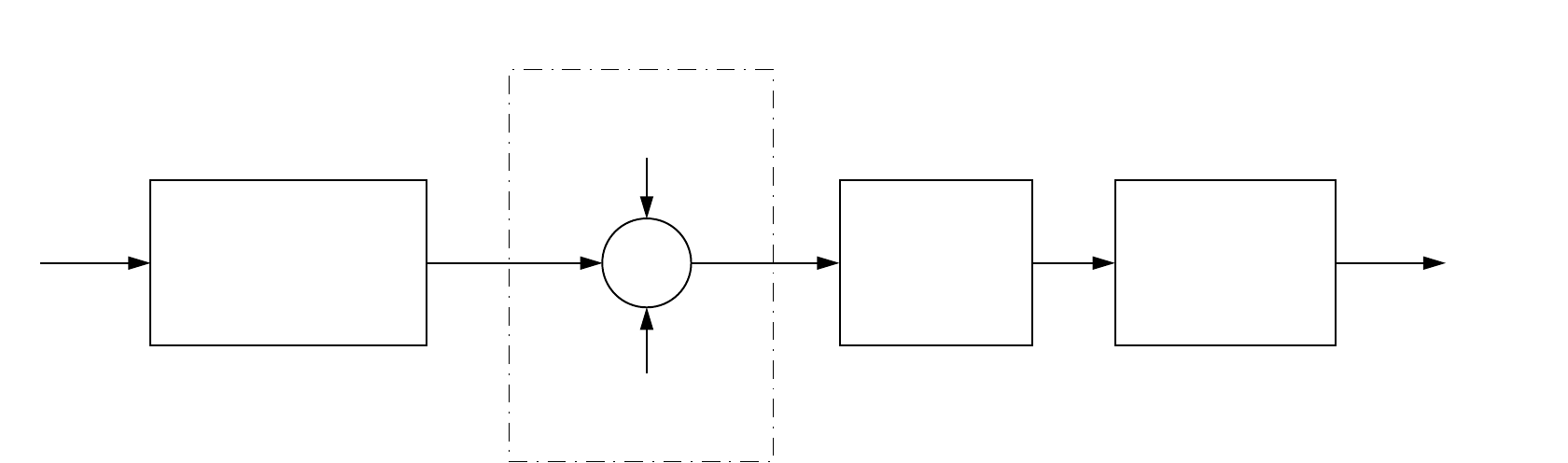_t}}}
 \caption{The operations performed by the user nodes and the relay.}
 \label{fig:relayop}
\end{figure}

\noindent {\underline{Decoding}}: The relay $\mathtt{R}$ receives $\W = \U + \V + \bZ$, where $\bZ$ is a Gaussian noise vector with $d$ independent $\mathcal{N}(0,\sigma^2)$ components, which are all independent of $\U$ and $\V$. The relay estimates $\L_j \oplus \L_k$ to be the coset of $\Lcd$ represented by \markchange{$Q_{\Lfd}(\W)$,} the closest vector to $\W$ in the lattice $\Lfd$. \markchange{The decoder mapping is denoted by $\cD(\cdot)$.}
 


\noindent {\underline{Security}}: Since the noise $\bZ$ is independent of everything else, Theorem~\ref{thm:L_thm} shows that $\W$ is independent of the individual messages $X,Y$. Hence, even in the noisy setting, perfect security continues to be guaranteed at the relay for any choice of the nested lattice code. 
\markchange{It is worth reiterating that perfect secrecy can be guaranteed irrespective of the noise $\bZ$. The distribution of $\bZ$ only determines the reliability of decoding, which in turn influences the power-rate pairs achievable with perfect secrecy.}

\noindent {\underline{Reliability and achievable power-rate pairs}}: Let $\Pe$ denote the average probability that $Q_\L(\W)$ is different from the coset to which $\U+\V$ belongs. From Definition~\ref{def:pr_pairsperfect}, a pair $(\cP,\cR)$ is achievable if for every $\delta>0$, there exists a sequence of nested lattice codes $(\Lf^{(d)},\Lc^{(d)})$ for which the following hold for sufficiently large $d$: $\rte^{(d)}> \cR-\delta$, $\pow^{(d)}<\cP+\delta$ and $\Pe<\delta$. 

\markchange{For a given nested lattice pair, Theorem~\ref{thm:minvar} gives us the minimum average transmit power per dimension that guarantees perfect secrecy (subject to the condition that the characteristic function is supported within a ball of radius $\rpack(\hLcd)$), and the pmf $p_j$ that achieves the minimum. The choice of the nested lattices affects the reliability of decoding $X\oplus Y$ at the relay, and consequently determines achievable transmission rates. 
}
To guarantee secure and reliable computation at the relay, we restrict the class of  nested lattice pairs $( \Lfd,\Lcd)$  to those which satisfy the following \markchange{``goodness''} properties\footnote{For definitions of lattices good for covering, packing, and AWGN channel coding, the reader is directed to Appendix~C.}:
\begin{enumerate}
 \item[$(G_1)$] The sequence of coarse lattices, $\{\Lcd\}$, is good for covering and AWGN channel coding.
 \item[$(G_2)$] The sequence of dual lattices, $\{\hLcd\}$, is good for packing.
 \item[$(G_3)$] The sequence of fine lattices, $\{\Lfd\}$, is good for AWGN channel coding.
\end{enumerate}
Unlike prior work on nested lattices~\cite{Agrawal09,Erez04,Nazer11,Nitinawarat12} which only required $\{\Lcd\}$ and $\{\Lfd\}$ to satisfy properties $(G_1)$ and $(G_3)$ above, we have the additional requirement  that the \markchange{ sequence of Fourier duals}, $\{\hLcd\}$ must be good for packing. \markchange{While it is well established that there exist nested lattices satisfying $(G_1)$ and $(G_3)$~\cite{Erez04,Erez05,Nazer11}, it turns out that the duals of most of these lattices also satisfy the goodness properties. In the next section, we will formally describe an ensemble of lattices, also studied in~\cite{Erez04,Nazer11}, and show that most of the lattices in this ensemble satisfy all the above properties.} 
 \subsection{Good Ensembles of Nested Lattices with Good Duals}\label{sec:dkq_ensemble}
\markchange{ Our description of the construction of the $(\Lfd,\Lcd)$ nested lattice codes is based on~\cite{Erez04,Nazer11}}.
 Let $d$ and $k$ be positive integers with $k\leq d$, and let $q$ be a prime number.  Let $\Z_q$ denote the field of integers modulo $q$.
 \markchange{The $(d,k,q)$ ensemble of lattices (in the terminology of \cite{Erez05}) is used in the construction. A lattice from the $(d,k,q)$ ensemble is sampled as follows: }
\markchange{
\begin{enumerate} 
\item Choose a $k\times d$ matrix  $\mathsf{G}$ with entries from $\Z_q$ uniformly at random. 
 Note that $\mathsf{G}$ need not be full-rank. However, the probability that $\mathsf{G}$ is full-rank goes to 1 as $(d-k)$ tends to $\infty $~\cite{Erez05}. The linear code over $\Z_q$ generated by $\mathsf{G}$ is denoted by $\cC(\mathsf{G})=\{(\mathsf{G}^{T}\y)\bmod q: \y \in \Z_{q}^{k}\}$.
\item Apply Construction A on the code $\cC(\mathsf{G})$. This is done as follows:
\begin{itemize} 
\item[($c_1$)] The codebook is  scaled so that the scaled codewords lie within the $d$-dimensional unit cube: $\cC'=(1/q)\cC(\mathsf{G})=\{(1/q)\x: \x\in \cC(\mathsf{G})\}$.
\item[($c_2$)] The lattice is obtained by tessellating the entire space, $\R^{d} $, with copies of $\cC'$, i.e., 
${\L}(\cC)=\cC'+\Z^{d} := \{ \mathbf{c} + \x : \mathbf{c}\in \cC', \x \in \Z^{d} \} $.
\end{itemize}
\end{enumerate}
}

From the construction, it is clear that $\Z^{d}$ is a sublattice of ${\L}(\cC)$. More detail regarding  Construction-A lattices can be found in~\cite{Conway}. \markchange{We would like to make note of one important property of these lattices: if the generator matrix of a Construction-A lattice $\L$ has rank $d$, then the effective radius of $\L$ is given by~\cite{Erez05}
\begin{equation}
  \reff(\L)=\left(  \frac{\Gamma\left(\frac{d}{2}+1\right)}{\pi^{d/2}q^k}\right)^{1/d}.
\label{eq:reff_constA}
\end{equation}
}


Choose a sequence of coarse lattices $\{ \Lcd\}$, each 
$\Lcd$ selected uniformly at random from the $(d,k,q)$ ensemble, where $k$ and $q$ may be functions of $d$ chosen beforehand. 
\markchange{For $d\in\{ 1,2,3,\ldots\}$,} let $\mathsf{A}^{(d)}$ be the generator matrix of the coarse lattice 
$\Lcd$. For this choice of 
$\{ \Lcd\}$, we construct another ensemble of lattices from which we pick the sequence of fine lattices $\{\Lfd\}$. This consists of two steps: 
\begin{itemize}
 \item[($f_1$)] Choose a sequence of lattices, $\{\tilde{\L}_{f}^{(d)}\}$, with each $\tilde{\L}_{f}^{(d)}$ coming  from the $(d,k_{1},q_1)$ ensemble of Construction-A lattices. 
As mentioned earlier, $\tilde{\L}_{f}^{(d)}$ contains $\Z^{d}$ as a sublattice. If the generator matrix of  $\tilde{\L}_{f}^{(d)}$ has full rank, then the number of cosets of $\Z^{d}$ in $\tilde{\L}_{f}^{(d)}$ is $q_{1}^{k_1}$.
 \item[($f_2$)] The lattice $\tilde{\L}_{f}^{(d)}$ is subjected to a linear transformation by the matrix $(\mathsf{A}^{(d)})^{T}$, to get
$\Lfd=(\mathsf{A}^{(d)})^{T}\tilde{\L}_{f}^{(d)}:= \{ (\mathsf{A}^{(d)})^{T}\y : \y\in \tilde{\L}_{f}^{(d)} \}$ .
\end{itemize}
We will call this ensemble of 
$(\Lfd,\Lcd)$
pairs as the \emph{$(d,k,q,k_1,q_1)$ ensemble}. The lattice pair can be scaled appropriately so as to satisfy the average power constraint. We have $M^{(d)}=|\Lfd/\Lcd|=q_{1}^{k_1}$ with probability tending to $1$ as $d-k$ tends to $\infty$~\cite{Nazer11}. Hence, the rate of the $(\Lfd,\Lcd)$ code  will be
\begin{equation}
 \rte^{(d)}=\frac{k_1}{d}\log_{2}(q_1).
\label{eq:rte_constA}
\end{equation}

 We choose 
\begin{equation}
      k=\beta_0d,\;\;\text{ and }\;\; k_1=\beta_1 d,
\label{eq:k_k1}
\end{equation}
 for some $0< \beta_0,\beta_1 < 1/2$, and $q$ and $q_1$ are prime numbers chosen such that
\begin{equation}
    \lim_{d\to\infty}\frac{d}{q_1} = 0, \;\;\text{ and }\;\;   r_{\text{min}}^{(0)}<\reff(\Lcd)<2r_{\text{min}}^{(0)} ,
\label{eq:rmin_constraints}
\end{equation}
for some $0<r_{\text{min}}^{(0)}<1/4$.
It is possible to choose primes that satisfy the above conditions, and we direct the interested reader to~\cite{Erez05} for the details. We then have the following lemma, which is proved in Appendix~D.
\begin{lemma}
 Let 
$(\Lfd,\Lcd)$
 be a nested lattice pair chosen uniformly at random from the $(d,k,q,k_1,q_1)$ ensemble, with the parameters $k,q,k_1,q_1$ chosen so as to satisfy (\ref{eq:k_k1}) and (\ref{eq:rmin_constraints}). Then, the probability that 
$(\Lfd,\Lcd)$
 satisfies $(G_1)$--$(G_3)$ tends to one as $d$ approaches infinity.
\label{lemma:goodlattice1}
\end{lemma}

\subsection{Achievable Rates} \label{sec:ach_rate}

We now find achievable transmission rates for reliable and secure computation of $X\oplus Y$ at the relay.
The analysis closely follows that in~\cite{Erez04,Nazer11, NazerProc}. 
\markchange{As defined in Section~\ref{sec:scheme_perfect},
let $\cD(\W)$ be the estimate of $X\oplus Y$ made by the relay; to be precise, $\cD(\W)$ is the coset of $\Lcd$ to which $Q_{\Lfd}(\W)$ belongs.}
This is the same as the coset represented by $ Q_{\Lfd}([\W]\bmod \Lcd)$.

\begin{figure}
     \begin{center}
          \resizebox{9cm}{!}{\input{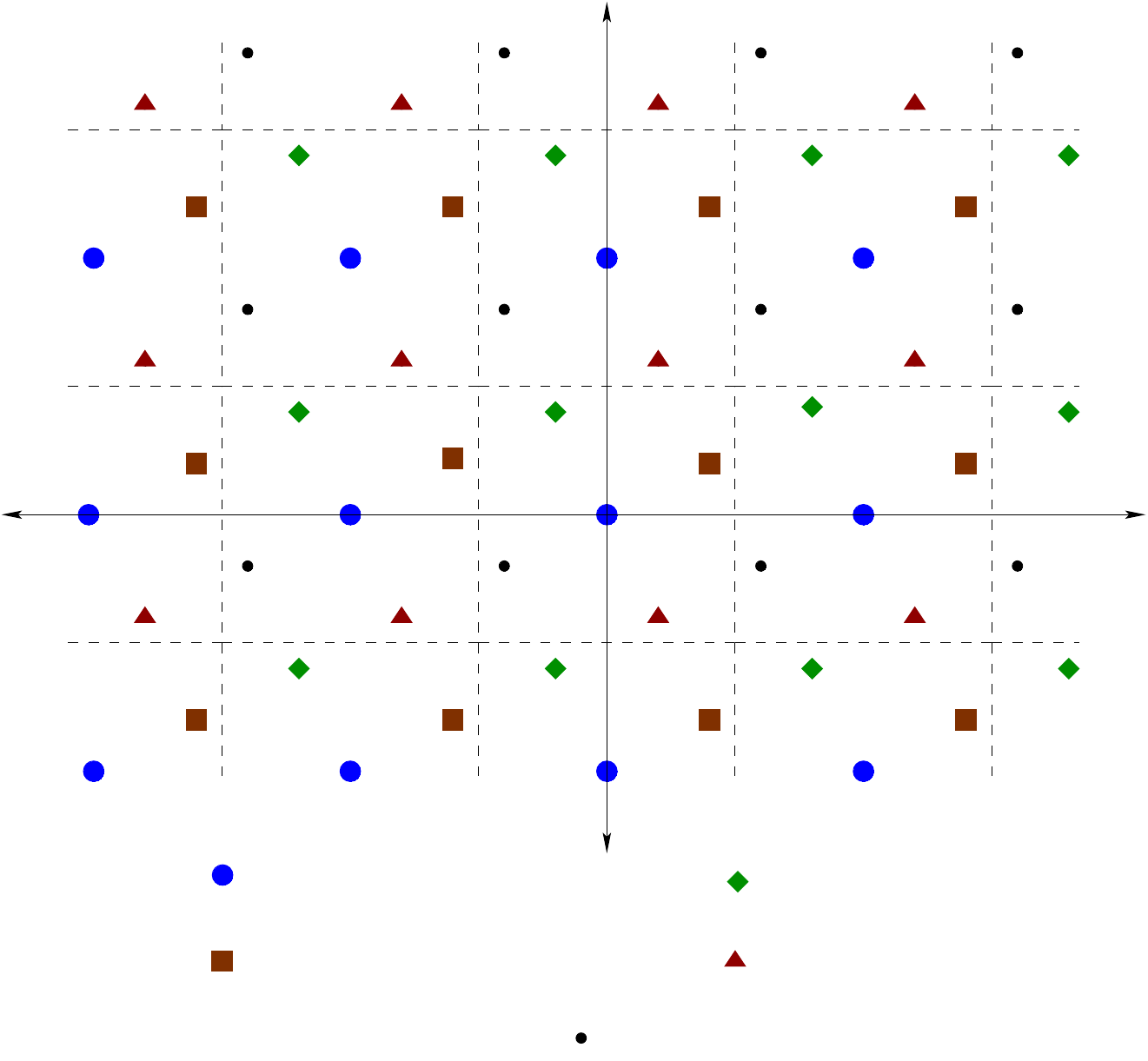_t}}
	  \caption{Different cosets of $\L_0$ in $\L$. The coset representative of $\L_j$ within $\cV(\Lc)$ is $\lambda_j$.}
	  \label{fig:cosetrep}
     \end{center}
\end{figure}

Each lattice point in $\Lfd\cap \cV(\Lcd)$ is a coset representative for  a coset of $\Lcd$ in $\Lfd$.
This is illustrated in Fig.~\ref{fig:cosetrep}.
Suppose that $\L_j$ and $\L_k$ are the cosets which represent the messages $X$ and $Y$, respectively.
Let $ \mathbf{X} = [\U]\bmod\Lcd $ and 
$\mathbf{Y}=[\V]\bmod\Lcd$ be the coset representatives of $ \L_j $ and $ \L_k $, respectively. Then, $\L_j\oplus \L_k$ has  $[\mathbf{X}+\mathbf{Y}]\bmod \Lcd$ as its representative.
Therefore, the estimate $\cD(\W)$ has  $\widehat{\W}=[Q_{\Lfd}(\W)]\bmod \Lcd$ as its coset representative.
This is equal to $\widehat{\W}=[Q_{\Lfd}([\W]\bmod \Lcd)]\bmod \Lcd$. Let $\widetilde{\W}=[\W]\bmod \Lcd$. Then, $\widehat{\W}=[Q_{\Lfd}(\widetilde{\W})]\bmod \Lcd$.
As a consequence of the transmitter-receiver operations, the ``effective'' channel from $ \mathbf{X}, \mathbf{Y} $ to $ \widetilde{\W} $ can be written as follows~\cite{Nazer11}: 
	\begin{align} 
	\widetilde{\W} & = [\U+\V+ \bZ] \bmod \Lcd & \notag \\
	&		 = \left[ \left( [\U+\V] \bmod \Lcd \right) +\bZ\right]\bmod \Lcd & \notag \\
	&		 = \left[ \left( [\mathbf{X}+\mathbf{Y}] \bmod \Lcd \right) +\bZ\right]\bmod \Lcd . & \notag 
	\end{align} 

\begin{figure}[t]
 \begin{center}
  \resizebox{9cm}{!}{\input{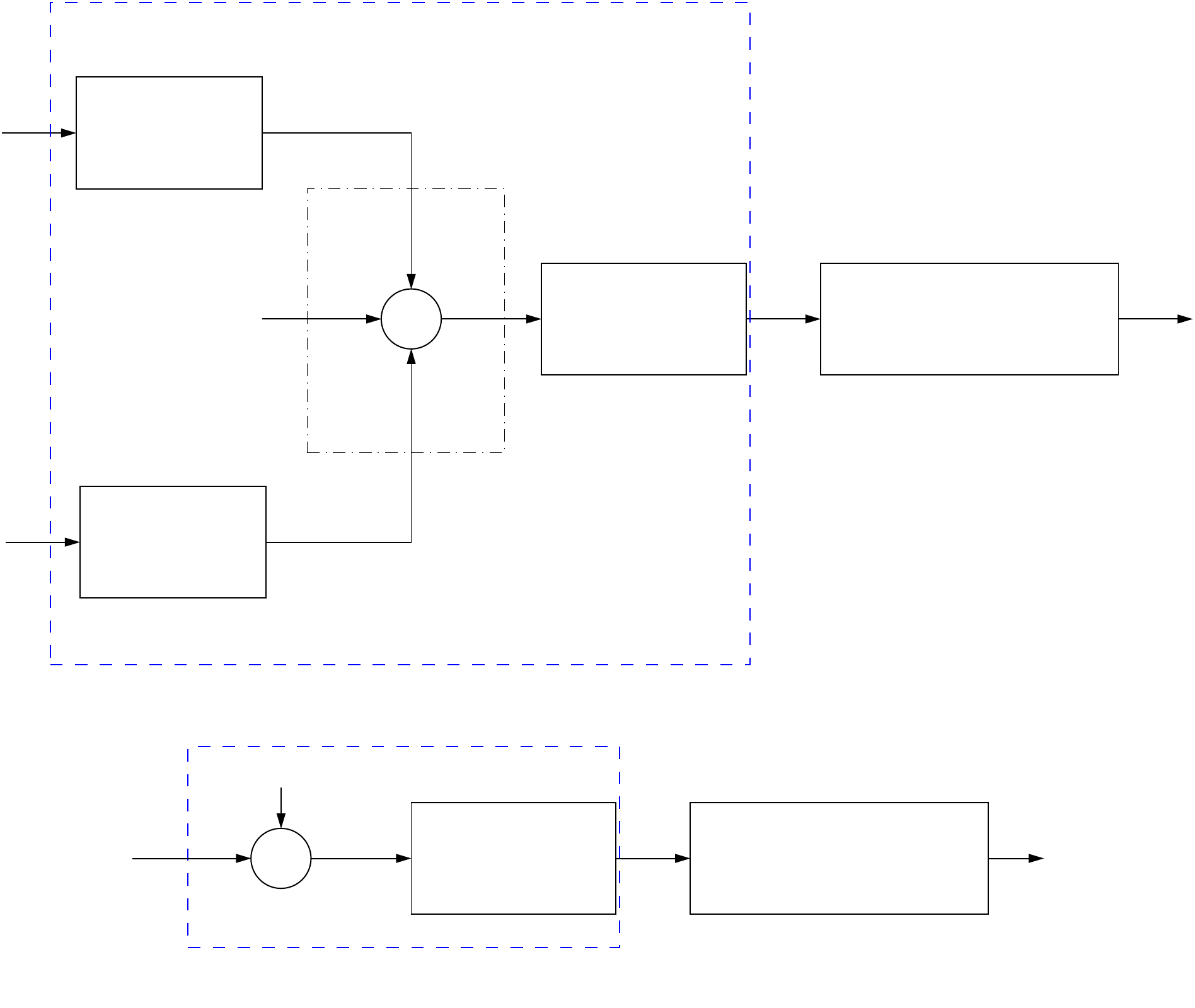_t}}
 \end{center}
 \caption{MAC phase of the bidirectional relay and equivalent MLAN channel representation.}
 \label{fig:channelrep}
\end{figure}

	A channel of the form $ \W = [\mathbf{X} +\mathbf{N}]\bmod \Lcd $, where $\mathbf{N}$ denotes the noise vector, is called a $\Lcd$-modulo lattice additive noise ($\Lcd$-MLAN) channel~\cite{Erez04}. 
        The random variable $\widetilde{\W}$ behaves like the output of a point-to-point transmission over 
        a $\Lcd$-MLAN channel, with the transmitted vector being $ [\mathbf{X}+\mathbf{Y}]\bmod \Lcd $.
	Looking from $\widetilde{\W}$, the ``effective'' channel is a $\Lcd$-MLAN channel, and the relay has to decode $ [\mathbf{X}+\mathbf{Y}]\bmod \Lcd$ reliably from $\widetilde{\W}$.
        This is illustrated in Fig.~\ref{fig:channelrep}. 
       We will use the properties of the $\Lcd$-MLAN channel to determine achievable rate regions for our coding scheme. 

  We choose a sequence of nested lattice pairs that satisfy $(G_1)$--$(G_3)$, with each nested lattice pair coming from a $(d,k,q,k_1,q_1)$ ensemble, where $k,q,k_1$ and $q_1$ satisfy (\ref{eq:k_k1}) and (\ref{eq:rmin_constraints}). Using the coding scheme of Section~\ref{sec:scheme_perfect}, we can achieve perfect secrecy. \markchange{The proposition below provides us with the means of determining the rates achievable with this coding scheme.}

\markchange{
\begin{proposition}
Let $\textsf{M} > 0$ be a constant, and $\{ \Lfd,\Lcd \}$ be a sequence of nested lattice pairs that satisfy $(G_1)$--$(G_3)$, and scaled so as to satisfy $\reff(\Lcd)=\sqrt{d\textsf{M}}$. Then, using the coding scheme of Section~\ref{sec:scheme_perfect} with this sequence of nested lattice pairs, any rate less than $\frac{1}{2} \log_2\left(\frac{\textsf{M}}{\sigma^{2}}\right)$ is achievable with perfect secrecy.
\label{prop:rel}
\end{proposition}
}

\markchange{The proposition can be proved along the same lines as~\cite[Theorem 4]{Erez04}; we omit the details.}

\subsection{Relating Achievable Rates to Transmit Power} \label{sec:rate_power}

From (\ref{eq:power}), we know that as long as the average transmit power per dimension is less than $\left(d/\rpack^2(\hLcd)\right)(1+o_d(1))$, we 
can guarantee perfect secrecy at the relay. From Proposition~\ref{prop:rel}, we see that as long as the transmission rate is less than $\frac{1}{2}\log_2(\reff^2(\Lcd)/(d\nsvar))$,
the relay can reliably compute $X\oplus Y$ from $\W$. 
In order to achieve positive rates, we need $\reff(\Lcd)$ to grow at least as fast as $\sqrt{d}$, i.e., $\reff(\Lcd)=\Omega(\sqrt{d})$. 
Furthermore, to satisfy an average power constraint, we require $\rpack(\hLcd)=\Omega(\sqrt{d})$. The rate is an increasing function of $\reff(\Lcd)$, and the average transmit power  per dimension is a decreasing function of $\rpack(\hLcd)$. Since we want to maximize the rate for a given power constraint, we would like both $\reff(\Lcd)$ and $\rpack(\hLcd)$ to be as large as possible. However, for any lattice $\Lcd$, we have $\rcov(\Lcd)\rpack(\hLcd)\leq\pi d$~\cite[Theorem 18.3]{Barvinok}, and since \markchange{$\reff(\Lcd)\leq \rcov(\Lcd)$}, we get $\reff(\Lcd)\rpack(\hLcd)\leq\pi d$. Hence, to obtain positive rates and at the same time satisfy the power constraint, 
both $\reff(\Lcd)$ and $\rpack(\hLcd)$ must grow roughly as  $\sqrt{d}$. Therefore, we seek lattices satisfying properties $(G_1)$--$(G_3)$, for which the product $\reff(\Lcd)\rpack(\hLcd)$ is close to the upper bound of $\pi d$.


For a sequence of Construction-A coarse lattices satisfying $(G_1)$ and $(G_2)$, we can find an asymptotic lower bound for $(1/d)\reff(\Lcd)\rpack(\hLcd)$,\footnote{The product $\reff(\Lcd)\rpack(\hLcd)$ is invariant to scaling of $\Lcd$. This is because, for a constant $\alpha>0$, $\reff(\alpha\Lcd)=\alpha\reff(\Lcd)$, and if $\L'=\alpha\Lcd$, then the Fourier dual of $\L'$ is $(1/\alpha)\hLcd$.} as the following theorem shows.
\begin{lemma}
  Let $\{\Lcd\}$ be a sequence of coarse lattices, with each $\Lcd$ chosen from a $(d,k,q)$ ensemble and $k,q$ satisfying (\ref{eq:k_k1}) and (\ref{eq:rmin_constraints}). If $\{ \Lcd \}$ satisfies conditions $(G_1)$--$(G_2)$, then,
\begin{equation}
      \lim_{d\to\infty}\frac{\reff(\Lcd)\rpack(\hLcd)}{d}\ge \frac{1}{2e}.
\end{equation}
\label{lemma:goodlattice2}
\end{lemma}
\begin{proof}
     See Appendix~E.
\end{proof}

\subsection{Proof of Theorem~\ref{thm:perfect_main}}
Let us choose $\reff(\Lcd)= \frac{1}{2e}\sqrt{d\cP}$, for a constant \markchange{$\cP>4e^2\nsvar$}. Fix a $\delta>0$. Using Lemma~\ref{lemma:goodlattice2}, we see that 
\begin{equation}
     \rpack(\hLcd)\geq \frac{d}{2e\reff(\Lcd)}(1-o_d(1))\geq \frac{\sqrt{d}}{\sqrt{\cP}}(1-o_d(1)).  
     \label{eq:rpackbound1}
\end{equation} 
From (\ref{eq:power}), we see that perfect secrecy can be achieved with an average power constraint as low as $P^{(d)}=\left(d/\rpack^2(\hLcd)\right)(1+o_d(1))$. 
Combining this and (\ref{eq:rpackbound1}), perfect secrecy can be achieved with an average transmission power,
\begin{equation}
     P^{(d)}<\cP+\delta
     \label{eq:pdbound1}
\end{equation}
for all sufficiently large $d$.
From Proposition~\ref{prop:rel}, we have seen that the average probability of error can be made to go down to zero as long as 
\begin{equation}
     R^{(d)}<\cR:=\frac{1}{2}\log_2\frac{\cP}{(2e)^2\nsvar}.
     \label{eq:rdbound1}
\end{equation}
Therefore, for every $\delta>0$, we can choose a sequence of nested lattice codes such that for all sufficiently large $d$, we have $R^{(d)}>\cR-\delta$, $P^{(d)}<\cP+\delta$ and $\Pe<\delta$.
Hence, a power-rate pair of 
\[
      \left( \cP, \left[\frac{1}{2}\log_2\frac{\cP}{\nsvar}-\log_2 2e\right]^{+} \right)
\]
is achievable with perfect secrecy, concluding the proof of Theorem~\ref{thm:perfect_main}.\qed

\section{Strong Secrecy}\label{sec:strongsecr}

%


\markchange{A natural question that arises is what happens if we replace $f$ in Theorem~\ref{thm:L_thm} by a density function for which the support of the characteristic function goes beyond $\cV(\hLcd)$. Can we obtain different secrecy properties by simply changing the density $f$?} Specifically, let $\psi(\t)$ be a characteristic function which is supported within a ball of radius $\rho>\rpack(\hLcd)$, and choose the characteristic function $\phi_{U|X=\x}(\t)=\sum_{\n\in\hLcd}\psi(\t+\n)e^{-i\langle \n,\x\rangle}$. Clearly, we cannot expect perfect secrecy, but
can we at least obtain strong secrecy? Let us take $\psi$ to be the characteristic function of the minimum-variance distribution in (\ref{eq:minvar_pdf}), with the support of $\psi$ chosen to be a ball of radius $\rho=\text{min}\{\reff(\hLcd),2\rpack(\hLcd)\}$.\footnote{If we have $\rho>2\rpack(\hLcd)$, then $\sum_{\n\in\hLcd}\psi(\t+\n)e^{-i\langle \n,\x\rangle}$ would have to be normalized to make \markchange{it} a characteristic function, and this makes analysis more complicated.} Doing so would give us an improved rate of \markchange{$\left[\frac{1}{2}\log_2\frac{\cP}{\nsvar}-\log_2 e\right]^{+}$}.
However, for such a coding scheme, we are only able to show that the $\ell^2$ norm of the difference between $p_{U+V,X}$ and $p_{U+V}p_X$ goes to zero as $d\to\infty$.
Knowing only that the $\ell^2$ norm of the difference between $p_{U+V,X}$ and $p_{U+V}p_X$ goes to zero as $d\to\infty$, we cannot conclude whether strong secrecy is obtained. In fact, by itself, the $\ell^2$ norm is not a good measure of secrecy. In any case, we will use a different approach to obtaining strong secrecy, and show that an even higher transmission rate of 
\markchange{$\left[\frac{1}{2}\log_2\left(\frac{1}{2}+\frac{\cP}{\nsvar}\right)-\frac{1}{2}\log_2 2e\right]^{+}$} is achievable.

Instead of using distributions with \markchange{compactly} supported characteristic functions, we will use a sampled Gaussian density for randomization at the encoders. Such a scheme was used in context of the wiretap channel in~\cite{Ling13}. \markchange{We will show that if a Gaussian pdf is used instead of a density $f$ having a compactly supported characteristic function, then we can obtain strong secrecy. It is interesting to note that the same basic coding scheme, but with a different pdf used for randomization, can give different secrecy properties.}

\subsection{The Gaussian Density}
We now introduce some notation that will be used in the sequel. Let $\L$ be a lattice in $\R^d$. For any $\x\in \R^d$, and any $\kappa>0$, \markchange{we define $g_{\kappa,\x}(\cdot)$ to be the Gaussian density with mean $\x$ and covariance matrix $\kappa^2\mathsf{I}_d$, i.e., $\forall \z\in \R^d$,}
\begin{equation}
      g_{\kappa,\x}(\z):= \frac{1}{(2\pi \kappa^2)^{d/2}}e^{-\frac{\Vert \z-\x \Vert^2}{2\kappa^2}}.
\label{eq:discretegauss}
\end{equation}
We also define
\begin{equation}
      g_{\kappa,\x}(\L):=\sum_{\lambda\in\L}g_{\kappa,\x}(\lambda).
\label{eq:discretegauss_lambda}
\end{equation}
We will use $g_{\kappa}(\z)$ and $ g_{\kappa}(\L)$ to denote $g_{\kappa,\mathbf{0}}(\z)$ and $ g_{\kappa,\mathbf{0}}(\L)$, respectively.

\subsection{Coding Scheme for Strong Secrecy}\label{sec:scheme_strong}

\noindent{\underline{Code:}} Following Section~\ref{sec:scheme_perfect}, we use a $(\Lfd,\Lcd)$ nested lattice code, with $\Lcd\subseteq \Lfd$.
As before, the messages are chosen from $\Gpd:=\Lfd/\Lcd$, and $\oplus $ is the addition operation on $\Gpd$. The $M^{(d)}:=|\Gpd|$ cosets of
$\Lcd$ in $\Lfd$ are denoted by $\L_0,\ldots,\L_{M^{(d)}-1}$.

\noindent{\underline{Encoding:}} For a coset $\L_j$ of $\Lcd$ in $\Lfd$, let $\mathbf{\lambda}_j$ denote its representative within $\cV(\Lcd)$ (see Fig.~\ref{fig:cosetrep} for an illustration). Fix a $\kappa>0$.
Corresponding to the message $\L_j$, the user node transmits a random lattice point from $\L_j$, according to the distribution
\begin{equation}
      p_{j}(\u)=\begin{cases}
			      \frac{g_{\kappa}(\u)}{g_{\kappa,-\mathbf{\lambda}_j}(\Lcd)} & \text{if } \u\in\L_j, \\
			      \mathbf{0} & \text{otherwise.}
                \end{cases}
\label{eq:pj_strong}
\end{equation}

\noindent{\underline{Decoding:}} \markchange{The relay computes the closest point in $\Lfd$ to the linear minimum mean-squared error (MMSE) estimate of the received vector, as in~\cite{Ling13,Erez04,Nazer11}, and the output of the decoder is the coset to which this point belongs. Let $\cmmse=\frac{2\kappa^2}{2\kappa^2+\nsvar}$ be the linear MMSE coefficient, and $\widetilde{\mathbf{W}}=[\cmmse\W]\bmod\Lcd$.
The estimate of $X\oplus Y$, denoted by $\cD(\W)$, is then the coset to which $Q_{\Lfd}(\widetilde{\mathbf{W}})$ belongs. }

\noindent{\underline{Achievable power-rate pair:}} A power-rate pair of $(\cP,\cR)$ is achievable if for every $\delta>0$, there exists a sequence of $(\Lfd,\Lcd)$
nested lattice codes \markchange{such} that for all sufficiently large $d$,
\begin{itemize}
      \item the average transmit power per dimension is less than $\cP+\delta$:
	    \[
		  P^{(d)}:=\frac{1}{d}\mathbb{E}\Vert \U \Vert^2 = \frac{1}{d}\mathbb{E}\Vert \V \Vert^2 <\cP+\delta;
	    \]
      \item the transmission rate is greater than $\cR-\delta$:
	    \[
	          R^{(d)}:=\frac{1}{d}\log_2 M^{(d)}>\cR-\delta;
	    \]
      \item the average probability of decoding $X\oplus Y$ incorrectly from $\W$ is less than $\delta$; and
      \item the mutual information between each message and $\U+\V$ is less than $\delta$:
	    \[
	          \mathcal{I}(X;\U+\V)=\mathcal{I}(Y;\U+\V)<\delta.
	    \]
\end{itemize}

In the next two subsections, we will prove that

\begin{theorem}
      A power-rate pair of 
\[
      \left( \cP,\left[\frac{1}{2}\log_2\left(\frac{1}{2}+\frac{\cP}{\nsvar}\right)-\frac{1}{2}\log_2 2e\right]^{+}\right)
\]
can be achieved with strong secrecy using the coding scheme of Section~\ref{sec:scheme_strong}.
\label{thm:strong_main}
\end{theorem}

\subsection{Strong Secrecy in the Absence of Noise}\label{sec:strongsec_noiseless}

We will first prove that the scheme described in the previous section achieves strong secrecy. Let us establish some more notation.
Let $p_{U+V}(\cdot)$ denote the distribution of $U+V$, and for any $\mathbf{x}\in \Lfd\cap\cV(\Lcd)$, let $p_{U+V|\x}(\cdot)$ denote
the distribution of $U+V$ conditioned on the event that $X$ is the coset to which $\mathbf{x}$ belongs. We will show that for every $\x$ in $\Lfd\cap\cV(\Lcd)$ the \emph{variational distance} (also called the total variation distance) between $p_{U+V}$ and $p_{U+V|\x}(\cdot)$, defined as\footnote{For probability measures $P_1$ and $P_2$ defined on a discrete alphabet $\cX$, the total variation distance between them is usually defined as $\mathbb{V}(P_1,P_2) := \sup_{A \subseteq \cX} |P_1(A) - P_2(A)|$. This can be shown to be equal to $\frac{1}{2} \sum_{x \in \cX} |P_1(x) - P_2(x)|$ (see e.g., \cite[Section~11.6]{Cover}). We have dropped the $\frac{1}{2}$ factor for simplicity.}
\begin{equation}
      \mathbb{V}(p_{U+V},p_{U+V|\x}):=\sum_{\w\in\Lfd}|p_{U+V}(\w)-p_{U+V|\x}(\w)|,
\end{equation}
 goes to zero exponentially in the dimension $d$. Therefore, the \emph{average variational distance} between the joint pmf of $U+V$ and $X$, and the product of the marginals,
\[
      \overline{\mathbb{V}}:= \sum_{\x\in \Lfd\cap\cV(\Lcd)}\frac{1}{|\Gpd|}\mathbb{V}(p_{U+V},p_{U+V|\x}),
\]
also goes to zero exponentially in $d$.
 We can then use the following lemma, which relates the mutual information and the variational distance.
\begin{lemma}[\cite{CN04}, Lemma 1]
      For $|\Gpd| \ge 4$, we have 
\begin{equation}
      \mathcal{I}(X;\U+\V)\leq \overline{\mathbb{V}}\left(\log_2|\Gpd|-\log_2(\overline{\mathbb{V}}) \right).
\label{eq:mi_vardist}
\end{equation}
 \label{lemma:csiszar}
\end{lemma}
Since  $|\Gpd|$ grows exponentially in $d$, it is sufficient to have $\overline{\mathbb{V}}$ going to zero as $o(1/d)$ for $\mathcal{I}(X;\U+\V)$ to go to zero.
\markchange{We will in fact show that $\overline{\mathbb{V}}$ can be made to go to zero exponentially in $d$, which} will guarantee that the mutual information also decays exponentially in $d$.
In order to have $\overline{\mathbb{V}}$ going to zero exponentially in $d$, we will require the coarse and fine lattices to satisfy certain properties.

For any lattice $\L$ in $\R^d$, and any $\theta>0$, the flatness factor $\epsilon_{\L}(\theta)$ is defined as~\cite{Ling13,Belfiore11}
\begin{equation}
      \epsilon_{\L}(\theta):=\frac{\max_{\x\in\cV(\L)}\vert \left( \sum_{\lambda\in \L}g_{\theta,\lambda}(\x) \right)-(1/\text{det}\L )\vert}{1/\text{det}\L}.
      \label{eq:flatnessfact}
\end{equation}
\markchange{A useful property of the flatness factor is that it is a monotonic function of $\theta$: for $a>b>0$, and any lattice $\L$, we have $\epsilon_{\L}(a)\leq \epsilon_{\L}(b)$~\cite[Remark 2]{Ling13}.}
Following~\cite{Ling13}, we define a sequence of lattices $\{\Lfd\}$ to be \emph{secrecy-good} if 
\[
     \epsilon_{\Lfd}(\theta)\leq 2^{-\Omega(d)} \text{ for all } \theta \text{ such that }\frac{(\text{det}(\Lfd))^{2/d}}{2\pi\theta^2}<1.
\]
It was shown in~\cite{Ling13} that there exist lattices that are secrecy-good and also satisfy all the goodness properties described in Appendix~C.

Let us choose $\kappa$ in (\ref{eq:pj_strong}) to be equal to $\sqrt{\cP}$. 
We can bound the variational distance in terms of the flatness factor of the coarse lattice as follows:
\begin{theorem}
      If the sequence of  nested lattice pairs $\{\Lfd,\Lcd\}$ satisfies $\epsilon^{(d)}:=\epsilon_{\Lcd}(\sqrt{\cP/2})<1/2$, then for every $\x\in \Lfd\cap\cV(\Lcd)$, we have 
\begin{equation}
      \mathbb{V}(p_{U+V},p_{U+V|\x})\leq 216\epsilon^{(d)}.
      \label{eq:vardistbound}
\end{equation}
\label{thm:vardistbound}
\end{theorem}
A proof of the above theorem is given in Appendix~F. \markchange{The constant $216$ in the above theorem can be improved, but we do not attempt to do so, as the exact constant is not important for our purposes.}

\markchange{
The following result from~\cite[Section V-B]{Ling13} tells us that if the flatness factor of the coarse lattice goes to zero as $d\to\infty$, then the average transmit power converges to $\cP$.
\begin{lemma}
If the flatness factor $\epsilon_1:=\epsilon_{\Lcd}\left(\cP\sqrt{1-1/(e\pi)}\right)<1/2$, then,
\[
 \left\vert \mathbb{E}\Vert \mathbf{U}\Vert^2 - d\cP \right\vert=\left\vert \mathbb{E}\Vert \mathbf{V}\Vert^2 - d\cP \right\vert \leq \frac{2\pi \epsilon_1}{1-\epsilon_1}\cP.
\]
\label{lemma:power_strong}
\end{lemma}
Since $\sqrt{1-1/(e\pi)}>1/\sqrt{2}$, it is sufficient to have (by monotonicity of the flatness factor) $\epsilon_{\Lcd}(\sqrt{\cP/2})\to 0$ to satisfy the power constraint for all sufficiently large $d$.
}
From Theorem~\ref{thm:vardistbound} and Lemma~\ref{lemma:csiszar}, we see that strong secrecy can be obtained in the noiseless scenario.

\subsection{Strong Secrecy and Reliability of Decoding in the Presence of AWGN}
Since the noise $\bZ$ is independent of everything else, we have strong secrecy in a noisy channel as well.
To see why this is the case, observe that $X\to(\U+\V)\to(\U+\V+\bZ)$ forms a Markov chain. Using the data-processing inequality, we see that $\mathcal{I}(X;\U+\V+\bZ)\leq \mathcal{I}(X;\U+\V)$, verifying our claim. \markchange{Note that the claim holds regardless of the probability distribution of the noise $\bZ$. The fact that the noise is Gaussian will be used to determine achievable rates for reliable decoding of $X\oplus Y$ at the relay.}


We choose our sequence of nested lattices $\{\Lfd,\Lcd\}$ so as to satisfy the following properties:
\begin{itemize}
	\item[(L1)] The sequence of coarse lattices, $\{\Lcd\}$, is good for covering, MSE quantization, and AWGN channel coding\footnote{For the definitions of lattices good for covering, MSE quantization, and AWGN channel coding, see Appendix~C.}.
	\item[(L2)]  The sequence of coarse lattices, $\{\Lcd\}$, is secrecy-good.	
		\item[(L3)] The sequence of fine lattices, $\{\Lfd\}$, is good for AWGN channel coding.
\end{itemize}

\markchange{Using (44) in~\cite[Appendix II]{Ling13} and \cite[Proposition 2]{Ling13},  we can show that if $\L_0$ is a lattice sampled uniformly at random from a $(d,k,q)$ ensemble, where $d,k,q$ satisfy (\ref{eq:k_k1}) and (\ref{eq:rmin_constraints}), then for all sufficiently large $d$, we have $\mathbb{E}[\epsilon_{\L_0}(\theta)]\leq 2\left( \frac{(\text{det}(\L_0))^{2/d}}{2\pi \theta^2} \right)^{d/2},$ which goes to zero exponentially in $d$ as long as $ \frac{(\text{det}(\L_0))^{2/d}}{2\pi \theta^2}<1$. Using the Markov inequality, we can say that  the probability of choosing a lattice whose flatness factor is less than $4\left( \frac{(\text{det}(\L_0))^{2/d}}{2\pi \theta^2} \right)^{d/2}$ is at least $1/2$ for all sufficiently large $d$. From Lemma~\ref{lemma:goodlattice1}, we know that a randomly chosen nested lattice pair satisfies (L1) and (L3) with probability tending to $1$ as $d\to\infty$. We can then use the union bound to conclude  that a randomly chosen pair of nested lattices from the $(d,k,q,k_1,q_1)$ ensemble satisfies (L1)--(L3) with probability 
at least 
$1/2$ as $d\to \infty$.}

\markchange{
We now work towards an estimate of the probability of error of decoding $X\oplus Y$ from $\mathbf{W}$.
Recall that the relay computes $\widetilde{\mathbf{W}}=[\cmmse\W]\bmod\Lcd$, where $\cmmse=\frac{2\cP}{2\cP+\nsvar}$, and the estimate of $X\oplus Y$ is the coset to which $Q_{\Lfd}(\widetilde{\mathbf{W}})$ belongs.
The quantity $\widetilde{\mathbf{W}}$ can be written as
\begin{align}
    \widetilde{\mathbf{W}} &= [\cmmse(\U+\V+\bZ)]\bmod\Lcd &\notag \\
                      &=  [\U+\V-(1-\cmmse)(\U+\V)+\cmmse\bZ]\bmod\Lcd &\notag  \\
                      &= \left[ [\mathbf{X}+\mathbf{Y}]\bmod\Lcd + \bZ'   \right]\bmod\Lcd,&
\end{align}
where $\bZ'=(\cmmse-1)(\U+\V)+\cmmse\bZ$ is the effective noise of the MLAN channel. Unlike in Section~\ref{sec:ach_rate}, $\bZ'$ is not statistically independent of $[\mathbf{X}+\mathbf{Y}]\bmod\Lcd$. 
However, as shown by the following lemma, 
if the flatness factor of the coarse lattice is small, then the effective noise behaves like an almost independent Gaussian vector. Let $f_{\bZ'|\x,\y}$ denote the density function of $\bZ'$ conditioned on $\mathbf{X}=\x$ and $\mathbf{Y}=\y$, and $f_{\mathbf{N}}$ denote the density function of a Gaussian random vector, $\mathbf{N}$, with mean $\0$ and covariance matrix $\big(2(1-\cmmse)^2\cP+(\cmmse)^2\nsvar\big)I_{d}$. Given two density functions $f_1$ and $f_2$ over $\R^d$, the variational distance between $f_1$ and $f_2$, denoted by $\mathbb{V}(f_1,f_2)$, is defined as
\[
   \mathbb{V}(f_1,f_2):=\int_{\x\in\R^d}\vert f_1(\x)-f_2(\x) \vert \: d\x.
\]
Then, we have the following lemma proved in Appendix~G.
\begin{lemma}
    If $\epsilon_{\Lcd}(\sqrt{\cmmse\cP})<1/2$, then for every $\x$ and $\y$ in $\Gpd$, 
    \[
        \mathbb{V}(f_{\bZ'|\x,\y},f_{\mathbf{N}}) \leq 8 \epsilon_{\Lcd}(\sqrt{\cmmse\cP}).
    \]
\label{lemma:MMSEeff_noise}
\end{lemma}
}

\markchange{
\subsubsection{Proof of Theorem~\ref{thm:strong_main}}
If $\mathbb{P}_1$ and $\mathbb{P}_2$ are probability measures on $\R^d$ having densities $f_1$ and $f_2$ respectively, then $\sup_{A\subset \R^d}\vert \mathbb{P}_1(A)-\mathbb{P}_2(A) \vert=\frac{1}{2} \mathbb{V}(f_1,f_2)$, where the supremum is taken over all measurable subsets of $\R^d$ (assuming that both $P_1$ and $P_2$ are defined on a common event space) \cite[Section 7.7]{Dasgupta}.
Using this and Lemma~\ref{lemma:MMSEeff_noise}, the probability of error of the decoder can be bounded by
\begin{align}
    \Pe &\leq \Pr\left[ \bZ' \notin \cV(\Lfd) \right] &\notag \\
           &\leq \Pr\left[ \mathbf{N}\notin \cV(\Lfd) \right] + 4 \epsilon_{\Lcd}(\sqrt{\cmmse\cP}). &
\end{align}
The variance of $\mathbf{N}$ is equal to $\sigma_N^2=2(1-\cmmse)^2 \cP+(\cmmse)^2\nsvar=\frac{2\cP\nsvar}{2\cP+\nsvar}$.
If the flatness factor $ \epsilon_{\Lcd}(\sqrt{\cmmse\cP})\to 0$ as $d\to\infty$, and the fine lattices are good for AWGN channel coding, then
the probability of error at the relay goes to zero as long as $   \frac{(\text{det}(\Lfd))^{2/d}}{2\pi e \sigma_N^2}>1$, or equivalently,
\[
    \frac{1}{|\Gpd|^{2/d}}\frac{(\text{det}(\Lcd))^{2/d}}{2\pi e \sigma_N^2}>1.
\]
In other words,
\begin{equation}
    R^{(d)}=\frac{1}{d}\log_2|\Gpd|< \frac{1}{2}\log_2\left( \frac{(\text{det}(\Lcd))^{2/d}}{2\pi e \sigma_N^2} \right).
    \label{eq:strong_rdconst}
\end{equation}
}

\markchange{
If we have $\cmmse\geq 1/2$, then by monotonicity of the flatness factor, $ \epsilon_{\Lcd}(\sqrt{\cmmse\cP})\leq  \epsilon_{\Lcd}(\sqrt{\cP/2})$.
This requires $\frac{2\cP}{2\cP+\nsvar}\geq1/2$,  or $\cP\geq \nsvar/2$. 
Observe that having $\epsilon_{\Lcd}(\sqrt{\cP/2})\to 0$ has three important consequences: (a) strong secrecy, even in the absence of noise (Theorem~\ref{thm:vardistbound}); (b) the average transmit power converges to $\cP$ (Lemma~\ref{lemma:power_strong}); and (c) the effective noise vector is ``almost'' independent of the message (Lemma~\ref{lemma:MMSEeff_noise}).
}

\markchange{
Using (L2), in order to have the flatness factor $\epsilon_{\Lcd}(\sqrt{\cP/2})\to 0$, the coarse lattices must be scaled so that
\begin{equation}
			\frac{\left(\text{det}(\Lcd)\right)^{2/d}}{2\pi (\cP/2)} <1.	
		\label{eq:elcd_cond}
\end{equation}
Let us choose $\left(\text{det}(\Lcd)\right)^{2/d}=\pi \cP-\delta$, for some arbitrary $\delta>0$, so as to satisfy (\ref{eq:elcd_cond}).
Substituting this in (\ref{eq:strong_rdconst}), we get that for  $\cP\geq\nsvar/2$, as long as 
\[
    R^{(d)}<\frac{1}{2}\log_2\left( \frac{\cP-\delta/\pi}{2e\sigma_N^2} \right),
\]
 the probability of error of decoding $X\oplus Y$ at the relay, as well as the mutual information between the individual messages and $\W$, go to zero as $d\to\infty$.
Substituting for $\sigma_N^2$, we complete the proof of Theorem~\ref{thm:strong_main}. \qed
}

\markchange{
\begin{remark}
In the perfect secrecy setting, we were not able to show that the technique of  MMSE scaling can be used to obtain an additional $1/2$ in the rate expression.
As in the strong-secrecy case, suppose that the relay computes $\widetilde{\W}:=[\alpha^{*}\W]\bmod \Lcd$, where $\alpha^{*}:=(2\cP)/(2\cP+\nsvar)$. 
The effective noise vector, $\bZ_{\text{eff}}=-(1-\alpha^{*})(\U+\V)+\alpha^{*}\bZ$
is not Gaussian, since $\U$ and $\V$ are not Gaussian. In order to find the probability of decoding error, we require an upper bound on the probability that $\bZ_{\text{eff}}\notin\cV(\Lfd)$, which is not straightforward unlike in the Gaussian case. Consequently, we were not able to say whether lattice decoding achieves vanishingly small error probabilities in this situation.
\end{remark}
}
\subsection{Prior Work on Strong Secrecy}
	The strongly secure scheme proposed by He and Yener in~\cite{HeYenerstrong}  also used nested lattice codes as we have done here. They obtain strong secrecy using universal hash functions, and show the existence of a suitable linear hash function that ensures that the mutual information decays exponentially in $d$. Unlike~\cite{HeYenerstrong}, we have used a sampled Gaussian pmf for randomization at the encoder, and hence, for a given pair of nested lattices, we explicitly specify the distribution used for randomization. Even using our scheme, the mutual information goes down to zero exponentially in $d$. But unlike~\cite{HeYenerstrong}, which was valid under a maximum power constraint at each node, the codebook we use is unbounded, so our scheme can only satisfy an average power constraint. Also, the achievable rate in the scheme of He and Yener is slightly higher (by $\frac{1}{2}\log_2\frac{e}{2}$ bits per channel use). 
 \markchange{On the other hand, the He-Yener randomization scheme uses hash functions whose existence is only guaranteed by a probabilistic argument, while our randomization scheme has the advantage of being specified by sampled Gaussian pmfs that can be given in explicit form}. 
The scheme in~\cite{HeYenerstrong} was coupled with an Algebraic Manipulation Detection (AMD) code~\cite{Cramer} for Byzantine detection, and it was shown that the probability of a  Byzantine attack being undetected could be made to decay to zero exponentially in $d$. We remark that our coding scheme can also be extended to this scenario, \markchange{where it can} be used as a replacement for the nested lattice code in~\cite{HeYenerstrong}.

%


\section{Multi-hop Line Network}\label{sec:multihop}
\markchange{The bidirectional relay can be viewed as a building block in many wireless networks. In particular, the problem of secure compute-and-forward can be extended to scenarios where we want secure relaying of messages from one point to another on a network with multiple honest-but-curious relays. As an example, we will extend our results} to the multi-hop line network studied in~\cite{HeYener}.
The structure of a multi-hop line network with $K+1$ hops is shown in Fig.~\ref{fig:multihopnet}. 
It consists of $K+2$ nodes: a source node, $\tS$, a destination node, $\tD$, and $K$ relay nodes, $\tR_1,\tR_2,\ldots,\tR_K$.
It is assumed that all links are identical AWGN (mean zero, variance $\nsvar$) wireless links. All nodes are half-duplex and can communicate only with their neighbours.
Nodes broadcast their messages to their immediate neighbours.

The source wants to send $N$ messages, $X_1,X_2,\ldots,X_N$, to the destination across the network of honest-but-curious relays.
The messages are assumed to be independent and uniformly distributed over  the set of all messages.
It is assumed that the relays do not co-operate with each other, i.e., the information available at a relay is not shared with the other relays.
As remarked by He and Yener in~\cite{HeYener}, this also takes care of the situation wherein the eavesdropper has access to one of the
relays, but it is not known which relay has been compromised.
We study this problem mainly under the strong secrecy constraint, but the arguments can be extended to the perfect secrecy scenario.

\markchange{He and Yener showed that their scheme~\cite{HeYener} achieves weak secrecy over the multi-hop line network, but their arguments cannot be
directly extended for strong secrecy. We give a new proof that shows that our strongly secure scheme for the bidirectional 
relay can be used with the He and Yener co-operative jamming protocol to obtain strong secrecy in a multi-hop line network.\footnote{\markchange{In fact, our proof shows that any strongly secure coding scheme for the bidirectional relay can be used to obtain strong secrecy in the multihop network. However, the achievable rate would depend on the coding scheme.}}}

\begin{figure}
 \centering
 	\resizebox{7cm}{!}{\input{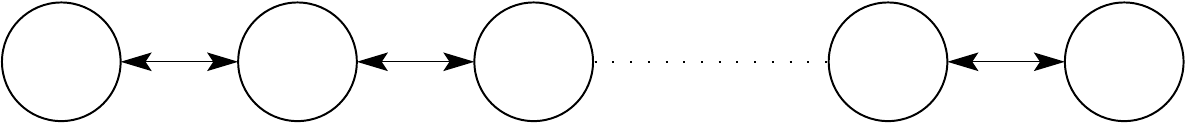_t}}
 	\caption{Multi-hop line network with $K+1$ hops.}
 \label{fig:multihopnet}
\end{figure}

 
\subsubsection{The Communication Scheme}\label{sec:multischeme}

We use the co-operative jamming scheme proposed by He and Yener for relaying.
The communication takes place in $2N+K$ phases, where each phase consists of $d$ channel uses.
Let us choose a sequence of $(\Lfd,\Lcd)$ nested lattice pairs that satisfy properties (L1)--(L3). 
Each node in the network employs the encoding and decoding scheme described in Section~\ref{sec:scheme_strong}.
Let $\mathcal{D}:\R^d\to \Gpd$ denote the decoder map of Section~\ref{sec:scheme_strong}.
Also, for any $X\in \Gpd$, let $\mathcal{E}(X)$ denote the encoded form of $X$ as in Section~\ref{sec:scheme_strong}.

\begin{itemize}
 \item Each relay node $i$ ($i=1,\ldots ,K$) generates a \emph{jamming signal,} $J_{i}$, which is chosen uniformly at random from $\Gpd$, and independently of everything else. The destination generates $N$ independent jamming signals, $J_{K+l}$, for $l=1,2,\ldots,N$, where $N$ is the number of messages to be relayed.
 \item Let $\W_{i}[n]$ denote the $d$-dimensional vector received by the $i$th node in the $n$th phase, and let $\V_{i}[n]$ be the vector transmitted by the $i$th node in the $n$th phase. 
\end{itemize}

 An average power constraint is imposed at the nodes: $ \frac{1}{d}\E\Vert \V_{i}[n] \Vert^2 \leq P^{(d)}$
for $i=0,1,\ldots,K+1$ and $n=1,2,\ldots,K+2N$.

Since it takes $K+2N$ phases for sending $N$ messages, the rate of the scheme is defined as
\begin{equation}
 R^{(d)}_N := \frac{N}{d(K+2N)}\log_2|\Gpd| .
\label{eq:rd}
\end{equation}
We say that a \emph{power-rate pair} of $(\cP,\cR)$ is \emph{achievable for $N$-message transmission} with strong secrecy in a multi-hop line network with $K+1$ hops, if for every $\delta>0$, there exists a sequence of $(\Lfd,\Lcd)$ nested lattice codes such that for all sufficiently large $d$, we have 
\begin{itemize}
     \item $\Pd<\cP+\delta$;
     \item $R^{(d)}_N>\cR-\delta$;
     \item the probability of the destination decoding $X_1,X_2,\ldots,X_{N}$ incorrectly, $\Pe$, is less than $\delta$; and,
     \item for $k=1,2,\ldots,K$, the mutual information between the $N$ messages and all the variables available at the $k$th relay is less than $\delta$, i.e., 
     \[
	  \mathcal{I}(X_1,\ldots,X_N ; J_k,\W_{k}[1],\ldots,\W_{k}[2N+K])<\delta.  
     \]
\end{itemize}
We will describe the scheme for secure message relaying in the next subsection, and find achievable power-rate pairs. 
As the main result, letting the number of messages to go to infinity, we will show the following:
\begin{theorem}
A power-rate pair of 
 \[
 \left( \cP,\left[\frac{1}{4}\log_2\left(\frac{1}{2}+ \frac{\cP}{\nsvar}\right)-\frac{1}{4}\log_2 2e\right]^{+} \right)
 \]
  is achievable with strong secrecy\footnote{\markchange{If the scheme in~\cite{HeYenerstrong} is used at each node, then the achievable rate with strong secrecy can be improved to $\left[\frac{1}{4}\log_2\left(\frac{1}{2}+ \frac{\cP}{\nsvar}\right)-\frac{1}{2}\right]^{+}$.}}, and
a power-rate pair of 
 \[
 \left( \cP,\left[\frac{1}{4}\log_2\left( \frac{\cP}{\nsvar}\right)-\frac{1}{2}\log_2 2e\right]^{+} \right)
 \]
  is achievable with perfect secrecy at the relay nodes in a multi-hop line network with $K+1$ hops.
  \label{thm:multihop}
\end{theorem}

\begin{figure}
	\centering
		\resizebox{7cm}{!}{\input{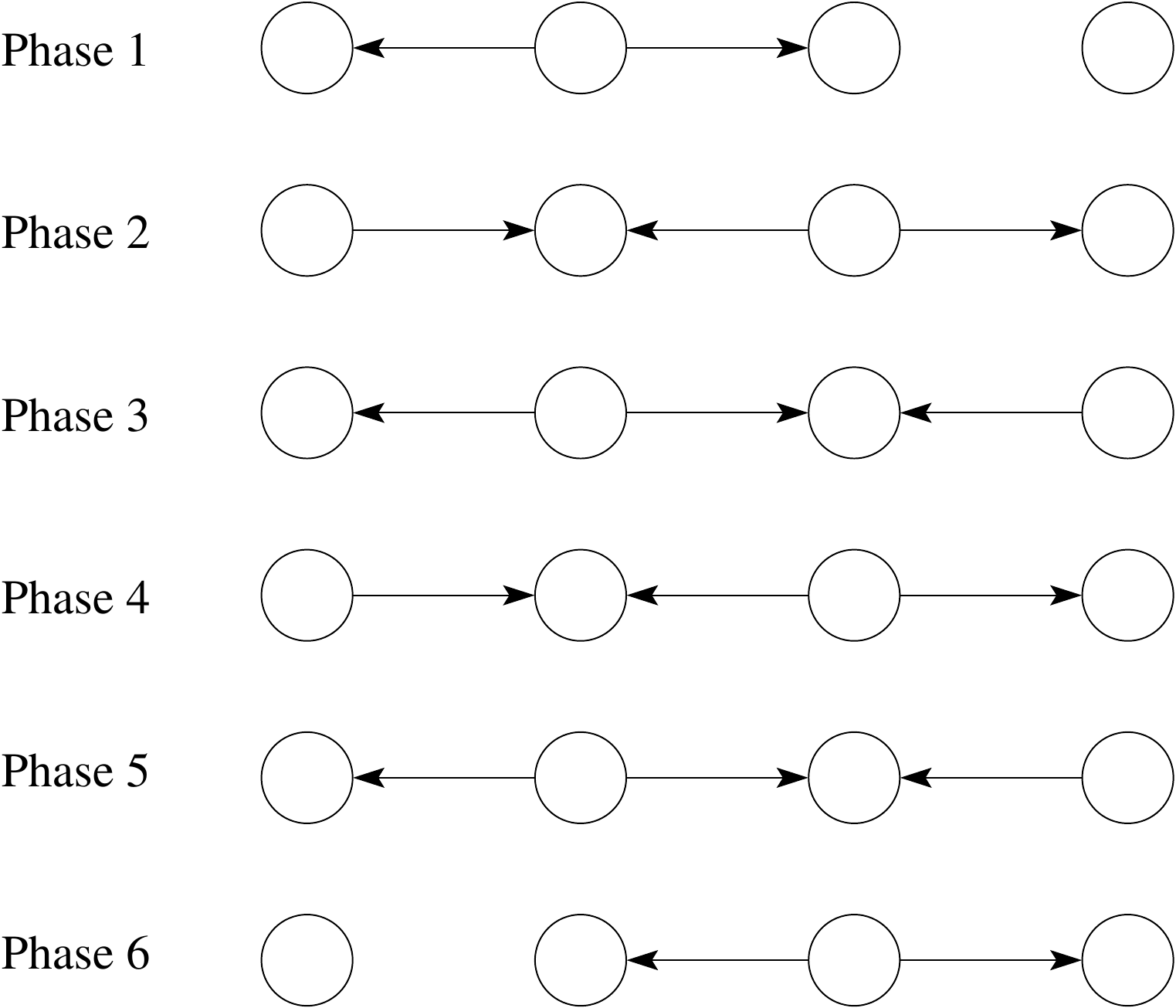_t}}
	\caption{Secure relaying of two messages in a 3-hop relay network.}
	\label{fig:multihopeg3}
\end{figure}

\begin{table*}[t]
 \centering
  \begin{tabular}{|c|p{1.5cm}|p{1.5cm}|p{1.5cm}|p{1.5cm}|}
   \hline
   Phase & \multicolumn{4}{|c|}{Messages available at node at the end of phase} \\ \cline{2-5} 
     & \multicolumn{1}{|c|}{$\tS$} &  \multicolumn{1}{|c|}{$\tR_1$} & \multicolumn{1}{|c|}{$\tR_2$} & \multicolumn{1}{|c|}{$\tD$}  \\ \cline{1-5}
   0 & \multicolumn{1}{|c|}{ $X_1$, $X_2$} & \multicolumn{1}{|c|}{ $J_1$} & \multicolumn{1}{|c|}{$J_2$} & \multicolumn{1}{|c|}{$J_3$, $J_4$}  \\ \cline{1-5}
   1 & \multicolumn{1}{|c|}{ $X_1$, $X_2$, $J_1$} & \multicolumn{1}{|c|}{ $J_1$} & \multicolumn{1}{|c|}{$J_1$, $J_2$} & \multicolumn{1}{|c|}{$J_3$, $J_4$}  \\ \cline{1-5}
   2 & \multicolumn{1}{|c|}{ $X_1$, $X_2$, $J_1$} & \multicolumn{1}{|c|}{ $J_1$, $X_1\oplus J_2$} & \multicolumn{1}{|c|}{$J_1$, $J_2$} & \multicolumn{1}{|c|}{$J_2$, $J_3$, $J_4$}  \\ \cline{1-5}
   3 & \multicolumn{1}{|c|}{ $X_1$, $X_2$, $J_1$, $J_2$} & \multicolumn{1}{|c|}{ $J_1$, $X_1\oplus J_2$} & \multicolumn{1}{|c|}{$J_1$, $J_2$, $X_1\oplus J_3$} &  \multicolumn{1}{|c|}{$J_2$, $J_3$, $J_4$}  \\ \cline{1-5}
   4 & \multicolumn{1}{|c|}{ $X_1$, $X_2$, $J_1$, $J_2$} & \multicolumn{1}{|c|}{ $J_1$, $X_1\oplus J_2$, $X_1\oplus X_2\oplus J_3$} & \multicolumn{1}{|c|}{$J_1$, $J_2$, $X_1\oplus J_3$} &  \multicolumn{1}{|c|}{$X_1$, $J_2$, $J_3$, $J_4$}   \\ \cline{1-5}
   5 & \multicolumn{1}{|c|}{  $X_1$, $X_2$, $J_1$, } & \multicolumn{1}{|c|}{ $J_1$, $X_1\oplus J_2$,} & \multicolumn{1}{|c|}{$J_1$, $J_2$, $X_1\oplus J_3$, } &  \multicolumn{1}{|c|}{$X_1$, $J_2$, $J_3$, $J_4$}   \\ 
      & \multicolumn{1}{|c|}{ $J_2$, $J_3$} & \multicolumn{1}{|c|}{ $X_1\oplus X_2\oplus J_3$} & \multicolumn{1}{|c|}{ $X_1\oplus X_2\oplus J_4$} &  \multicolumn{1}{|c|}{}   \\ \cline{1-5}
   6 & \multicolumn{1}{|c|}{  $X_1$, $X_2$, $J_1$, } & \multicolumn{1}{|c|}{ $J_1$, $X_1\oplus J_2$, $X_1\oplus X_2\oplus J_3$, } & \multicolumn{1}{|c|}{$J_1$, $J_2$, $X_1\oplus J_3$, } &  \multicolumn{1}{|c|}{$X_1$, $X_2$, $J_2$, }  \\ 
     & \multicolumn{1}{|c|}{ $J_2$, $J_3$ } & \multicolumn{1}{|c|}{  $X_1\oplus X_2\oplus J_4$} & \multicolumn{1}{|c|}{ $X_1\oplus X_2\oplus J_4$} &  \multicolumn{1}{|c|}{ $J_3$, $J_4$}  \\ \cline{1-5}
 \end{tabular}
 \vspace{0.2cm}
 \caption{Messages available at various nodes at the end of each phase for the protocol in Fig.~\ref{fig:multihopeg3}.} 
 \label{tab:multihop} 
\end{table*}

\subsubsection{Scheme of He and Yener for Multi-Hop Relaying}
We now describe the scheme for secure relaying. A more detailed description can be found in~\cite{HeYener}.
The case where $\tS$ wants to send two messages, $X_1$ and $X_2$, to the destination
 is illustrated for a network with two relays in Fig.~\ref{fig:multihopeg3}. Only the messages (elements of $\Gpd$) transmitted by each node are indicated in the figure, and it is assumed that actual transmitted vectors are the encoded versions of the messages indicated. The messages available at various nodes at the end of each phase are tabulated in Table~\ref{tab:multihop}.
Let us use the notation $\oplus _{p=1}^{t}X_p$ to denote $X_1\oplus X_2\oplus\cdots\oplus X_t$.
\begin{itemize}
	\item The $i$th node ($i=0,1,2,\ldots,K+1$) transmits in the $(2t+i)$th phase, for $t=0,1,\ldots,N$. 
        \item In the $(2t+i)$th phase ($t=0,1,2,\ldots,N$), the $i$th node sends 
        \begin{equation}
             \V_i[2t+i]=\mathcal{E}\big( (\oplus_{p=1}^{t} X_p)\oplus J_{i+t} \big).
	  \label{eq:v_2ti}
        \end{equation}
        This holds for all nodes, $i=0,1,\ldots,K+1$. The $i$th node evaluates $(\oplus_{p=1}^{t} X_p)\oplus J_{i+t}$ by subtracting the message transmitted by it in the $(2t+i-2)$nd phase from the message decoded in the $(2t+i-1)$st phase.
\end{itemize}

%

Since the \markchange{destination} knows $J_{K+1},\ldots, J_{K+N}$, it can compute $ \oplus_{p=1}^{t} X_p $ from  $\mathcal{E}\big( (\oplus_{p=1}^{t} X_p)\oplus J_{K+t} \big)$, for $t=0,1,\ldots N$, and hence, each of the messages $X_l$.

\subsubsection{Secrecy}
Let us assume that all links are noiseless. As argued at the end of Section~\ref{sec:strongsec_noiseless}, it is enough to show that strong secrecy is obtained in this situation.
 Let $\{ X_p :p=1,\ldots,N\}$ denote the set of i.i.d. messages to be sent to the destination. Let us fix a $k$ from $\{1,2,\ldots,K\}$. 
In the $(2t+k-1)$st phase, the $k$th relay receives
\begin{align}
	\W_{k}[2t+k-1]&=\V_{k-1}[2t+k-1]+\V_{k+1}[2t+k-1]& \label{eq:wkvk}\\
				&=\mathcal{E}\Big((\oplus_{p=1}^{t}X_p)\oplus J_{k+t-1}\Big)+\mathcal{E}\Big((\oplus_{p=1}^{t-1}X_{p})\oplus J_{k+t}\Big),&
\end{align}
for $1\leq t\leq N$, and $\W_{k}[k-1]=\mathcal{E}(J_{k-1})$.
For $t=1,2,\ldots,N$, let us define 
\begin{equation}
     \Theta_{k,t}:= \{ J_k, J_{k-1}, \W_k[2m+k-1]: 1\leq m\leq t \}
\end{equation}
to be the set of all random variables available at the $k$th relay at the end of the $(2t+k-1)$st phase. We also define $\Theta_{k,0}:=\{ J_k,J_{k-1} \}$. Note that $\Theta_{k,t-1}\subset \Theta_{k,t}$ for $t=1,2,\ldots,N$, and $\Theta_{k,N}$ is the set of all random variables available at the $k$th relay at the end of all phases. We have to show that $\mathcal{I}(X_1,\ldots,X_N ; \Theta_{k,N})\to 0$ as $d\to\infty$. 

\begin{lemma}
     Let $\epsilon^{(d)}:=\epsilon_{\Lcd}(\sqrt{\cP/2})<1/2$. Then, the total information available at the $k$th relay node at the end of all relaying phases can be bounded from above as follows:
     \begin{equation}
          \mathcal{I}(X_1,\ldots,X_N ; \Theta_{k,N})\leq N\epsilon^{(d)}\left( \log_2|\Gpd|-\log_2\epsilon^{(d)} \right).
	  \label{eq:mi_multihopbound}
     \end{equation}
\label{lemma:mi_multihopbound}
\end{lemma}
\begin{proof}
     See Appendix~H.
\end{proof}

Since for our choice of nested lattices, $\epsilon^{(d)}\to 0$ exponentially in $d$, the mutual information $\mathcal{I}(X_1,\ldots,X_N ; \Theta_{k,N})$ also goes to zero exponentially in $d$, thereby guaranteeing strong secrecy.

\subsubsection{Achievable Rate and Proof of Theorem~\ref{thm:multihop}}

Using the union bound, one can show that for each $N$, the probability of the $k$th relay being in error in the $i$th phase goes to zero as $d\to\infty$ for all $k$ and $i$.
Using Theorem~\ref{thm:strong_main}, we can say that a power-rate pair of $\left( \cP,\frac{N}{2(K+2N+1)}\left[\log_2\left(\frac{1}{2}+ \frac{\cP}{\nsvar}\right)-\log_2 2e\right]^+  \right)$ is achievable for the transmission of $N$ messages using this scheme. Letting the number of messages, $N$, go to infinity, we have the first part of Theorem~\ref{thm:multihop}. The second part of the theorem can be proved in a similar manner. 


\section{Conclusion}\label{sec:conc}

We have described two coding schemes for secure bidirectional relaying in presence of an honest-but-curious relay.
We saw that using pmfs generated from density functions having compactly supported characteristic functions, one can obtain perfect secrecy.
We showed that reliable and perfectly secure computation at the relay is possible at transmission rates below $\left[\frac{1}{2}\log_2\frac{\cP}{\nsvar}-\log_2 2e\right]^{+}$. \markchange{This is the first such result for perfect secrecy in the context of the bidirectional relay.}
In order to achieve higher transmission rates, we relaxed the secrecy constraint, and only required that
the mutual information between $\U+\V$ and each individual message goes to zero for large block lengths.
Using pmfs obtained from sampled Gaussian functions, we could achieve a rate of $\left[\frac{1}{2}\log_2\left(\frac{1}{2}+\frac{\cP}{\nsvar}\right)-\frac{1}{2}\log_2 2e\right]^+$.
\markchange{Prior work by He and Yener showed that a rate of $\left[\frac{1}{2}\log_2\left(\frac{1}{2}+\frac{\cP}{\nsvar}\right)-1\right]^+$ is achievable with strong secrecy.}
These rates are within a constant gap of the best known achievable rate of $\left[\frac{1}{2}\log_2\left(\frac{1}{2}+\frac{\cP}{\nsvar}\right)\right]^+$ without secrecy constraints~\cite{Nazer11,Wilson}.

The main theme of this paper was the use of nested lattice codes, and explicit pmfs having infinite support to obtain security.
An inherent disadvantage of our scheme is that it is not possible to satisfy a maximum power constraint.
One could study the scenario where the support of the distributions we described are truncated, and find the 
performance of such a scheme; we are yet to carry out this study.

\markchange{All our results were derived under the assumptions that the messages are uniformly distributed, the channel gains from the user nodes to the relay are equal, and transmissions from both users are synchronized to arrive at the relay at the same time. Of course, in practice, these assumptions need not hold. Unfortunately, perfect secrecy does not appear to be robust to deviations from these assumptions. Indeed, if $X$ and $Y$ are not uniformly distributed, then we no longer have $(X\oplus Y)\independent X$ and $(X\oplus Y)\independent Y$. In general, if the channel gains are not equal and unknown at the user nodes, it is hard to get perfect secrecy. It can be shown that if $\u,\v\in \Lfd$, and $h_1,h_2$ are real numbers such that $h_1/h_2$ is irrational, then it is possible to exactly recover $(\u,\v)$ from $h_1\u+h_2\v$. However, it may be possible to obtain strong secrecy even when some of these assumptions do not hold, but this is left as future work. But it is worth noting that our scheme guarantees perfect (strong) security even in the absence of noise, and hence it achieves perfect (strong) secrecy even when the distribution of the additive noise is arbitrary and unknown, as long as it is independent of the transmitted codewords.}

  \markchange{The nested lattice coding schemes analyzed in this paper rely upon closest lattice point decoding, which is known to be computationally hard in general. However, recall that our randomization scheme for perfect secrecy works with \emph{any} pair of nested lattices. In particular, it would work with nested lattice pairs on which practical coding schemes can be based, where by ``practical coding schemes'' we mean explicitly constructed nested lattice codes that admit reliable decoding with low computational complexity. Lattice coding schemes with low-complexity decoders have been studied in the literature, e.g., \cite{diPietro13, tenBrink05, Sommer, Sommer09, Yan13}. Our scheme for strong secrecy, on the other hand, requires that the nested lattices satisfy various goodness properties. Further investigation is needed to determine whether all these goodness properties can be found in lattices that admit low-complexity decoding.}

Finally, in this paper, we only found achievable rates for secure and reliable computation at the relay.
As remarked in~\cite{HeYenerstrong}, finding a converse result is much harder. Even without any secrecy constraints, a nontrivial outer bound on the capacity of a bidirectional relay is not known.


\section*{Appendix~A: Technical Details of Example~\ref{ex:psi}}
We show here that the function $h(x) = (3\pi^2/4) \, [f(\pi x/ 4)]^2$, with $f$ as in
(\ref{eq:f2}), is a density function whose characteristic function 
is given by 
$$
\psi(t) = {\textstyle \frac32 \, g(\frac{4t}{\pi})},
$$
where $g$ is as in (\ref{eq:g}).

 Note first that $\hat{f}$ defined in (\ref{eq:fhat}) is also a
 probability density function --- it is non-negative and its integral over
 $(-\infty,\infty)$ is 1. By Fourier inversion, its characteristic
 function is $2\pi f$. Therefore, $g = \hat{f} \ast
 \hat{f}$ is a density with characteristic function $4\pi^2 f^2$.

 Now, $f^2$ is integrable since $(\hat{f})^2$ is integrable (see corollary
 to Theorem 3 of Section~XV.3 of \cite{Feller}). Hence, $\tilde{h}(x) =
 f^2(x)/(\int_{-\infty}^{\infty} f^2(y) \, dy$) is a probability density function.
 The integral in the denominator can be explicitly evaluated by 
 means of the Plancherel identity:
 $$
 \int_{-\infty}^\infty f^2(y) \, dy = \frac{1}{2\pi} \int_{-\infty}^\infty
 [\hat{f}(t)]^2 \, dt = \frac{1}{2\pi} \, g(0) = \frac{1}{3\pi},
 $$
 the last equality following from (\ref{eq:g}). Thus, $\tilde{h}(x) =
 3\pi f^2(x)$. 

 From the fact that $4\pi^2 f^2$ is the characteristic function of
 $g$, it follows by Fourier inversion that $\tilde{h}$ has
 characteristic function given by $\tilde{\psi}(t) = \frac32 \, {g(t)}$.
 Hence, $h(x) = (\pi/4) \tilde{h}(\pi x/4)$ is a density function with
 characteristic function $\tilde{\psi}(4t/\pi)$, which is precisely $\psi(t)$.

\section*{Appendix~B: Proof of Theorem~\ref{thm:L_thm}} 
We are given an index-$M$ sublattice $\L_0$ of the lattice $\L$. 
Recall from Section~\ref{sec:latticedefns} that $(\det \L_0)/(\det \L) = M$. Let $\L_0,\L_1,\ldots,\L_{M-1}$ denote the $M$ cosets of $\L_0$ in $\L$. These constitute the elements of the quotient group $\Gp = \L/\L_0$. 

Suppose that $X,Y$ are iid random variables, each uniformly distributed
over $\mathbb{G}$. For each $j \in \{0,1,\ldots M-1\}$, let $p_j$ be a pmf supported
within the coset $\L_j$, so that $p_j(\k) = 0$ for $\k \notin \L_j$. 
We define a random variable $U$ (resp.\ $V$) jointly distributed with
$X$ (resp.\ $Y$) as follows: if $X = \L_j$ (resp.\ $Y = \L_j$), $U$
(resp.\ $V$) is a random point from $\L_j$ picked according to the
distribution $p_j$. Then, $U$ and $V$ are identically distributed with
$p_U = p_V = \frac{1}{M}\sum_{i=0}^{M-1} p_i$. Let $\varphi_U$,
$\varphi_V$ and $\varphi_j$, $j = 0,1,\ldots,M-1$, be the
characteristic functions corresponding to $p_U$, $p_V$ and $p_j$, 
$j = 0,1,\ldots,M-1$, respectively. We have the following straightforward
generalization of Lemma~\ref{basic_lemma}.

\begin{lemma}
Suppose that $\varphi_U \varphi_V = \varphi_j\varphi_V = \varphi_U
\varphi_j$ for $j = 0,1,\ldots,M-1$. Then, the random variables
$(U,V,X,Y)$ with joint pmf given by 
\begin{align}
p_{UVXY}(\k,\mathbf{l},\L_i,\L_j) & = (1/M)(1/M)  p_i(\k) p_j(\mathbf{l}) & \notag \\
   & \qquad \text{ for } \k,\mathbf{l} \in \L \text{ and } \L_i,\L_j \in \Gp & 
\label{eq:Z1modN}
\end{align}
have properties (S1)--(S3).
\label{Zd_lemma}
\end{lemma}

We will now construct the characteristic functions $\varphi_j$ 
that satisfy the above lemma. Let $f$
be the (continuous) probability density function corresponding to the
compactly supported characteristic function $\psi$ in the hypothesis
of Theorem~\ref{thm:L_thm}. The function $f$ can be retrieved from $\psi$
by Fourier inversion:
\begin{align}
f(\x) &= \frac{1}{(2\pi)^d} \int_{\R^d} \psi(\t) e^{-i\langle \t,\x \rangle} \, d\t & \notag \\
& = \frac{1}{(2\pi)^d} \int_{\cV(\hat{\L}_0)} \psi(\t) e^{-i\langle \t,\x \rangle} \, d\t . &
\label{eq:Fou}
\end{align}
Note that each coset $\L_j$ can be expressed 
as $\u_j + \L_0$ for some $\u_j \in \L$. We set 
\begin{equation}
\varphi_j(\zetabf) = \sum_{\n \in \hat{\L}_0} \psi(\zetabf + \n) \, e^{-i \langle \n, \, \u_j \rangle}
\label{eq:phij}
\end{equation}
for all $\zetabf \in \R^d$. Then, by Proposition~\ref{prop:psf_Rd}, we have that $p_j$ is supported within $\L_j$, and 
\begin{equation}
p_j(\k) = (\det \L_0) \, f(\k)  \text{ for all } \k \in \L_j.
\label{eq:pj}
\end{equation}
Finally, define 
\begin{equation}
\varphi(\zetabf) = \sum_{\n \in \hat{\L}} \psi(\zetabf + \n) \,
\label{eq:phiUV}
\end{equation}
for all $\zetabf\in\R^d$. 

We make two claims: 
\begin{itemize}
\item[(i)] $\varphi^2 = \varphi \varphi_j$ for $j = 0,1,\ldots,M-1$;
\item[(ii)] $\varphi = \varphi_U = \varphi_V$.
\end{itemize}
Given these claims, by Lemma~\ref{Zd_lemma}, the random variables 
$U,V$ satisfy the properties \markchange{(S1)--(S3).} 

Both claims follow from the fact that $\hat{\L}$ is a sublattice of
$\hat{\L}_0$. (If a lattice $\G$ contains a sublattice $\G_0$, then the
dual $\G^*$ is a sublattice of $\G_0^*$.) To see (i), we re-write
(\ref{eq:phiUV}) as 
\begin{equation}
\varphi(\zetabf) = \sum_{\n \in \hat{\L}} \psi(\zetabf + \n) \, e^{-i \langle \n , \, \u_j \rangle}.
\label{eq:phiUV2}
\end{equation}
This is possible because, for $\n \in \hat{\L} = 2\pi\L^*$ and $\u_j \in \L$, we have $e^{-i \langle \n , \, \u_j \rangle} = 1$. Comparing (\ref{eq:phij}) and (\ref{eq:phiUV2}), and noting that $\psi$ is supported within $\cV(\hat{\L}_0)$, it is evident that $\supp(\varphi) := \{\zetabf: \varphi(\zetabf) \neq 0\}$ is contained in 
$\supp(\varphi_j) := \{\zetabf: \varphi_j(\zetabf) \neq 0\}$. 
Furthermore, for all $\zetabf \in \supp(\varphi)$, we have $\varphi(\zetabf) = \varphi_j(\zetabf)$. 
Claim~(i) directly follows from this.

For Claim~(ii), we note that $\cV(\hat{\L}_0) \subseteq \cV(\hat{\L})$, since $\hat{\L}$ is a sublattice of $\hat{\L}_0$. Hence, we can apply Proposition~\ref{prop:psf_Rd} to deduce that $\varphi$ is the characteristic function of a pmf $p$ supported within $\L$, with 
$$
p(\k) = (\det \L) \, f(\k) \text{ for all } \k \in \L.
$$
Thus, from (\ref{eq:pj}) and the fact that $(\det \L_0)/(\det \L) = M$, we see that $p = \frac{1}{M}\sum_{j=0}^{M-1}p_j$. In other words, $p = p_U = p_V$, which proves Claim~(ii).

\markchange{It remains to prove the statements concerning finiteness of $\E{\|U\|}^2$ and $\E{\|V\|}^2$. 
Theorem~1 in \cite{Wolfe73} shows that these moments are finite iff $\varphi$ is twice differentiable at $\0$
(i.e., all second-order partial derivatives exist at $\0$).
From (\ref{eq:phiUV}), we see that $\varphi$ agrees with $\psi$ in a small neighbourhood around $\0$; hence,
$\varphi$ is twice differentiable at $\0$ iff $\psi$ is twice differentiable at $\0$.}

\markchange{Assuming that $\psi$ has all second-order partial derivatives at $\0$, we must show that
$\E{\|U\|}^2 = \E{\|V\|}^2 = -\Delta \psi(\0)$.  Since $\U$ and $\V$ are identically distributed, it is enough to show that $\E{\|\U\|}^2 = -\Delta \psi(\0)$.}
Write $\U = (U_1,\ldots,U_d)$, so that ${\|\U\|}^2 = U_1^2 + \cdots + U_d^2$. 
We want to show that $\E[U_j^2] = -\frac{\partial^2}{\partial t_j^2}
\psi(\0)$, for $j = 1,\ldots,d$.  For notational simplicity, we show
this for $j=1$. Note that the characteristic function of \markchange{$U_1$} is given
by $\varphi_{U_1}(t_1) = \varphi_U(t_1,0,\ldots,0)$. As argued prior to the statement of
Theorem~\ref{mod2_thm} in Section~\ref{sec:constr}, \markchange{$\E[U_1^2] = -\varphi_{U_1}''(0)$}.
Now, $\varphi_{U_1}''(0) = \frac{\partial^2}{\partial
  t_1^2}\varphi_U(0,0,\ldots,0)$. From
(\ref{eq:phiUV}), we have that $\varphi_U = \psi$ in a small
neighbourhood around $\0 = (0,0,\ldots,0)$. Therefore, 
$\frac{\partial^2}{\partial t_1^2}\varphi_U(\0) 
= \frac{\partial^2}{\partial t_1^2}\psi(\0)$, and hence,
\markchange{$\E[U_1^2] = -\frac{\partial^2}{\partial t_1^2}\psi(\0)$}, as desired.

This concludes the proof of Theorem~\ref{thm:L_thm}.\qed

\section*{Appendix~C: ``Good'' Lattice Properties}

In this appendix, we briefly review certain ``good'' lattice properties, and some results in the literature. This is almost entirely based on~\cite{Erez05}.
Let $\{\L^{(d)}\}$ be a sequence of lattices, with each $\L^{(d)}$ chosen uniformly at random from a $(d,k,q)$ ensemble described in Section~\ref{sec:dkq_ensemble}. 

We say that that the sequence of lattices $\{\L^{(d)}\}$
is \emph{good for covering} if 
\[
      \lim_{d\to\infty}\frac{\rcov(\L^{(d)})}{\reff(\L^{(d)})}=1.
\]
We say that $\{\L^{(d)}\}$ is \emph{good for packing} if
\[
      \lim_{d\to\infty}\frac{\rpack(\L^{(d)})}{\reff(\L^{(d)})}\markchange{\geq}\frac{1}{2}.
\]
Let $\mathcal{G}_{\L^{(d)}}$ denote the normalized second moment per dimension of $\L^{(d)}$, as defined in Section~\ref{sec:latticedefns}. A sequence of lattices $\{\L^{(d)}\}$ is said to be \emph{good for MSE quantization} if $\mathcal{G}_{\L^{(d)}}\to\frac{1}{2\pi e}$ as $d\to\infty$.

Let $\bZ$ be a zero-mean $d$-dimensional white Gaussian vector having second moment per dimension equal to $\nsvar$. Let
\[
\markchange{      \mu:=\frac{\vol\big( \cV(\L^{(d)}) \big)^{2/d}}{\sigma^{2}}  }.
\]
 Then we say that $\{\L^{(d)}\}$ is \emph{good for AWGN channel coding} if the probability that $\bZ$ lies outside the fundamental Voronoi region of $\L^{(d)}$ is upper bounded by
\[
\Pr[ \bZ \notin \cV(\L^{(d)}) ] \leq e^{-d\big( E_{U}(\mu)-o_{d}(1) \big)}
\]
for all $\nsvar$ that satisfy $\mu\geq 2\pi e$.
Here, $E_{U}(\cdot)$, called the \emph{Poltyrev exponent} is defined as follows:
\begin{equation}
E_{U}(\mu) = \begin{cases}
\frac{\mu}{16\pi e} & \text{ if } 8\pi e \leq \mu \\
\frac{1}{2}\ln\frac{\mu}{8\pi} &\text{ if } 4\pi e \leq \mu \leq 8\pi e \\
\frac{\mu}{4\pi e} - \frac{1}{2}\ln\frac{\mu}{2\pi} &\text{ if } 2\pi e \leq \mu \leq 4\pi e.
\end{cases}
\label{eq:polty_exp}
\end{equation}
Suppose that we use a subcollection of points from $\L^{(d)}$ as the codebook for transmission over an AWGN channel. Then, as long as 
\[
\frac{\vol\big( \cV(\L^{(d)}) \big)^{2/d}}{\sigma^{2}} \geq 2\pi e,
\] 
the probability that a lattice decoder decodes to a lattice point other than the one that was transmitted, decays exponentially in the dimension $d$, with the exponent given by (\ref{eq:polty_exp}). 

It is worth noting that the above ``goodness'' properties are invariant to scaling. If $\{\Lfd\}$ is a sequence of lattices that is good for covering, packing, and AWGN channel coding, then for any $\alpha>0$,
$\{\alpha\Lfd\}$ is also good for covering, packing and AWGN channel coding. This is because of the fact that $\rpack(\alpha\Lfd)=\alpha\rpack(\Lfd)$, $\rcov(\alpha\Lfd)=\alpha\rcov(\Lfd)$, and  $\reff(\alpha\Lfd)=\alpha\reff(\Lfd)$.  

\section*{Appendix~D: Proof of Lemma~\ref{lemma:goodlattice1}}
In proving Lemma~\ref{lemma:goodlattice1}, we use the following theorem from~\cite{Erez05}, which says that if the parameters $k$ and $q$
are selected appropriately, then almost all lattices in a $(d,k,q)$ ensemble satisfy the ``goodness'' properties described in Appendix~C.
\begin{theorem}[\cite{Erez05}, Theorem 5]
 Let $0<r_\text{min}<\frac{1}{4}$ be chosen arbitrarily. Let $\L^{(d)}$ be a sequence of lattices selected uniformly at random from a $(d,k,q)$ ensemble, such that
\begin{itemize}
      \item $k\le\beta_1 d$ for some $0<\beta_1<1$, but $k$ grows faster than $\log^2 d$, and
      \item  $q$ is chosen so that $\reff(\L^{(d)})$, as given by (\ref{eq:reff_constA}), satisfies $r_\text{min}<\reff(\L^{(d)})<2r_\text{min}$. 
\end{itemize}
 Then, the sequence of lattices $\L^{(d)}$ is simultaneously good for covering, packing and MSE quantization,  with probability approaching $1$ as $d$ tends to infinity. If, in addition, we have $\beta_1<1/2$, then the sequence of lattices is also simultaneously good for AWGN channel coding with probability tending to $1$ as $d\to\infty$. 
\end{theorem}
 Therefore, if we choose $k$ and $q$ that satisfy the hypotheses of Lemma~\ref{lemma:goodlattice1}, then from the above theorem, the probability that a uniformly chosen $\Lcd$ satisfies condition $(G_1)$ tends to $1$ as $d\to\infty$. 

Recall from Section~\ref{sec:latticedefns} that if $\mathsf{A}$ is a generator matrix of a lattice $\L$, then the dual lattice of $\L$, denoted by $\L^{*}$, is the set of all integer linear combinations of the rows of $\mathsf{A}^{-1}$. It turns out that the dual of a Construction-A lattice is also a Construction-A lattice, as seen from the following.

\begin{proposition}
Suppose that $\mathsf{G}$ is \markchange{the  $k\times d$ systematic} generator matrix of a $(d,k)$ linear code $\cC$ over $\Z_q$, $q$ being prime, i.e., $\mathsf{G}$ has the form
\[
\mathsf{G}=\left[ \begin{array}{cc}
    \mathsf{I}_{k} & \mathsf{B} \end{array} \right],
\]
 where $\mathsf{I}_k$ denotes the $k\times k$ identity matrix. Let $\L(\cC)$ be the lattice obtained by employing Construction~A on the code $\cC$. Then, the matrix
\begin{equation}
\mathsf{A}=\frac{1}{q}\left[ \begin{array}{cc}
                       \mathsf{I}_{k} & \mathsf{B} \\
                       \mathsf{0} & q\mathsf{I}_{(d-k)} \end{array}\right]
\label{eq:Amatrix}
\end{equation}
is a generator matrix for the lattice $\L(\cC)$.
\label{prop:Amatrix}
\end{proposition}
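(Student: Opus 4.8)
The plan is to verify directly that the matrix $\bA$ in (\ref{eq:Amatrix}) generates $\L(\cC)$, i.e.\ that $\bA^T\Z^d = \L(\cC)$. Recall from the Construction~A recipe in Section~\ref{sec:dkq_ensemble} that $\L(\cC) = \cC' + \Z^d$, where $\cC' = (1/q)\,\cC(\bG)$ and $\cC(\bG) = \{(\bG^T\y)\bmod q : \y \in \Z_q^k\}$. Since $\bG = [\,\mathbf{I}_k \;\; \B\,]$, its transpose $\bG^T$ has $\mathbf{I}_k$ sitting above $\B^T$, so the rows of $\bA$ — equivalently, the vectors that span $\bA^T\Z^d$ over $\Z$ — are exactly $\frac{1}{q}\bG^T\mathbf{e}_j$ for $j = 1,\ldots,k$ (where $\mathbf{e}_j$ denotes the $j$th standard basis vector of $\R^d$) together with $\mathbf{e}_{k+1},\ldots,\mathbf{e}_d$. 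I would then prove the two set inclusions separately.

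For $\bA^T\Z^d \subseteq \L(\cC)$: since a lattice is closed under integer linear combinations, it suffices to place each spanning vector in $\L(\cC)$. For $j > k$ we have $\mathbf{e}_j \in \Z^d \subseteq \cC' + \Z^d = \L(\cC)$ (using $\0 \in \cC'$). For $j \le k$, the integer vector $\bG^T\mathbf{e}_j$ is the $j$th row of $\bG$ transposed, so its first $k$ coordinates are those of $\mathbf{e}_j$ and its last $d-k$ coordinates are the $j$th row of $\B$; because the entries of $\mathbf{I}_k$ and $\B$ are already representatives in $\{0,1,\ldots,q-1\}$ (here $q \ge 2$ since $q$ is prime), we get $(\bG^T\mathbf{e}_j)\bmod q = \bG^T\mathbf{e}_j$, hence $\bG^T\mathbf{e}_j \in \cC(\bG)$ and $\frac{1}{q}\bG^T\mathbf{e}_j \in \cC' \subseteq \L(\cC)$.

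For the reverse inclusion $\L(\cC) \subseteq \bA^T\Z^d$ it is enough, since $\L(\cC) = \cC' + \Z^d$, to show $\Z^d \subseteq \bA^T\Z^d$ and $\cC' \subseteq \bA^T\Z^d$. The vectors $\mathbf{e}_{k+1},\ldots,\mathbf{e}_d$ are among the generators, and for $j \le k$ the identity $\mathbf{e}_j = q\bigl(\frac{1}{q}\bG^T\mathbf{e}_j\bigr) - \sum_{i=1}^{d-k} \B_{ji}\,\mathbf{e}_{k+i}$ (which just cancels the last $d-k$ coordinates of $\bG^T\mathbf{e}_j$) exhibits $\mathbf{e}_j$ as an integer combination of the generators; hence $\Z^d \subseteq \bA^T\Z^d$. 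For $\cC'$: an arbitrary codeword is $c = (\bG^T\y)\bmod q = \bG^T\y - q\mathbf{m}$ for some $\y \in \Z_q^k$ (lifted to integer entries) and some $\mathbf{m} \in \Z^d$, so $\frac{1}{q}c = \sum_{j=1}^k y_j\bigl(\frac{1}{q}\bG^T\mathbf{e}_j\bigr) - \mathbf{m}$, and the right-hand side lies in $\bA^T\Z^d$ because each $\frac{1}{q}\bG^T\mathbf{e}_j$ is a generator and $\mathbf{m} \in \Z^d \subseteq \bA^T\Z^d$ by the previous step. Thus $\cC' \subseteq \bA^T\Z^d$.

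Putting the inclusions together gives $\bA^T\Z^d = \L(\cC)$; moreover $\bA$ is nonsingular, since it is $1/q$ times a block upper-triangular integer matrix and $\det\bA = q^{-d}\cdot q^{d-k} = q^{-k} \ne 0$ (this also reconfirms that $\Z^d$ has index $q^k$ in $\L(\cC)$, consistent with the systematic form of $\bG$ having rank $k$). So $\bA$ is a valid generator matrix. I do not anticipate any genuine obstacle — the argument is bookkeeping with Construction~A. The only point demanding a little care is that Construction~A reduces $\bG^T\y$ modulo $q$ \emph{before} scaling by $1/q$, so one has to confirm that adjoining $\Z^d$ afterwards recovers the full lattice rather than a strict sublattice; this is precisely the role of the correction term $q\mathbf{m}$ above, and it is why the inclusion $\Z^d \subseteq \bA^T\Z^d$ should be dispatched first.
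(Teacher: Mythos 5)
Your proof is correct and follows essentially the same route as the paper's Appendix~C: both arguments establish the two inclusions $\bA^{T}\Z^{d}\subseteq\L(\cC)$ and $\L(\cC)=\cC'+\Z^{d}\subseteq\bA^{T}\Z^{d}$, with the latter split into showing $\cC'\subseteq\bA^{T}\Z^{d}$ and $\Z^{d}\subseteq\bA^{T}\Z^{d}$. The only cosmetic difference is that you exhibit $\mathbf{e}_{j}\in\bA^{T}\Z^{d}$ by an explicit integer combination of the generators, whereas the paper invokes the dual generator matrix $\bA^{*}$ and the identity $\bA^{T}\bA^{*}=\mathbf{I}_{d}$.
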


\begin{proof}
      \markchange{We want to show that 
$\mathsf{A}^{T}\Z^{d}:= \{ \mathsf{A}^{T}\y : \y \in \Z^{d} \} = \L(\cC) $. 
By definition, $\L(\cC)= \{ \x\in \R^{d} : (q\x) \bmod q \in \cC \}$. 
Fix any $\z \in \Z^{d}$.
 Then, it can be verified that $ (q\mathsf{A}^{T}\z)\bmod q = (\mathsf{G}^{T}\hat{\z})\bmod q  $ (which is a codeword in $\cC$)
for some  $\hat{\z}\in\{ 0,1,\ldots,q-1 \}^k$.
Therefore, $(q\mathsf{A}^{T}\z)\bmod q \in \cC$, and hence, $\mathsf{A}^{T}\Z^{d}\subseteq \L(\cC)$.}
For the converse, define $\cC'=\{ \frac{1}{q}\c : \c \in \cC \}$. Then, $\L(\cC)= \cC'+\Z^{d}:= \{ \c+\z: \c\in \cC', \z\in \Z^{d} \}$. The set $\mathsf{A}^{T}\Z^{d}$ forms a group under (componentwise) addition. Hence, it is sufficient to show that $\cC'\subseteq \mathsf{A}^{T}\Z^{d}$, and $\Z^{d}\subseteq \mathsf{A}^{T}\Z^{d}$. Fix an arbitrary $\c\in \cC$. Let $\c' = \frac{1}{q}\c$. By definition, there exists an $\x\in \Z_{q}^{k}$ such that 
\begin{align}
 \c & = \left( \left[ \begin{array}{cc}
                       \mathsf{I}_{k} & \mathsf{B} \end{array}\right]^{T} \x \right)\bmod q & \notag \\
   &  = \left[ \begin{array}{c}
                        \x \\
                       \mathsf{B}^{T}\x \end{array}\right] - q \left[ \begin{array}{c}
								  \mathsf{0}\\
								    \z' \end{array}\right]
\end{align}
for some $\z'\in \Z^{d-k}$. Therefore,
\[
 \c = \left[ \begin{array}{cc}
                       \mathsf{I}_{k} & \mathsf{B} \\
                       \mathsf{0} & q\mathsf{I}_{(d-k)} \end{array}\right]^{T} \left[ \begin{array}{c}
											  \x \\
											  -\z' \end{array}\right].
\]

 Hence, there exists 
\[
\z = \left[ \begin{array}{c}
             \x \\
             \z '
            \end{array} \right] \in \Z^{d}
\]
 so that $\c'= \mathsf{A}^{T}\z$. Therefore, we can say that $\cC'\subseteq \mathsf{A}^{T}\Z^{d}$. Next, consider $\z\in \Z^{d}$. Let $\mathsf{A}^{*}$ be 
\markchange{defined as
\begin{equation}
\mathsf{A}^{*}=\left[\begin{array}{cc}
            q\mathsf{I}_{k} & \mathsf{0} \\
            -\mathsf{B}^{T} & \mathsf{I}_{(d-k)} \end{array} \right],
 \label{eq:Astar1}
\end{equation}}
 and note that $\mathsf{A}^{T}\mathsf{A}^{*}= \mathsf{I}_{d}$, the $d\times d$ identity matrix. Let $\z ' =\mathsf{A}^{*}\z\in \Z^{d}$. Then, $\mathsf{A}^{T}\z'= \mathsf{A}^{T}(\mathsf{A}^{*}\z) = (\mathsf{A}^{T}\mathsf{A}^{*})\z= \z$. Hence, we can say that for every $\z\in \Z^{d}$, there exists a $\z ' \in \Z^{d} $ so that \markchange{$\z =\mathsf{A}^{T}\z'$,} and hence $\Z^{d}\subseteq \mathsf{A}^{T}\Z^{d}$,  thus concluding the proof. 
\end{proof}

 It can be shown in a similar manner that if $\mathsf{G}$ has the form 
\[
 \mathsf{G}=\left[ \begin{array}{cc}
    \mathsf{B} & \mathsf{I}_{k} \end{array} \right],
\]
then, 
\[
 \mathsf{A}=\frac{1}{q}\left[ \begin{array}{cc}
                       \mathsf{B} & \mathsf{I}_{k} \\
                       q\mathsf{I}_{(d-k)} & \mathsf{0}  \end{array}\right]
\]
 is a generator matrix for $\L(\cC)$.

It is easy to verify that if $\mathsf{A}$ is full rank, then  $\mathsf{A}^{*}$ defined in (\ref{eq:Astar1}) is the inverse of $\mathsf{A}$, and $\mathsf{A}^{*}$ is a generator matrix of $\L^{*}(\cC)$.
Since a permutation of the rows of a generator matrix of a lattice also yields a valid generator matrix for the same lattice,
\[
 \mathsf{A}^{*}_{1}=\left[\begin{array}{cc}
            -\mathsf{B}^{T} & \mathsf{I}_{(d-k)} \\
            q\mathsf{I}_{k} & \mathsf{0} \end{array} \right]
\]
is also a generator matrix for $\L^{*}(\cC)$. If $\cC^{\perp}$ denotes the dual code of $\cC$, then $\cC^{\perp}$ has a generator matrix~\cite{Roth}
\[
 \mathsf{G}=\left[ \begin{array}{cc}
    -\mathsf{B}^{T} & \mathsf{I}_{(d-k)} \end{array} \right].
\]
We thus have the following result.
\begin{lemma}
Let $\cC$, $\mathsf{G}$, $\L(\cC)$ be as in Proposition~\ref{prop:Amatrix}. Then, the dual of $\L(\cC)$, denoted by $\L^{*}(\cC)$, has generator matrix
\begin{equation}
\mathsf{A}^{*}=\left[\begin{array}{cc}
            q\mathsf{I}_{k} & \mathsf{0} \\
            -\mathsf{B}^{T} & \mathsf{I}_{(d-k)} \end{array} \right].
\label{eq:Astar}
\end{equation}
Therefore, $\L^{*}(\cC)=q\L(\cC^{\perp})$, where $\cC^{\perp}$ denotes the dual code of $\cC$.
\label{lemma:Astar}
\end{lemma}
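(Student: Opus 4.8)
We are given a Construction-A lattice $\L(\cC)$ arising from a code $\cC$ over $\Z_q$ ($q$ prime) with generator matrix $\bG = [\mathbf{I}_k \ \B]$, and by Proposition~\ref{prop:Amatrix} it has generator matrix $\bA$ as in (\ref{eq:Amatrix}). We must show the dual lattice $\L^*(\cC)$ has generator matrix $\bA^*$ as in (\ref{eq:Astar}).

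**Plan.** The cleanest route is to use the standard fact that if $\bA$ is a generator matrix of a full-rank lattice $\L$ (with rows forming a basis, as in the paper's convention $\L = \bA^T\Z^d$), then $(\bA^{-1})^T$ is a generator matrix of $\L^*$; equivalently $\bA^{-T}$ generates $\L^*$ in the paper's row convention, or in column-convention $(\bA^{-1})$ — one must be careful to match the convention fixed in the Notation/lattice section, where $\L = \bA^T\Z^d$ and $\L^* = \bA^{-1}\Z^d$. So the core task reduces to a linear-algebra identity: verify that the proposed $\bA^*$ satisfies $(\bA^*) = \bA^{-1}$ up to right-multiplication by a unimodular integer matrix (i.e.\ $\bA^*$ and $\bA^{-1}$ generate the same lattice), equivalently that $(\bA^*)^T \bA \in \Z^{d\times d}$ is unimodular, or even more simply that $\bA^{-1}$ and $\bA^*$ differ by a signed permutation / unimodular factor.

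**Key steps.** First I would recall/state the duality fact $\L^* = \bA^{-1}\Z^d$ from the lattice preliminaries already in the excerpt (it is stated there: ``It is not difficult to check that $\L^* = \bA^{-1}\Z^d$''). Second, I would compute $\bA^{-1}$ directly from the block form (\ref{eq:Amatrix}): since
\[
\bA = \frac{1}{q}\begin{bmatrix} \mathbf{I}_k & \B \\ \mathbf{0} & q\mathbf{I}_{d-k}\end{bmatrix},
\]
its inverse is obtained by block upper-triangular inversion, giving
\[
\bA^{-1} = \begin{bmatrix} q\mathbf{I}_k & -\B \\ \mathbf{0} & \mathbf{I}_{d-k}\end{bmatrix}.
\]
Third, I would observe that $\bA^*$ as claimed in (\ref{eq:Astar}) is
\[
\bA^* = \begin{bmatrix} q\mathbf{I}_k & \mathbf{0} \\ -\B^T & \mathbf{I}_{d-k}\end{bmatrix},
\]
which is exactly the transpose of $\bA^{-1}$. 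So it remains only to reconcile transpose-versus-not with the lattice generator convention: a matrix $M$ and its transpose $M^T$ need not generate the same lattice in general, so the genuine content is to check that $\bA^* \Z^d = \bA^{-1}\Z^d$ (in whichever convention the paper uses for duals). One does this by exhibiting a unimodular $d\times d$ integer matrix $\bU$ with $\bA^* = \bA^{-1}\bU$ (or with $\bA^* = \bU\bA^{-1}$), or alternatively by directly verifying the defining property: every row of $\bA^*$ has integer inner product with every row of $\bA$, and $|\det \bA^*| \cdot |\det \bA| = 1$, which by the characterization of the dual (maximality among integer-pairing lattices of the right covolume) forces $\bA^*\Z^d = \L^*$.

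**Main obstacle.** The only delicate point — and the one I would be most careful about — is the bookkeeping of conventions: row-spanned versus column-spanned lattices, and whether $\L^*$ is generated by $\bA^{-1}$ or by $(\bA^{-1})^T = \bA^{-T}$. The paper fixes $\L = \bA^T\Z^d$ and $\L^* = \bA^{-1}\Z^d$, so I would consistently use those, and then the identity to check is simply whether $\bA^*$ and $\bA^{-1}$ have the same column lattice. Given the computation above shows $\bA^* = (\bA^{-1})^T$, the real assertion is that for these particular block-triangular matrices the row lattice and column lattice coincide; I expect this to come out cleanly because $\bA^{-1}$ has an integer entries structure making the column operations that transpose it visibly unimodular (permute the two blocks appropriately and use that $q\mathbf{I}_k$ combined with $\mathbf{I}_{d-k}$ and integer $\B$ generate a lattice symmetric under the relevant reindexing). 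I would write this reconciliation explicitly rather than wave at it, since it is the one place a sign or a transpose could silently go wrong; everything else is routine block matrix algebra. The verification that $\bG' := [q\mathbf{I}_k \ \mathbf{0}; -\B^T \ \mathbf{I}_{d-k}]$ reduced mod-$q$ is (a generator matrix of) the dual code $\cC^\perp$ of $\cC$ — which is the ``Construction-A respects duality'' phenomenon underlying this lemma — can be mentioned as the conceptual reason, but the direct matrix computation is the shortest rigorous path.
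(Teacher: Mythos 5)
Your core computation is exactly the paper's proof: the paper simply verifies by block multiplication that $\bA(\bA^{*})^{T}=\I_{d}$, i.e.\ $(\bA^{*})^{T}=\bA^{-1}$, which is the same content as your computation of $\bA^{-1}=\left[\begin{smallmatrix} q\I_{k} & -\B \\ \0 & \I_{d-k}\end{smallmatrix}\right]$ together with the observation $\bA^{*}=(\bA^{-1})^{T}$. Combined with the fact already recorded in the paper that $\L^{*}=\bA^{-1}\Z^{d}$ when $\L=\bA^{T}\Z^{d}$, this already finishes the argument: in the paper's convention a generator matrix of $\L^{*}(\cC)$ is any matrix whose \emph{rows} form a basis of $\L^{*}(\cC)$, i.e.\ any $\bM$ with $\bM^{T}\Z^{d}=\bA^{-1}\Z^{d}$, and $\bM=\bA^{*}$ satisfies this with the exact matrix equality $\bM^{T}=\bA^{-1}$. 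No unimodular factor and no further reconciliation is needed.

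The step you single out as the delicate one is therefore spurious, and, more importantly, the auxiliary statement you propose to verify there is false. You reduce the problem to checking that $\bA^{*}$ and $\bA^{-1}$ have the same \emph{column} lattice, equivalently that the row lattice and the column lattice of $\bA^{-1}$ coincide, and you say you expect this to come out cleanly via unimodular column operations. It does not: take $d=2$, $k=1$, $q=2$, $\B=[1]$. Then $\bA^{-1}=\left[\begin{smallmatrix}2 & -1\\ 0 & 1\end{smallmatrix}\right]$ has column lattice $\{(m,n)\in\Z^{2}: m+n \text{ even}\}$, which is indeed $\L^{*}(\cC)$ here, whereas $(\bA^{-1})^{T}=\bA^{*}$ has column lattice $2\Z\times\Z$, a strictly different lattice (it misses $(1,1)$). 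The transpose in $\bA^{*}=(\bA^{-1})^{T}$ is not to be explained away by a unimodular factor; it is precisely the transpose built into the paper's convention, under which $\L^{*}$ is the column span of $\bA^{-1}$ but the \emph{row} span of its generator matrix. Had you carried out the reconciliation you planned, it would have failed; delete it, invoke the row convention directly, and your proof is complete and coincides with the paper's.
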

%
%
 Since $\L^{*}(\cC)=q\L(\cC^{\perp})$, if the generator matrix is full-rank, then $\L(\cC^{\perp})$ belongs to a $(d,d-k,q)$ ensemble. Therefore, from~\cite{Erez05}, we can say that a randomly picked $\L(\cC^{\perp})$ is good for packing and covering with probability tending to 1 as $d\to \infty$, as long as $d-k\leq \beta_1 d$ for some \markchange{$0<\beta_1<1$}, and $d-k$ grows faster than $\log^{2}d$. From the definitions, we see that the properties of covering and packing goodness are invariant to any scaling of the lattices. Therefore, if $\L(\cC^{\perp})$ is good for packing and covering, then $q\L(\cC^{\perp})$, and hence $\L^{*}(\cC)$ is also good for packing and covering. \markchange{We have seen that the probability of $\{ \Lcd \}$ being simultaneously good for covering, packing and AWGN channel coding tends to $1$ as  $d$ tends to $\infty$. If we choose $k=\beta_1 d$ for some $\beta_1<1/2$, then the sequence of dual lattices is good for packing with probability tending to $1$ as $d\to\infty$. Using the union bound, we can argue that a randomly picked sequence of coarse lattices satisfies $(G_1)$ and $(G_2)$ with probability going to $1$ as $d\to \infty$.} .

It was also shown in~\cite{Erez04} that if the coarse lattices are good for covering and AWGN channel coding, then as long as $d/q_1\to 0$ as $d\to\infty$, the probability that a uniformly chosen sequence of fine lattices is good for AWGN channel coding tends to $1$ as $d\to\infty$. 
This completes the proof of Lemma~\ref{lemma:goodlattice1}.

\section*{Appendix~E: Proof of Lemma~\ref{lemma:goodlattice2}}
For ease of notation, denote by $\reff$, the effective radius of $\Lcd$. The index, $d$, in $\reff$ has been dropped but it must be understood that this is a function of $d$.  Let $\cC^{(d)}$ denote the $(d,k)$ code over $\Z_{q}$ that is used to generate the coarse lattice. 
Using (\ref{eq:reff_constA}),
\begin{align}
q^{k}& = \frac{\Gamma(d/2+1)}{\pi^{d/2}\reff^{d}} & \notag \\
  &   = \sqrt{d\pi}\left(\frac{d}{2\pi e \reff^{2}} \right)^{d/2}(1+o_d(1)), & \label{eq:qk_reff_approx}
\end{align}
where the second step uses Stirling's approximation, and $o_d(1)$ is a term that approaches 0 as $d\to \infty$. From (\ref{eq:k_k1}), $k=\beta_0 d$ for some $0<\beta_0<1/2$. Substituting this in the above, and raising both sides to the power $1/d$, we get
\begin{equation}
q^{\beta_0} = (d\pi)^{\frac{1}{2d}}\left(\frac{d}{2\pi e \reff^{2}} \right)^{1/2} (1+o_d(1))^{1/d}=(d\pi)^{\frac{1}{2d}}\frac{\sqrt{d}}{\sqrt{2\pi e}\reff}(1+o_d(1)).
     \label{eq:AE_1}
\end{equation}
Let $\L_0^{(d)*}$ denote the dual of $\Lcd$, and $\reff^{*}$ denote the effective radius of $\L_0^{(d)*}$. Let $\L_0(\cC^{(d)\perp})$ be the lattice obtained by applying Construction-A on the dual of $\cC^{(d)}$, i.e., on $\cC^{(d)\perp}$. As remarked in Appendix~D, $\L_0(\cC^{(d)\perp})$ comes from a $(d,d-k,q)$ ensemble. From Lemma~\ref{lemma:Astar}, $\L_0^{(d)*}=q\L_0(\cC^{(d)\perp})$. Therefore, $(1/q)\L_0^{(d)*}= \L_0(\cC^{(d)\perp})$ will satisfy
\[
q^{d-k}=\sqrt{d \pi}  \left( \frac{d}{2\pi e \left(\reff(\frac{1}{q}\L_0^{(d)*})\right)^{2}} \right)^{d/2} (1+o_d(1)),
\]
where $o_d(1) \to 0$ as $d\to \infty$.
But $\reff(\frac{1}{q}\L_0^{(d)*}) = \frac{1}{q} \reff^{*}$, and hence, analogous to (\ref{eq:AE_1}), we have
\begin{equation}
q^{d(1-\beta_0)}  =  \sqrt{d \pi} \left( \frac{d}{2 \pi e (1/q)^{2} (\reff^{*})^{2}} \right)^{d/2} (1+o_d(1)) .
\end{equation}
Rearranging,
\begin{equation}
\reff^{*}   =  (d\pi)^{\frac{1}{2d}} \frac{\sqrt{d}q^{\beta_0}}{\sqrt{2 \pi e}} (1+o_d(1))^{1/d}  .
\label{eq:reffstar}
\end{equation}

Let the packing radius of $\L_0^{(d)*}$ be $\rpack(\L_0^{(d)*})=\gamma(d)\reff^{*}$. From the definition of the packing radius,  $\gamma(d)\leq 1$ for all $d$. Again, since the dual lattice is good for packing, $\lim_{d\to \infty}\gamma(d)\geq 1/2$. Also, since $o_d(1)\to 0$ as $d\to\infty$, we have $(1+o_d(1))^{1/d}=(1+o_d(1))$. Therefore, we have,
\begin{align}
\reff(\L_0^{(d)})\rpack(\L_0^{(d)*}) & = \gamma(d) \reff(\Lcd) (d\pi)^{(1/2d)} \frac{\sqrt{d}q^{\beta_0}}{\sqrt{2 \pi e}} (1+o_d(1)) .& \notag 
\end{align}
Substituting for $q^{\beta_0}$ from (\ref{eq:AE_1}) in the above equation, we get
\begin{align}
\frac{\reff(\L_0^{(d)})\rpack(\L_0^{(d)*})}{d} & = \gamma(d) (d \pi)^{(1/d)} \frac{1}{2\pi e} (1+o_d(1)) .& 
\end{align}
Therefore, as $d\to \infty$, the above expression converges to a value greater than or equal to $1/4\pi e$. Using $\rpack(\hLcd)=2\pi\rpack(\L_0^{(d)*})$, we get Lemma~\ref{lemma:goodlattice2}.
\qed
\section*{Appendix~F: Proof of Theorem~\ref{thm:vardistbound}}
The following lemma from~\cite{Ling13} will be used in the proof.
\begin{lemma}[\cite{Ling13}, Lemma~4]
	Let $\L$ be a lattice in $\R^d$. Then, for all $\z \in \R^d$, and $\kappa>0$,
	\[
		\frac{1-\epsilon_{\L}(\kappa)}{1+\epsilon_{\L}(\kappa)}\leq \frac{g_{\kappa,\z}(\L)}{g_{\kappa}(\L)} \leq 1.
	\]
	\label{lemma:g_sigma}
\end{lemma}

	For ease of notation, we will suppress the index $d$ in $\epsilon^{(d)}$, $\Lcd$ and $\Lfd$. 
	We will find upper and lower bounds for $p_{U+V}(\u)$ and $p_{U+V|\x}(\u)$, and then use these to get an upper bound on the absolute value of the difference between the two.

	For a message $X$ chosen at node $\tA$, let $\x$ be the coset representative of $X$ from $\Lf \cap \cV(\Lc)$.
	For any subset $S \subseteq \R^d$, let $\mathbf{1}_S(\cdot)$ 
	denote the indicator function of $S$, i.e., $\mathbf{1}_S(\u)$  is $1$ if $\u\in S$, and $0$ otherwise.
	 From (\ref{eq:pj_strong}), with $\kappa = \sqrt{\cP}$, we have
	\begin{equation}
		p_{U|\x}(\u)= \frac{g_{\sqrt{\cP}}(\u)}{g_{\sqrt{\cP},-\x}(\Lc)} \mathbf{1}_{\Lc+\x}(\u). 
	\label{eq:pux}	
	\end{equation}

	Let $\Gx:=\Lf\cap \cV(\Lc)$, and $M:=|\Gpd|=|\Gx|$. 
	Since the messages are uniformly distributed,
	\begin{equation}
		p_U(\u) =\sum_{\x\in \Gx} \frac{g_{\sqrt{\cP}}(\u)}{g_{\sqrt{\cP},-\x}(\Lc)}\frac{\mathbf{1}_{(\Lc+\x)}(\u)}{M}.  
                \label{eq:pu}
	\end{equation}
	\markchange{By monotonicity of the flatness factor,} $\epsilon_{\Lc}(\sqrt{\cP})<\epsilon_{\Lc}(\sqrt{\cP/2})=\epsilon$, and using Lemma~\ref{lemma:g_sigma},
	  \[
	       \frac{g_{\sqrt{\cP}}(\u)}{ g_{\sqrt{\cP}}(\Lc)} \leq \frac{g_{\sqrt{\cP}}(\u)}{g_{\sqrt{\cP},-\x}(\Lc)}\leq \frac{g_{\sqrt{\cP}}(\u)}{g_{\sqrt{\cP}}(\Lc)} \frac{1+\epsilon}{1-\epsilon}.
	  \]
     Using this in (\ref{eq:pu}), we get for $\u\in \L$,
	\begin{equation}
		\frac{g_{\sqrt{\cP}}(\u)}{M g_{\sqrt{\cP}}(\Lc)} \leq p_U(\u) \leq \frac{g_{\sqrt{\cP}}(\u)}{M g_{\sqrt{\cP}}(\Lc)} \frac{1+\epsilon}{1-\epsilon}.
		\label{eq:pu_bounds}
	\end{equation}
We will require bounds on $g_{\sqrt{\cP}}(\Lf)$ in the proof. 
Rearranging the terms above,
\[
	\left(\frac{1-\epsilon}{1+\epsilon}\right) p_U(\u)M g_{\sqrt{\cP}}(\Lc)  \leq g_{\sqrt{\cP}}(\u) \leq p_U(\u)M g_{\sqrt{\cP}}(\Lc).
\]
Since $p_{U}$ is a pmf supported over $\Lf$, and $\sum_{\u\in\Lf}p_{U}(\u)=1$, we can get
\begin{equation}
	\left(\frac{1-\epsilon}{1+\epsilon}\right) M g_{\sqrt{\cP}}(\Lc)  \leq g_{\sqrt{\cP}}(\Lf) \leq M g_{\sqrt{\cP}}(\Lc).
	\label{eq:8}
\end{equation}
It can be similarly verified that
for any $\a\in \R^n$,
\begin{equation}
	\left(\frac{1-\epsilon}{1+\epsilon}\right) M g_{\sqrt{\frac{\cP}{2}},\a}(\Lc)  \leq g_{\sqrt{\frac{\cP}{2}},\a}(\Lf) \leq M g_{\sqrt{\frac{\cP}{2}},\a}(\Lc).
	\label{eq:p_Lbound}
\end{equation}

We establish some more notation for convenience. Let
\begin{equation}
	\alpha(\w):=\frac{g_{\sqrt{2\cP}}(\w)}{Mg_{\sqrt{\cP}}(\Lc)}\frac{g_{\sqrt{\frac{\cP}{2}}}(\Lc)}{g_{\sqrt{\cP}}(\Lc)},
	\label{eq:defalpha}
\end{equation}
\begin{equation}
	\beta(\x,\w):=\left(\frac{g_{\sqrt{\frac{\cP}{2}},\frac{\w}{2}-\x}(\Lc)}{g_{\sqrt{\frac{\cP}{2}}}(\Lc)}\right) \left(\frac{g_{\sqrt{\cP},-\x}(\Lc)}{g_{\sqrt{\cP}}(\Lc)}\right)^{-1}.
	\label{eq:defnbeta}
\end{equation}
We can bound $p_{U+V|\x}$ and $p_{U+V}$ as follows.

\begin{lemma}
	For any lattice point $\w\in\Lf$, and any $\x\in\Gx$, we have 
	\begin{equation}
	\left( \frac{1-\epsilon}{1+\epsilon} \right) \alpha(\w) \leq p_{U+V}(\w) \leq \left( \frac{1+\epsilon}{1-\epsilon} \right)^2 \alpha(\w)
	\label{eq:puv_bound2}
\end{equation}
\begin{equation}
	\beta(\x,\w) \alpha(\w) \leq p_{U+V|\x}(\w) \leq \left( \frac{1+\epsilon}{1-\epsilon} \right) \beta(\x,\w)\alpha(\w).
	\label{eq:puvx_bound2}
\end{equation}
	\label{lemma:bounds}
\end{lemma}
\begin{proof}
	Let $\x$ be any fine lattice point from $\Gx$. Then,
\begin{align}
	p_{U+V|\x}(\w) &= \sum_{\t\in \Lc+\x}p_{U|\x}(\t)p_{V}(\w-\t) .& \notag 
\end{align}	
Using (\ref{eq:pux}) and (\ref{eq:pu_bounds}) in the above equation, we obtain
\begin{equation}
	\sum_{\t\in \Lc+\x}\frac{g_{\sqrt{\cP}}(\t)}{g_{\sqrt{\cP},-\x}(\Lc)}\frac{g_{\sqrt{\cP}}(\w-\t)}{M g_{\sqrt{\cP}}(\Lc)} \leq p_{U+V|\x}(\w) \leq \sum_{\t\in \Lc+\x}\frac{g_{\sqrt{\cP}}(\t)}{g_{\sqrt{\cP},-\x}(\Lc)}\frac{g_{\sqrt{\cP}}(\w-\t)}{M g_{\sqrt{\cP}}(\Lc)} \left(\frac{1+\epsilon}{1-\epsilon}\right).
	\label{eq:1}
\end{equation}
Consider the term
\begin{align}
 \sum_{\t\in \Lc+\x}\frac{g_{\sqrt{\cP}}(\t)}{g_{\sqrt{\cP},-\x}(\Lc)}\frac{g_{\sqrt{\cP}}(\w-\t)}{M g_{\sqrt{\cP}}(\Lc)} 
	&=\frac{1}{g_{\sqrt{\cP},-\x}(\Lc)}\frac{1}{M g_{\sqrt{\cP}}(\Lc)}\sum_{\t\in \Lc+\x}\frac{e^{\left( -\frac{\Vert \t \Vert^2}{2\cP} -\frac{\Vert \t-\w \Vert^2}{2\cP}\right)}}{(2\pi\markchange{\cP})^d} & \notag \\
	&=\frac{1}{M g_{\sqrt{\cP}}(\Lc)g_{\sqrt{\cP},-\x}(\Lc)}\sum_{\t\in \Lc+\x}\frac{e^{\left( -\frac{\Vert \w \Vert^2}{4\cP} -\frac{\Vert \t-\frac{\w}{2} \Vert^2}{\cP}\right)}}{(2\pi\markchange{\cP})^d} & \notag \\
	&=\frac{g_{\sqrt{2\cP}}(\w)}{M g_{\sqrt{\cP}}(\Lc)g_{\sqrt{\cP},-\x}(\Lc)}\sum_{\t\in \Lc+\x} g_{\sqrt{\frac{\cP}{2}},\frac{\w}{2}}(\t) & \notag \\
	&=\frac{g_{\sqrt{2\cP}}(\w)}{M g_{\sqrt{\cP}}(\Lc)} \frac{g_{\sqrt{\frac{\cP}{2}},\frac{\w}{2}-\x}(\Lc)}{g_{\sqrt{\cP},-\x}(\Lc)} . &
\end{align}

Substituting this in (\ref{eq:1}), and writing this in terms of $\alpha$ and $\beta$, we obtain (\ref{eq:puvx_bound2}).
Similarly, bounding both $p_{U}$ and $p_V$ from above and below using (\ref{eq:pu_bounds}), proceeding as above,
and finally using (\ref{eq:p_Lbound}) to bound $g_{\sqrt{\frac{\cP}{2}},\frac{\w}{2}}(\Lf)$, we get (\ref{eq:puv_bound2}).
\end{proof}

Observe that $\beta(\x,\w)$ in (\ref{eq:defnbeta}) is a ratio of two terms, both of which can be bounded using Lemma~\ref{lemma:g_sigma} to get
\begin{equation}
	\left( \frac{1-\epsilon}{1+\epsilon} \right)\leq \beta(\x,\w)\leq  \left( \frac{1+\epsilon}{1-\epsilon} \right).
	\label{eq:boundbeta}
\end{equation}

Let $\overline{p}_{U+V}$ and $\underline{p}_{U+V}$ respectively denote the upper and lower bounds for $p_{U+V}$ in (\ref{eq:puv_bound2}), and let $\overline{p}_{U+V|\x}$ and $\underline{p}_{U+V|\x}$ respectively denote the upper and lower bounds for $p_{U+V|\x}$ in (\ref{eq:puvx_bound2}).
Then, we can say that
$\vert p_{U+V|\x}(\w)-p_{U+V}(\w) \vert$ is less than or equal to the maximum of $\vert\overline{p}_{U+V|\x}(\w)-\underline{p}_{U+V}(\w)\vert$ and $\vert\underline{p}_{U+V|\x}(\w)-\overline{p}_{U+V}(\w)\vert $. 
	
Substituting for $\vert \overline{p}_{U+V|\x}(\w)-\underline{p}_{U+V}(\w) \vert$, we get
\begin{equation}
     \vert \overline{p}_{U+V|\x}(\w)-\underline{p}_{U+V}(\w) \vert  = \alpha(\w)\left(\frac{1-\epsilon}{1+\epsilon} \right)\left\vert \left(\frac{1+\epsilon}{1-\epsilon} \right)^2 \beta(\x,\w) - 1  \right\vert . \label{eq:6}
\end{equation}
However, from (\ref{eq:boundbeta}), we see that
\[
	1< \left(\frac{1+\epsilon}{1-\epsilon} \right) \le \left(\frac{1+\epsilon}{1-\epsilon} \right)^2\beta(\x,\w) \le \left(\frac{1+\epsilon}{1-\epsilon} \right)^3,
\]
and for $\epsilon\le 1/2$, we have $\left(\frac{1+\epsilon}{1-\epsilon} \right)^3\le 1+64\epsilon$. Therefore,
\begin{equation}
	\vert \overline{p}_{U+V|\x}(\w)-\underline{p}_{U+V}(\w) \vert \le \alpha(\w)\left(\frac{1-\epsilon}{1+\epsilon} \right) 64\epsilon.
	\label{eq:diff_b1}
\end{equation}
Similarly, expressing $\vert \underline{p}_{U+V|\x}(\w)-\overline{p}_{U+V}(\w) \vert$ in terms of $\alpha$ and $\beta$, and using the fact that $\left((1-\epsilon)/(1+\epsilon) \right)^3\geq 1-8\epsilon$ for $\epsilon<1/2$, we get
\begin{equation}
	\vert \underline{p}_{U+V|\x}(\w)-\overline{p}_{U+V}(\w) \vert \le \alpha(\w)\left(\frac{1+\epsilon}{1-\epsilon} \right)^2 8\epsilon.
	\label{eq:diff_b2}
\end{equation}
Rearranging (\ref{eq:puv_bound2}), and observing that $\sum_{\w\in\Lf}p_{U+V}(\w)=1$, we have
\begin{equation}
	\left( \frac{1-\epsilon}{1+\epsilon} \right)^2\leq \sum_{\w\in\Lf}\alpha(\w)\leq \left( \frac{1+\epsilon}{1-\epsilon} \right).
	\label{eq:sum_alpha}
\end{equation}
Combining (\ref{eq:diff_b1}) and (\ref{eq:diff_b2}), and summing over $\w$, we get 
\begin{align}
	\mathbb{V}(p_{U+V},p_{U+V|\x}) & \leq \sum_{\w\in\Lf}\alpha(\w)\max \left\{ \left(\frac{1-\epsilon}{1+\epsilon} \right) 64\epsilon , \left(\frac{1+\epsilon}{1-\epsilon} \right)^2 8\epsilon \right\},  &\notag 
\end{align}
and using (\ref{eq:sum_alpha}) to bound $\sum_{\w\in\Lf}\alpha(\w)$ from above, we get
\begin{align}
\mathbb{V}(p_{U+V},p_{U+V|\x}) & \leq \max \left\{  64\epsilon , \left(\frac{1+\epsilon}{1-\epsilon} \right)^3 8\epsilon \right\} \leq \max \left\{ 64\epsilon , 27\times 8\epsilon \right\}, &\notag
\end{align}
since $\epsilon\leq 1/2$. Therefore,
\[
	\mathbb{V}(p_{U+V},p_{U+V|\x})\leq 216\epsilon,
\]
thereby completing the proof.\qed

\section*{Appendix~G: Proof of Lemma~\ref{lemma:MMSEeff_noise}}

\markchange{Recall that $\bZ$ is the additive Gaussian noise vector in the MAC phase having mean zero and variance $\nsvar$, and $\bZ'$ denotes the additive noise in the effective MLAN channel, and is equal to $(\cmmse-1)(\U+\V)+\cmmse \bZ$. Let $\mathbf{N}$ denote a zero-mean Gaussian vector with covariance matrix $((1-\cmmse)^22\cP+(\cmmse)^2 \nsvar)\mathsf{I}_d$, and $\mathbf{N}'$ denote a zero-mean Gaussian vector with covariance matrix $((1-\cmmse)^2\cP+(\cmmse)^2\nsvar)\mathsf{I}_d$.
Let $f_{\mathbf{N}}$ and $f_{\mathbf{N}'}$ denote the densities of $\mathbf{N}$ and $\mathbf{N}'$ respectively, and $f_{\U'|\x}$ denote the density of  $\U':=(\cmmse-1)\U+\mathbf{N}'$ conditioned on $\mathbf{X}=\x$. Let $f_{\V'|\y}$ denote the density function of $\V':=(\cmmse-1)\V+\cmmse\bZ$ conditioned on $\mathbf{Y}=\y$. Then, we can write
\[
    \mathbb{V}(f_{\bZ'|\x,\y},f_{\mathbf{N}})\leq  \mathbb{V}(f_{\bZ'|\x,\y},f_{\U'|\x})+ \mathbb{V}(f_{\U'|\x},f_{\mathbf{N}}).
\]
But
\begin{align*}
     \mathbb{V}(f_{\bZ'|\x,\y},f_{\U'|\x}) &= \int_{\w\in\R^d} \vert f_{Z'|\x,\y}(\w)-f_{\U'|\x}(\w) \vert \: d\w & \\
                                                                         &= \int_{\w\in\R^d}\biggl\vert \sum_{\u\in\Lcd+\x} p_{\U|\x}(\u)\Big(f_{\V'|\y}(\w-(\cmmse-1)\u)-f_{\mathbf{N}'}(\w-(\cmmse-1)\u)\Big) \biggr\vert \: d\w & \\
                                                                         &\leq \sum_{\u\in\Lcd+\x}p_{\U|\x}(\u) \left( \int_{\w\in\R^d} \left\vert f_{\V'|\y}(\w-(\cmmse-1)\u)-f_{\mathbf{N}'}(\w-(\cmmse-1)\u) \right\vert \:  d\w \right) & \\
                                                                         &= \sum_{\u\in\Lcd+\x}p_{\U|\x}(\u) \mathbb{V}(f_{\V'|\y},f_{\mathbf{N}'}) &\\
                                                                         & =  \mathbb{V}(f_{\V'|\y},f_{\mathbf{N}'}).
\end{align*}
Therefore, 
\begin{equation}
  \mathbb{V}(f_{\bZ'|\x,\y},f_{\mathbf{N}})\leq   \mathbb{V}(f_{\V'|\y},f_{\mathbf{N}'})+\mathbb{V}(f_{\U'|\x},f_{\mathbf{N}}).
  \label{eq:vdist_mmsenoise}
\end{equation}
\begin{lemma}[\cite{Ling13}, Lemma 8]
  Let $\L$ be a lattice in $\R^d$, $\x\in\R^d$, and $\sigma_1,\sigma_2>0$. Let $\U$ be a random vector supported on $\L+\x$, having pmf $g_{\sigma_1}(\u)/g_{\sigma_1,\x}(\L)$. If $\bZ$ is an iid Gaussian random vector with mean zero and variance $\sigma_2^2$, and $\epsilon_{\L}\left(\frac{\sigma_1\sigma_2}{\sqrt{\sigma_1^2+\sigma_2^2}}\right)<1/2$, then the density of  $\U+\bZ$, $f_{\U+\bZ}$, satisfies
\[
  \mathbb{V}(f_{\U+\bZ},g_{\sqrt{\sigma_1^2+\sigma_2^2}})\leq 4 \, \epsilon_{\L}\left(\frac{\sigma_1\sigma_2}{\sqrt{\sigma_1^2+\sigma_2^2}}\right).
\]
\label{lemm:gauss_approx}
\end{lemma}
}

\markchange{
 Using Lemma~\ref{lemm:gauss_approx} and the fact that for any constant $a>0$, $\epsilon_{a\Lcd}(a\theta)=\epsilon_{\Lcd}(\theta)$~\cite[Remark 4]{Ling13}, we get
\begin{align}
    \mathbb{V}(f_{\V'|\y},f_{\mathbf{N}'}) &\leq  4\epsilon_{\Lcd}\left( \frac{\cmmse\sqrt{\cP\nsvar} }{\sqrt{(1-\cmmse)^2\cP+(\cmmse)^2\nsvar}} \right) & \\
                                                                               &\leq 4\epsilon_{\Lcd}\left( \frac{\cmmse\sqrt{\cP\nsvar} }{\sqrt{(1-\cmmse)^2 2\cP+(\cmmse)^2\nsvar}} \right) &\label{eq:var_1} \\
                                                                               &= 4\epsilon_{\Lcd}\left( \sqrt{\cmmse\cP} \right), &\label{eq:var_3}
\end{align}
where (\ref{eq:var_1}) is by the monotonicity of the flatness factor. Equation (\ref{eq:var_3}) is then obtained by substituting $\cmmse=2\cP/(2\cP+\nsvar)$ and simplifying. By similar arguments, we can show that 
\[
     \mathbb{V}(f_{\U'|\x},f_{\mathbf{N}})\leq 4\epsilon_{\Lcd}\left( \sqrt{\cmmse \cP} \right).
\]
Substituting in (\ref{eq:vdist_mmsenoise}) completes the proof. \qed
}

\section*{Appendix~H: Proof of Lemma~\ref{lemma:mi_multihopbound}}

\markchange{We want to show that $ \mathcal{I}(X_1,\ldots,X_N ; \Theta_{k,N})$ is arbitrarily small for all sufficiently large $d$. Using the chain rule of mutual information, and making some observations about the conditional independence of these random variables, we will show that this quantity can be written as a sum of mutual information terms between the $i$th message, $X_i$, and the vector $\mathbf{W}_k[2i+k-1]$, conditioned on everything observed by the $k$th relay in the first $2i+k-2$ phases. We will then bound each of these mutual information terms from above by a quantity of the form $\mathcal{I}(X;\mathcal{E}(X)+\mathcal{E}(Y))$, so that we can invoke the results of Section~\ref{sec:strongsecr} to conclude that each of these terms go to zero as $d\to\infty$. We would like to remark that the techniques used in this proof hold good for any coding scheme that achieves strong secrecy over the bidirectional relay, and in particular, the one in~\cite{HeYenerstrong}.}
 
Making repeated use of the chain rule of mutual information, we see that
\begin{align}
     \mathcal{I}(X_1,\ldots,X_N ; \Theta_{k,N})&  =\sum_{t=1}^{N}\mathcal{I}(X_t;\Theta_{k,N}\big\vert X_1,\ldots,X_{t-1}) &\notag\\
									  & =\sum_{t=1}^{N}\left[ \mathcal{I}(X_t;J_{k},J_{k-1}\big\vert X_1,\ldots,X_{t-1})+ \sum_{n=1}^{N} \mathcal{I}(X_t;\W_{k}[2n+k-1]\big\vert X_1,\ldots,X_{t-1},\Theta_{k,n-1}) \right] & \notag \\
									  & =\sum_{t=1}^{N} \sum_{n=1}^{N} \mathcal{I}(X_t;\W_{k}[2n+k-1]\big\vert X_1,\ldots,X_{t-1},\Theta_{k,n-1}), &\label {eq:mi}
\end{align}
where the last step follows from the fact that $ \mathcal{I}(X_t;J_{k},J_{k-1}\big\vert X_1,\ldots,X_{t-1})=0$ for $1\leq t\leq N$, since the messages and the jamming signals are independent.
%

We will first show that many terms in the above summation are zero. We will make use of the fact that if $X,Y,$ and $Z$ are random variables distributed over a finite group $\Gp$, with $X$ being uniformly distributed over $\Gp$ and independent of $(Y,Z)$, then $X\oplus Y$ is uniformly distributed over $\Gp$ and independent of $Z$. Observe that for $n\in \{ 1,2,\ldots,N \}$, $\Theta_{k,n-1}$ consists of random variables which are all functions of $X_1,\ldots,X_{n-1}$ and $J_{k-1},\ldots, J_{k+n-1}$, which are all independent of $X_{t}$ for $n\leq t$ (even when conditioned on the first $l-1< t$ messages). Therefore,

\begin{proposition}
     Let $1\leq t\leq N$, and $n,l\in \{ 1,2,\ldots, t\}$. Then, the message $X_t$ is conditionally independent of $\Theta_{k,n-1}$ given $X_1,X_2,\ldots,X_{l-1}$.
     \label{prop:multihop1}
\end{proposition}


Using a similar argument, we obtain
\begin{proposition}
     Let $1\leq t< n\leq N$.The vector $\W_{k}[2n+k-1]$ received by the $k$th relay in the $(2n+k-1)$st phase is independent of $X_1,\ldots,X_t$ and $\Theta_{k,n-1}$. 
     \label{prop:multihop2}
\end{proposition}
%

We now evaluate the terms in (\ref{eq:mi}). 
Using Proposition~\ref{prop:multihop1}, we get
\begin{equation}
	  \mathcal{I}(X_t;\W_{k}[2n+k-1]\big\vert X_1,\ldots,X_{t-1},\Theta_{k,n-1})=0
\end{equation}
for all $1\leq n<t\leq N$.
Similarly, using Proposition~\ref{prop:multihop2}, 
\begin{equation}
	  \mathcal{I}(X_t;\W_{k}[2n+k-1]\big\vert X_1,\ldots,X_{t-1},\Theta_{k,n-1})=0
\end{equation}
for all $1\leq t<n\leq N$.  Therefore, (\ref{eq:mi}) reduces to
\begin{equation}
     \mathcal{I}(X_1,\ldots,X_N ; \Theta_{k,N}) = \sum_{t=1}^{N} \mathcal{I}(X_t;\W_{k}[2t+k-1]\big\vert X_1,\ldots,X_{t-1},\Theta_{k,t-1}). 
     \label{eq:mi_multihop1}
\end{equation}

The mutual information $\mathcal{I}(X_t;\W_{k}[2t+k-1]\big\vert X_1,\ldots,X_{t-1},\Theta_{k,t-1})$ can be written in terms of conditional entropies as
\begin{align}
     \mathcal{I}(X_t;\W_{k}[2t+k-1]\big\vert X_1,\ldots,X_{t-1},\Theta_{k,t-1})&=\mathcal{H}(\W_k[2t+k+1]\big\vert X_1,\ldots,X_{t-1},\Theta_{k,t-1}) & \notag \\
                                                                                                                     &  \qquad -\mathcal{H}(\W_k[2t+k+1]\big\vert X_1,\ldots,X_{t},\Theta_{k,t-1}). & \label{eq:AH_2}
\end{align}
Let us evaluate each of the terms on the right hand side.  Consider the second term, 
\begin{align}
     \mathcal{H}(\W_k[2t+k+1]\big\vert X_1,\ldots,X_{t},\Theta_{k,t-1}) &\geq \mathcal{H}(\W_k[2t+k+1]\big\vert X_1,\ldots,X_{t},\oplus_{p=1}^{t}X_p\oplus J_{k+t-1},\Theta_{k,t-1}) &\notag \\
										       &= \mathcal{H}(\W_k[2t+k+1]\big\vert \oplus_{p=1}^{t}X_p\oplus J_{k+t-1}). &\label{eq:mi_multihop2}
\end{align}
The first step is true because conditioning reduces entropy.
The second step requires more justification. 
Given $\oplus_{p=1}^{t}X_p\oplus J_{k+t-1}$, the term $\V_{k-1}[2t+k-1]$ is independent of $X_{1},\ldots,X_t,\Theta_{k,t-1}$. The jamming signal, $J_{k+t}$ is independent of $\Theta_{k,t-1}$,
all the first $t$ messages, and $\oplus_{p=1}^{t}X_p\oplus J_{k+t-1}$. Therefore,  $\V_{k+1}[2t+k-1]$, and hence, $\W_k[2t+k-1]$ is also independent of $\Theta_{k,t-1}$, the first $t$ messages and $\oplus_{p=1}^{t}X_p\oplus J_{k+t-1}$, thus justifying (\ref{eq:mi_multihop2}). 
Now, define $X:=\oplus_{p=1}^{t}X_p\oplus J_{k+t-1}$, and $Y:=\oplus_{p=1}^{t-1}X_p\oplus J_{k+t}$. 
Then, we have,
\[
     \mathcal{H}(\W_k[2t+k+1]\big\vert X_1,\ldots,X_{t},\Theta_{k,t-1})\geq \mathcal{H}(\mathcal{E}(X)+\mathcal{E}(Y)\big\vert X).
\]
From Proposition~\ref{prop:multihop2}, the first term of (\ref{eq:AH_2}), $\mathcal{H}(\W_k[2t+k+1]\big\vert X_1,\ldots,X_{t-1},\Theta_{k,t-1}) =\mathcal{H}\left( \mathcal{E}(X)+\mathcal{E}(Y) \right)$. 
Therefore, $ \mathcal{I}(X_t;\W_{k}[2t+k-1]\big\vert X_1,\ldots,X_{t-1},\Theta_{k,t-1})$ is bounded above by $\mathcal{I}(X;\mathcal{E}(X)+\mathcal{E}(Y))$, and the random variables $X$ and $Y$ are independent and uniformly distributed over $\Gpd$.
The lemma now follows by using Theorem~\ref{thm:vardistbound} and Lemma~\ref{lemma:csiszar} to bound this quantity. 
\qed

\section*{Acknowledgement} We would like to thank the anonymous reviewers for carefully reading our manuscript and suggesting several improvements to the presentation. In particular, we thank the reviewer who suggested a means of avoiding the use of a random dither to achieve the rate in Section~\ref{sec:strongsecr}. We are also grateful to Manjunath Krishnapur for providing Proposition~\ref{prop:exp_decay} and its proof.

\end{document}